\newcommand{\bbN}{\mathbb{N}}
\newcommand{\bbR}{\mathbb{R}}
\newcommand{\bbZ}{\mathbb{Z}}
\newcommand{\mcD}{\mathcal{D}}
\newcommand{\mcF}{\mathcal{F}}
\newcommand{\mcK}{\mathcal{K}}
\newcommand{\mcL}{\mathcal{L}}
\newcommand{\mcP}{\mathcal{P}}
\newcommand{\mcS}{\mathcal{S}}
\newcommand{\mcT}{\mathcal{T}}
\newcommand{\inN}{\in\bbN}
\newcommand{\set}[1]{\left\{#1\right\}}
\newcommand{\cross}{\times}
\newcommand{\mb}{\mathbb{P}}
\newcommand{\me}{\mathbb{E}}
\newcommand{\mbb}[1]{\mb\left(#1\right)}
\newcommand{\meb}[1]{\me\left(#1\right)}
\newcommand{\indc}{\mathbbm{1}}
\newcommand{\given}{\mid}
\DeclareMathOperator{\distributed}{\thicksim}
\DeclareMathOperator{\Exp}{Exp}
\DeclareMathOperator{\Pois}{Pois}
\newtheorem{claim}{Claim}
\newcommand{\eps}{\varepsilon}
\newcommand{\dta}{\delta}
\newcommand{\closure}[1]{\overline{#1}}
\DeclareMathOperator{\balloperator}{B}
\newcommand{\ball}[2]{\balloperator_{#1}\kern-0.5ex\br{#2}}
\newcommand{\balld}[2]{\balloperator_{#1}\kern-1.0ex'\kern-0.5ex\br{#2}}
\newcommand{\cball}[2]{\closure{\balloperator}_{#1}\kern-0.5ex\br{#2}}
\newcommand{\cballd}[2]{\closure{\balloperator}_{#1}\kern-1.0ex'\kern-0.5ex\br{#2}}
\newcommand{\abs}[1]{\left\lvert#1\right\rvert}
\newcommand{\norm}[1]{\left\lVert#1\right\rVert}
\newcommand{\ceil}[1]{\left\lceil#1\right\rceil}
\newcommand{\floor}[1]{\left\lfloor#1\right\rfloor}
\newcommand{\br}[1]{\left(#1\right)}
\newcommand{\sbr}[1]{\left[#1\right]}
\DeclareMathOperator{\sech}{sech}
\newcommand{\piecewise}[1]{\left\{\begin{array}{ll}#1\end{array}\right.}
\title[Longer Lists Yield Better Matchings]{Longer Lists Yield Better Matchings}
\author{Yuri Faenza and Aapeli Vuorinen, IEOR, Columbia University}
\begin{abstract}
Many centralized mechanisms for two-sided matching markets that enjoy strong theoretical properties assume that the planner solicits full information on the preferences of each participating agent. In particular, they expect that participants compile and communicate their complete preference lists over agents from the other side of the market. However, real-world markets are often very large and agents cannot always be expected to even produce a ranking of all options on the other side. It is therefore important to understand the impact of incomplete or truncated lists on the quality of the resultant matching.

In this paper, we focus on the Serial Dictatorship mechanism in a model where each agent of the proposing side (students) has a random preference list of length $d$, sampled independently and uniformly at random from $n$ schools, each of which has one seat. Our main result shows that if the students primarily care about being matched to any school of their list (as opposed to ending up unmatched), then all students in position $i\leq n$ will prefer markets with longer lists, when $n$ is large enough. Schools on the other hand will always prefer longer lists in our model. We moreover investigate the impact of $d$ on the rank of the school that a student gets matched to.

Our main result suggests that markets that are well-approximated by our hypothesis and where the demand of schools does not exceed supply should be designed with preference lists as long as reasonable, since longer lists would favor all agents.
\end{abstract}
\begin{document}

\begin{titlepage}

\maketitle
\setcounter{tocdepth}{2} 
\tableofcontents

\end{titlepage}

\section{Introduction}\label{sec:intro}

One of the central objectives of theoretical research in matching markets is to design mechanisms and rigorously establish whether they, along with their outcomes, satisfy desirable properties. Depending on the specific market, the focus may be on concepts such as strategy-proofness, one- or two-sided optimality, or equilibria. These and other properties are often viewed as justifications for applying such mechanisms in practice, under the expectation that the valuable theoretical features will persist in the real world. However, in many applications, the assumptions that are necessary for such properties to hold may not be satisfied. For example, while in a static two-sided market one can compute a stable assignment efficiently, in real-world markets arising in public school allocation, a substantial number of agents often enter or leave the market after the first assignment has been decided (see for example~\cite{Irene}). The goal of the central planner becomes therefore less well-defined and, depending on the mathematical formalization, may lead to computationally hard problems, see for example~\cite{bampis2023online,faenza2024two,faenza2025minimum,miyazaki2019jointly}.

In this paper, we focus on one of the main restrictions encountered in real-world markets: imposing a limit on the maximum length of preference lists of agents. Such limits are common, for instance in school markets. While Gale and Shapley's Deferred Acceptance algorithm~\cite{gale1962college} assumes that students are allowed to list all schools they deem acceptable in their preference list, education departments traditionally impose a strict maximum limit on the number of schools an applicant may list\footnote{These include programs in Spain and Hungary~\cite{calsamiglia2010constrained}, Australia~\cite{artemov2017strategic}, as well as some cities in the United States.}. This restriction is motivated by practical concerns such as the additional burden that acquiring more information poses to schools and students alike. However, the effects of restricting the length of preference lists are significant on the properties of the mechanism and of its outcome. For instance, while the original Deferred Acceptance mechanism is strategy-proof for the proposing side, its implementation with bounded-size lists leaves room for strategic behavior of students. This is not merely a theoretical concern, but has practical implications~\cite{calsamiglia2010constrained} which are well known to central planners. For instance, until the school year 2024/2025, the Department of Education of New York City limited\footnote{This rule was changed for the school year 2025/2026, and applicants can now list as many schools as they want.} the number of schools in student preference lists to 12 and suggested that the applicants be strategic by reserving some of the slots for ``safe schools'' (schools where the student has high priority, and as a result, a high probability of being accepted)~\cite{NYC_DOE}. 

When deciding the length of preference lists in a matching market, it is therefore essential to carefully balance practical concerns with the potential efficiency losses that can result from limiting information exchange. A correct estimation of the efficiency loss induced by short(er) lists can therefore guide a market designer in striking the right trade-off.

\subsection{Our Contributions}\label{sec:our-contributions} In this paper, we focus on the impact that the length of preference lists has on the quality of the output matching in the Serial Dictatorship mechanism for two-sided matching markets. 
In our theoretical model, we assume that the proposing side (students) have preference lists drawn uniformly at random and that the disposing side (schools) have only one seat (we later discuss how these assumptions can be relaxed in our computational experiments). 

The concept of ``quality of a matching'' can be defined in multiple ways; we mostly focus on the probability that a student will be matched to any school of their preference list as opposed to remaining unmatched. This is justified by the fact that if a student cannot be matched to any school of their preference list, they completely lose control on the school they are matched to (in many markets, such students are assigned a remaining seat arbitrarily at the whim of the central planner). This seems by far the worse outcome for a student. We moreover study the probability that a student is matched to their top $k$ choices, a popular measure of quality of a matching\footnote{For instance, the National Resident Matching Program, while using the Deferred Acceptance algorithm, explicitly reports the number of applicants matched to their first choice~\cite{NRMP}, while the Boston Mechanism directly aims at maximizing the number of students matched to their first choice.}. 

\smallskip

\noindent {\bf 1.~Comparative analysis of balanced markets.} In our model, the probability $p_i$ that a student $i$ is matched to any school is determined by two opposite drivers. On one hand, the longer the preference lists, the more schools will have been matched to students with priority higher than $i$, decreasing $p_i$. On the other hand, longer preference lists imply that a student has a larger list of acceptable schools, leading to an increase in $p_i$.
Our main result is that under the Serial Dictatorship mechanism, when the number of schools $n$ is large enough, the latter driver dominates for all students in positions $i\leq n$: that is, $p_i$ increases with the length of lists. This is our Theorem~\ref{thm:main-discrete}. 

In particular, if the demand of schools does not exceed supply, all students will prefer longer lists. It is not hard to see that in such a market, all schools are matched with higher probability when preference lists are longer (see Lemma~\ref{lem:school-love}). Therefore our results suggest that markets that are well-approximated by these hypotheses should be designed with preference lists as long as reasonable, since longer lists favor all agents. 

\noindent {\bf 2.~Comparative analysis of general markets.} Interestingly, it is not true in general that longer lists will increase the probability of being matched to some school for every student. Indeed, for $n$ large enough, all students in position $i\geq \ceil{1.22n}$ will be matched with a higher probability when the length of preference lists is $1$ (students randomly choose one school to apply to), as opposed to when the length of preference lists is $2$, see Section~\ref{sec:continuous-to-discrete} and Figure~\ref{fig:d1_vs_d2}. More generally, for all $d<\ell$ and $n$ large enough, there exists some $i>n$ such that student $i$ is matched with higher probability when preference lists are of length $d$, as opposed to when preference lists are of length $\ell$, see Lemma~\ref{lem:crossin-discrete}.

\smallskip 

\noindent {\bf 3.~Absolute bounds on the probability of being matched.}
While the previous results compare the probabilities of a given student getting matched to any school between markets with different list length, they do not give us any information on the absolute probability. We show in Theorem~\ref{thm:bound-discrete} that as the length of preference lists increases, the probability that the student in position $n$ is matched (recall that $n$ is the number of schools) quickly approaches $1/2$. This is therefore also a lower bound (resp., an upper bound) to the probability of a student $i\leq n$ (resp., $i\geq n$) of being matched. In Lemma~\ref{lem:worst-case-rank}, we further prove bounds on the probability that a student in position $i\leq n$ gets matched to one of their top-$k$ choices as a function of the length of the market.

\smallskip

\noindent {\bf 4.~Numerical results.} In Section~\ref{sec:numerical}, we numerically study two extensions of the model.
We first focus on the case when the preference lists are not sampled uniformly at random, but rather, schools are sampled i.i.d. from one of $5$ distributions that place different weights on different schools. With the exception of one such distribution (a ``degenerate'' one), numerical experiments confirm our main result: that every student in position $i\leq n$ will be matched with higher probability when lists are longer. We then focus on the case when schools have $q>1$ available seats each, and confirm via simulations, that in this case too, students in balanced markets always prefer longer lists.

\subsection{Organization of the Paper}

We conclude this introductory section with further pointers to the literature. Section~\ref{sec:models} is devoted to introducing the main models and ideas, and formally stating many results (including those discussed above), without proofs. In particular, in Section~\ref{sec:discrete}, we introduce the main (discrete) model that we investigate in this paper. To analyze its properties for $n$ large enough, it will be useful to consider a continuous model that can be interpreted as a limit of the discrete model. This continuous model is introduced in Section~\ref{sec:continuous}. In Section~\ref{sec:continuous-to-discrete} we state the connections between the two models, as well as some relevant properties of the continuous model. We moreover discuss some implications of our results for Random Serial Dictatorship and extend our models to the case of schools having multiple seats in Section~\ref{sec:implications-extensions}. Last, we discuss the relevance of our model and our hypothesis in Section~\ref{sec:discussion}.

Proofs of results stated in Section~\ref{sec:models} appear in Section~\ref{sec:proofs}: we start with a limit theorem rigorously establishing the connection between the continuous and the discrete models (Section~\ref{sec:pf-cts-to-discrete}), followed by proofs of properties of the continuous market (Section~\ref{sec:pf-main-result}). Results from Section~\ref{sec:pf-cts-to-discrete} and Section~\ref{sec:pf-main-result} will allow to deduce properties of the discrete market (Section~\ref{sec:proof:discrete-market-probability} and Section~\ref{sec:proofs-impact-on-rank}). 

Numerical experiments testing the validity of our theoretical results when some assumptions are relaxed appear in Section~\ref{sec:numerical}. We conclude in Section~\ref{sec:conclusions}.

\subsection{Related Literature}

Random models for matching markets have a long and rich history. The model that is closest to ours is the one by~\cite{immorlica2015incentives}, where the authors consider random preference lists and assume that one of the two sides has preference lists of bounded length, providing an investigation of the model for large markets. In particular, \cite{immorlica2015incentives} show that with high probability and market size large enough, the core (i.e., the set of stable matchings) has small size. \cite{kanoria2021matching} study a similar model, investigating the change in the quality of the matchings in the core as a function of the preference list length. Similar questions have also been answered in the case when preference lists are complete and markets are balanced~\cite{pittel1989average,pittel1992likely} or unbalanced~\cite{ashlagi2017unbalanced,pittel2019likely}. 

The effect of short lists on the behavior of agents and on the quality of the output matching has been the subject of experimental studies~\cite{artemov2017strategic,bo2020iterative,calsamiglia2010constrained}. While specifics differ, they all share the common message that a loss of efficiency is experienced when agents are asked to report preference lists of bounded length, as opposed to preference lists of arbitrary length. 

Many theoretical studies have been devoted to Serial Dictatorship and Random Serial Dictatorship, with goals different from ours -- see, for instance,~\cite{abdulkadirouglu1998random,abdulkadirouglu2003school,bade2020random,bampis2023online,bogomolnaia2001new} and the references within the survey article~\cite{ehlers2021normative}.

To investigate our discrete model, we introduce a continuous model that can be thought of as a limit of the former. Continuous or semi-continuous models for two-sided markets have recently received quite some attention, since they often allow for tighter analysis, see, e.g.,~\cite{arnosti2022continuum,arnosti2023lottery,azevedo2016supply}. In particular, the work in~\cite{arnosti2023lottery} is closely aligned in its methods with our work. They similarly introduce a description of the limit of a discrete market via a solution to an initial value problem; then utilize Markov chain methods via an analogue of our Theorem~\ref{thm:xts} to show convergence of certain quantities relating to that market. Their differential equation is very general and allows for arbitrary distributions and varying lengths of preferences across students. The price one pays for such generality is tractability, and they do not produce expressions ripe for analytical solutions.
In contrast, our main technical contribution is to carefully construct an initial value problem for our specific case but that produces a simple interpretation connected to the preference of students and can be readily analyzed to yield insights for the original discrete model.

The outcome of the Serial Dictatorship mechanism in our model can also be understood as the application of a randomized version of the greedy algorithm on an online bipartite matching problem where, as usual, nodes from one side of the graph are given, while the others arrive one at the time, have degree exactly $d$, and are matched to one of their currently unmatched neighbors uniformly at random (or discarded if no such neighbor exists). Starting from the seminal work by~\cite{karp1990optimal}, many versions of online matching problems in bipartite graphs have been studied (see~\cite{devanur2022online,huang2023applications} for recent surveys). To the best of our knowledge, such models mostly focus on the competitive ratio of global objective functions (such as the number of matches or some profit or cost function associated to the matching), while our node-by-node analysis of the probability of being matched appears to be new. 

\subsection{Notation}

We denote by $\bbN=\set{1,2,3,\dots}$ and, for $k \in \bbN$,  $[k]=\set{1,2,\dots,k}$. We use $n\inN$ for the number of schools, $d\inN$ and $\ell\inN$ for lengths of preference lists, and $m\inN$ as the number of students (in cases where we do not assume infinite students). Superscripts denote properties that are fixed for the market (except in the continuous realm where we place the $d$ of $x_d(t)$ in the subscript for convenience), and subscripts are used for running indices. We use $i\inN$ for students, $j\in[n]$ for schools. Random variables are denoted by uppercase letters.

\section{Models and Results}\label{sec:models}

\subsection{A Discrete Random Market Model}\label{sec:discrete}

\paragraph{Model description.} Consider a random market consisting of $n \in \bbN$ schools and infinitely many students indexed by $i=1,2,3,\dots$, where each school has exactly one seat. Each student has a strict \emph{preference list}, consisting of $d\leq n$ schools, chosen uniformly at random from the set of $n$ schools. We call $d$ the \emph{preference list length}. If $a$ occurs before $b$ in this preference list, the student strictly prefers being matched to $a$ rather than $b$. If a school $c$ does not appear in a student's preference list, the student finds this school unacceptable and prefers remaining unmatched to being matched to that school. A school that appears in a student's preference list is \emph{acceptable} to the student.

Students are matched to schools via a \emph{Serial Dictatorship} mechanism as follows. Students are listed in some deterministic order; and each student in their turn picks their most preferred school that has an available seat. If the student finds none of the remaining schools acceptable, then that student goes unmatched. For a student $i\in[m]$, we denote by $K^{n,d}_i$ the random variable denoting the position of the school they get matched to within their preference list, setting $K^{n,d}_i=\infty$ if the student is unmatched (so $K^{n,d}_i\in\set{\infty,1,\dots,d}$). We define the random vector $\mcK=\set{K^{n,d}_1,K^{n,d}_2,K^{n,d}_3,\dots}$ as the realization of one run of the mechanism.

A student $i$ wishes to minimize $K^{n,d}_i$. That is, they want to be ranked to the most preferred school in their list. On the other hand, not getting matched at all ($K^{n,d}_i=\infty$) is by far worse than getting matched to a school that the student finds acceptable but ranks lower. Define $M^{n,d}_i=\indc_{\set{K^{n,d}_i<\infty}}$ to be the indicator random variable for the event that student $i$ gets matched at all. The probability that they get matched to any school is then $\mbb{M^{n,d}_i=1}$.

For a given market with a fixed number of schools, only $d$ is chosen by a centralized planner. It is therefore natural to ask what impact varying $d$ has on the distributional properties of $\mcK$, and on the distribution of individual students' ranks ($K^{n,d}_i$), as well as their probabilities of getting matched to any school, $\mbb{M^{n,d}_i=1}$.

\paragraph{Main results.} As our main result\footnote{All proofs of the results from this subsection appear in Section~\ref{sec:proof:discrete-market-probability} and Section~\ref{sec:proofs-impact-on-rank}.},  we show that for a number $n$ of schools large enough, every student in position $i\leq n$ will be matched to a school with higher probability in markets with longer lists. Formally, we have the following result.

\begin{restatable}{theorem}{thmMainDiscrete}\label{thm:main-discrete}
Let $d,\ell\inN$ with $\ell\geq d$. For every $n$ large enough and $i\leq n$, we have
\begin{align}
\mbb{M_i^{n,\ell}=1}\geq\mbb{M_i^{n,d}=1}.\label{eq:main-probability}
\end{align}
\end{restatable}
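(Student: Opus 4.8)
The plan is to reduce \eqref{eq:main-probability} to an analytic inequality in a continuous model and then transfer it back. Fix a list length $d$ and let $x_d(t)$ denote the limiting fraction of seats still available once a $t$-fraction of students has been processed, so that student $i$ sits at time $t=i/n$ and the regime $i\le n$ is exactly $t\in[0,1]$. As $n\to\infty$, a student arriving at time $t$ draws $d$ uniform schools and matches precisely when one of them is free, with probability $1-(1-x_d(t))^d$; since each match removes one seat, $x_d$ should solve the initial value problem $x_d'(t)=-\bigl(1-(1-x_d(t))^d\bigr)$, $x_d(0)=1$. Writing $q_d(t):=1-(1-x_d(t))^d=-x_d'(t)$ for the limiting probability of being matched at time $t$, I would invoke the limit theorem (Theorem~\ref{thm:xts}) to get $\mbb{M^{n,d}_{i}=1}\to q_d(i/n)$ with an explicit rate, uniformly on time windows bounded away from $0$.

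\paragraph*{Reduction.} Modulo this convergence, \eqref{eq:main-probability} becomes the purely analytic claim
\[
q_\ell(t)\ \ge\ q_d(t)\qquad\text{for all }t\in[0,1],\ \ell\ge d,
\]
equivalently $(1-x_\ell(t))^\ell\le(1-x_d(t))^d$ on $[0,1]$. The endpoint $t=1$ is critical and explains the hypothesis $i\le n$: as $d\to\infty$ every student matches while seats remain and the $n$ seats run out exactly at $t=1$, so $q_d\to\indc_{[0,1)}$ pointwise, while at the threshold $q_d(1)\to\tfrac12$, in agreement with Theorem~\ref{thm:bound-discrete}. Hence on $[0,1]$ the curves $q_d$ increase in $d$ towards the constant $1$, whereas their ordering must reverse for some $t>1$ (Lemma~\ref{lem:crossin-discrete}); any proof must therefore use $t\le1$ essentially.

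\paragraph*{The analytic core.} To establish monotonicity of $q_d$ in $d$ on $[0,1]$ I would reparametrise by the matching level. Inverting the decreasing map $t\mapsto q_d(t)$ gives the hitting time $t_d(q)=\int_0^{(1-q)^{1/d}}\frac{dv}{1-v^d}$, and the substitution $v^d=(1-q)s$ turns this into the convergent series $t_d(q)=(1-q)^{1/d}\sum_{k\ge0}\frac{1/d}{\,k+1/d\,}(1-q)^k$, which recovers $t_1(q)=-\log q$ and satisfies $t_d(q)\to1$ as $d\to\infty$. Because $q_d$ is decreasing in $t$, the desired inequality follows once one shows that $t_d(q)$ is increasing in $d$ on the range where $t_d(q)\le1$. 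I expect the sign analysis of this series (or, equivalently, of the parameter-sensitivity $u:=\partial_d x_d$, which solves a linear ODE driven by the nonpositive term $(1-x_d)^d\log(1-x_d)$) to be the conceptual heart of the argument, precisely because the margin closes as $t\to1$ just before the ordering reverses.

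\paragraph*{Transfer, and the main obstacle.} On any window $[\eps,1]$ the gap $q_\ell-q_d$ is continuous and strictly positive, hence $\ge\delta(\eps)>0$; combined with the uniform rate of Theorem~\ref{thm:xts} this yields \eqref{eq:main-probability} for all $\eps n\le i\le n$ once $n$ is large. The delicate regime—and what I expect to be the main obstacle for the discrete deduction—is small $i$ (roughly $i\lesssim n^{1-1/d}$), where the continuous gap decays like $(i/n)^d$ and drops below the approximation error, so the limit theorem alone does not decide the sign. Here I would argue directly from the occupancy chain: conditioned on $j$ filled seats the failure probability is $\binom{j}{d}/\binom{n}{d}$, which for fixed $j<n$ is strictly smaller for $\ell$ than for $d$; the subtlety is that the longer-list market also fills seats faster, so one must balance ``fewer failures per filled seat'' against ``more filled seats,'' e.g.\ by bounding the $\ell$-failure by $\binom{i-1}{\ell}/\binom{n}{\ell}$ and matching it against a lower bound of the correct order $(i/n)^d$ for the $d$-failure.
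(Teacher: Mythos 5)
Your overall strategy is the same as the paper's, modulo the cosmetic change of tracking free seats rather than taken seats: the paper introduces the IVP $x_d(0)=0$, $x_d'(t)=1-x_d(t)^d$ (your $x$ is $1$ minus theirs, and your $q_d(t)$ is exactly the paper's $x_d'(t)$), reduces the theorem to the continuous inequality $x_\ell'(t)\geq x_d'(t)$ on $[0,1]$ (Theorem~\ref{thm:bound-cts}), and transfers it back through the limit results (Theorem~\ref{thm:xts}, Lemma~\ref{lem:prob-to-xprime}). Your hitting-time reformulation $t_d(q)=\int_0^{(1-q)^{1/d}}\frac{dv}{1-v^d}$ is precisely the implicit representation $t=\int_0^{x_d(t)}\frac{du}{1-u^d}$ on which the paper's proof of Theorem~\ref{thm:bound-cts} rests, and your series expansion is the one used in Lemma~\ref{lem:xd-bounds}. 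The genuine gap is that you stop exactly where the work starts: you say you \emph{expect} the sign analysis of this series (equivalently, of $\partial_d x_d$) to be the conceptual heart of the argument, but you never carry it out, and it is by far the hardest part of the paper. After implicit differentiation the paper reduces the claim to showing $\int_0^{x_d(t)}\frac{1+\log u}{1-u^d}\,du\leq 0$ for $t\leq 1$ (Theorem~\ref{thm:ig}), which in turn needs the sharp endpoint bounds $\br{\frac{2d+1}{4d+1}}^{1/d}\leq x_d(1)\leq\br{\frac{d+1}{2d+1}}^{1/d}$ of Lemma~\ref{lem:xd-bounds} plus a long chain of delicate inequalities in Appendix~\ref{appx:lemmas_and_proofs}. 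Since the ordering of the $x_d'$ provably reverses just past $t=1$ (Lemma~\ref{lem:crossin-discrete}), the inequality is tight at the endpoint and no soft or qualitative argument can close this step; as written, your proposal is a correct plan whose central lemma is missing.

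On the transfer step, your worry about small $i$ is well taken, and is in fact a point on which you are more careful than the paper, whose entire proof of Theorem~\ref{thm:main-discrete} is that it ``follows directly'' from Theorem~\ref{thm:bound-cts} and Lemma~\ref{lem:prob-to-xprime}. Lemma~\ref{lem:prob-to-xprime} is pointwise in $t$, while the theorem quantifies over all $i\leq n$ simultaneously; since the continuous gap $x_d(t)^d-x_\ell(t)^\ell$ vanishes as $t\to 0$, for $i$ small relative to $n$ the gap drops below any approximation error and the limit theorem alone cannot decide the sign there, exactly as you say. Your proposed repair -- bounding the $\ell$-failure probability by $\binom{i-1}{\ell}/\binom{n}{\ell}$ (valid since $T_i^{n,\ell}\leq i-1$ and, by \eqref{eq:discrete-match-prob}, $\mbb{M_i^{n,\ell}=0}=\meb{\binom{T_i^{n,\ell}}{\ell}}/\binom{n}{\ell}$) against a lower bound of order $(i/n)^d$ for the $d$-failure -- is the right kind of fix, but it too is only sketched: the lower bound requires controlling $\meb{\binom{T_i^{n,d}}{d}}$ from below, and the endpoint $i=n$ needs strict (not just weak) positivity of the continuous gap at $t=1$ before your compactness argument on $[\eps,1]$ yields a uniform $\delta(\eps)>0$. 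In sum: same route as the paper, plus a perceptive observation about a step the paper treats too casually, but both the analytic core and the discrete patch remain unproven.
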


As we show in the following lemma, for each school, the probability of being matched increases with the length of preference lists. For a school $j\in [n]$, define $H^{n,d}_j(i)$ to be the indicator random variable for the event that school $j$ gets matched to some student just before student $i$ has had their turn. Then the following holds. 

\begin{restatable}{lemma}{thmLemmaSchoolsLike}\label{lem:school-love}
Let $d,\ell,n\inN$ with $d\leq \ell\leq n$. For every $j \in [n]$ and every $i=1,2,\dots$, we have: 
\begin{align*}
\mbb{H^{n,\ell}_j(i)=1}\geq\mbb{H^{n,d}_j(i)=1}.
\end{align*}
\end{restatable}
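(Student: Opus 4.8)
\emph{Overall approach.} The plan is to prove the inequality not just in distribution but pointwise, by constructing a single probability space on which the $d$-market and the $\ell$-market are coupled through truncation of preference lists, in such a way that the set of occupied schools in the $\ell$-market always contains the set of occupied schools in the $d$-market. Because the statement must hold for every $n\geq\ell$ and every $i$ (no large-$n$ asymptotics, unlike Theorem~\ref{thm:main-discrete}), I would make the argument entirely self-contained and avoid invoking the main theorem; the coupling gives exactly this.

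\emph{Construction of the coupling.} For each student $k=1,2,\dots$ I would draw a single uniformly random ordered tuple $(S_{k,1},\dots,S_{k,\ell})$ of $\ell$ distinct schools, independently across students. In the $\ell$-market student $k$ uses this full list, while in the $d$-market student $k$ uses its truncation $(S_{k,1},\dots,S_{k,d})$. Since the first $d$ coordinates of a uniform ordered $\ell$-tuple of distinct schools are themselves a uniform ordered $d$-tuple of distinct schools, both coordinate processes carry exactly the law prescribed by the model, so it suffices to prove the inequality on this common space.

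\emph{Monotone domination.} Let $F^d_k$ and $F^\ell_k$ denote the sets of occupied schools after students $1,\dots,k$ have acted in the $d$- and $\ell$-markets, respectively, so that $H^{n,d}_j(i)=1$ iff $j\in F^d_{i-1}$. I claim $F^d_k\subseteq F^\ell_k$ for all $k$, by induction, the base case $F^d_0=F^\ell_0=\emptyset$ being trivial. Assume $F^d_{k-1}\subseteq F^\ell_{k-1}$. If student $k$ is unmatched in the $d$-market then $F^d_k=F^d_{k-1}\subseteq F^\ell_{k-1}\subseteq F^\ell_k$. Otherwise student $k$ takes in the $d$-market the school $a=S_{k,p}$, where $p\leq d$ is the least index with $S_{k,p}\notin F^d_{k-1}$. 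If $a\in F^\ell_{k-1}$ we are done, as $a\in F^\ell_k$. If instead $a\notin F^\ell_{k-1}$, then each earlier school $S_{k,q}$ with $q<p$ satisfies $S_{k,q}\in F^d_{k-1}\subseteq F^\ell_{k-1}$, so $S_{k,p}=a$ is also the least-indexed school available to student $k$ in the $\ell$-market; hence student $k$ grabs the same school $a$ there and $a\in F^\ell_k$. In every case $F^d_k=F^d_{k-1}\cup\{a\}\subseteq F^\ell_k$, closing the induction.

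\emph{Conclusion and main obstacle.} Applying the domination at $k=i-1$ gives $F^d_{i-1}\subseteq F^\ell_{i-1}$ on the coupled space, whence $\{H^{n,d}_j(i)=1\}=\{j\in F^d_{i-1}\}\subseteq\{j\in F^\ell_{i-1}\}=\{H^{n,\ell}_j(i)=1\}$, and taking probabilities yields the claim. The only delicate point is the inductive case $a\notin F^\ell_{k-1}$: one must verify that the school grabbed in the shorter market is grabbed by the very same student in the longer market. This is precisely where both ingredients of the coupling are used together—the truncation structure guarantees the two lists agree on their first $p$ entries, and the inductive containment guarantees those earlier, more-preferred schools are already occupied in the $\ell$-market, forcing the same choice.
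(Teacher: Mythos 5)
Your proof is correct, but it takes a genuinely different route from the paper's. The paper first uses exchangeability of the $n$ schools to write $\mbb{H^{n,d}_j(i)=1}=\frac{1}{n}\meb{T^{n,d}_i}$, reducing the per-school statement to one about the expected number of students matched before position $i$; it then deduces $\meb{T^{n,\ell}_i}\geq\meb{T^{n,d}_i}$ by observing that the conditional match probability in~\eqref{eq:discrete-match-prob} is monotone increasing in the list length (and invoking stochastic dominance of the resulting counting chains). Your truncation coupling instead establishes the stronger pathwise statement $F^{d}_{i-1}\subseteq F^{\ell}_{i-1}$ on a single probability space, so that $H^{n,d}_j(i)\leq H^{n,\ell}_j(i)$ holds almost surely, not merely in expectation. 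This buys three things: it makes fully explicit the monotone coupling that the paper's stochastic-dominance step leaves implicit (the paper's one-line deduction of dominance for sums of \emph{dependent} Bernoulli variables really does require such a coupling to be rigorous); it does not use symmetry across schools, so the same argument survives non-uniform (i.i.d.-across-students) list sampling, where the paper's first step breaks down; and it yields the aggregate dominance $T^{n,d}_i\leq T^{n,\ell}_i$ as an immediate corollary rather than as an intermediate claim. What the paper's argument buys in exchange is brevity and reuse of the already-derived formula~\eqref{eq:discrete-match-prob}. Your inductive case analysis is airtight, including the one delicate sub-case ($a\notin F^{\ell}_{k-1}$), where the agreement of the two lists on their first $p$ entries together with the inductive containment forces the same student to grab the same school in the longer market.
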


Theorem~\ref{thm:main-discrete} and Lemma~\ref{lem:school-love} imply that if the number of students does not exceed the number of schools, for $n$ large enough all agents will be matched with higher probability in the market where preference lists are longer. Interestingly, Theorem~\ref{thm:main-discrete} does not necessarily hold for $i$ larger than $n$: as we discuss in Section~\ref{sec:continuous-to-discrete}, for any $n$ large enough, students in position $i=\lceil 1.22n \rceil$ and beyond have a higher probability of being matched when preference lists are of length $1$ as opposed to length $2$. A similar phenomenon happens also for longer preference lists, as formalized by the next lemma.

\begin{restatable}{lemma}{thmCrossingDiscrete}\label{lem:crossin-discrete}
Let $d,\ell\inN$ with $\ell\geq d$. For every $n$ large enough, there exists $i>n$, such that
\begin{align*}
\mbb{M_i^{n,\ell}=1}\leq\mbb{M_i^{n,d}=1}.
\end{align*}
\end{restatable}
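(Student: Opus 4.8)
The plan is to carry out the comparison in the continuous model and then transfer it back via the limit theorem. Let $x_d(t)$ denote the limiting probability that the student in normalized position $t=i/n$ is matched when preference lists have length $d$, i.e. the solution of the initial value problem of Section~\ref{sec:continuous} with $x_d(0)=1$. The engine of the argument is a conservation law that does not depend on $d$: writing $s_d(t)$ for the limiting fraction of schools still available at time $t$, one has $x_d=1-\br{1-s_d}^d$ and $s_d'=-x_d$, so that
\[
\int_0^\infty x_d(t)\,dt \;=\; s_d(0)-s_d(\infty)\;=\;1-0\;=\;1 .
\]
In words, every one of the $n$ schools is eventually matched no matter how long the lists are, so the total ``mass'' of matching is the same for every $d$.

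Next I would record the monotonicity on the initial segment. Applying Theorem~\ref{thm:main-discrete} at position $\ceil{tn}\le n$ and letting $n\to\infty$ through the limit theorem (Theorem~\ref{thm:xts}) gives $x_\ell(t)\ge x_d(t)$ for every $t\in(0,1]$ whenever $\ell\ge d$. Combining this with the conservation law yields
\[
\int_1^\infty \br{x_\ell(t)-x_d(t)}\,dt \;=\; \int_0^1 \br{x_d(t)-x_\ell(t)}\,dt \;\le\; 0 .
\]
To make this strict I would use that the two solutions separate immediately: a short expansion of the initial value problem near $t=0$ gives $x_d(t)=1-t^d+o(t^d)$, so $x_\ell(t)>x_d(t)$ on a whole interval $(0,\eps)$ for $\ell>d$ (equivalently, the two autonomous ODEs with common value $x(0)=1$ cannot agree on any such interval), whence $\int_0^1 x_\ell>\int_0^1 x_d$. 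The displayed integral is therefore strictly negative, and by continuity there exists $t_0>1$ with $x_\ell(t_0)<x_d(t_0)$.

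Finally I would push this strict continuous inequality back to the discrete market. By the limit theorem applied at the fixed point $t_0$, both $\mbb{M^{n,\ell}_{\ceil{t_0 n}}=1}\to x_\ell(t_0)$ and $\mbb{M^{n,d}_{\ceil{t_0 n}}=1}\to x_d(t_0)$ as $n\to\infty$; since $x_\ell(t_0)<x_d(t_0)$, for all $n$ large enough we obtain $\mbb{M^{n,\ell}_{\ceil{t_0 n}}=1}\le\mbb{M^{n,d}_{\ceil{t_0 n}}=1}$, and the index $i:=\ceil{t_0 n}$ satisfies $i>n$ because $t_0>1$. This is precisely the assertion of the lemma.

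The crossing/area comparison is the conceptual heart, and it is short; the part that requires care is the discretization. The limit theorem only provides convergence of $\mbb{M^{n,d}_{\ceil{tn}}=1}$ to $x_d(t)$ for fixed $t$, which is why I insisted on a \emph{strict} inequality at $t_0$ (so that it survives the $o(1)$ errors for large $n$) rather than settling for the non-strict inequality the lemma literally asks for. The only structural input I rely on is the continuous monotonicity on $(0,1]$, and even that is obtained here by passing Theorem~\ref{thm:main-discrete} to the limit, so the whole argument rests on the conservation identity and a one-line comparison of areas.
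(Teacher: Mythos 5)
Your proof is correct and, once unwound, is essentially the paper's own crossing argument: your ``conservation law'' $\int_0^\infty x_d'(t)\,dt=1$ is the fundamental theorem of calculus applied to the paper's $x_d$ together with $x_d(t)\to 1$, and the strict negativity of $\int_1^\infty\left(x_\ell'(t)-x_d'(t)\right)dt$ is exactly the paper's statement that $y(t)=x_\ell(t)-x_d(t)$ satisfies $y(1)>0$ and $y(t)\to 0$, forcing $y'<0$ somewhere beyond $t=1$; both arguments then transfer to the discrete market via Lemma~\ref{lem:prob-to-xprime}, and you correctly insist on a strict inequality at $t_0$ so that it survives the $o(1)$ discretization errors. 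The one substantive difference is the source of the positivity at $t=1$: the paper reads $y(1)>0$ off the explicit bounds of Lemma~\ref{lem:xd-bounds}, while you derive it from rate monotonicity on $[0,1]$ (imported from Theorem~\ref{thm:main-discrete}, though citing Theorem~\ref{thm:bound-cts} directly would avoid the continuous-to-discrete-to-continuous round trip) together with the local separation $x_\ell'(t)-x_d'(t)=t^d-t^\ell+o(t^d)>0$ near $t=0$. This makes your argument independent of the technical Lemma~\ref{lem:xd-bounds}, at the cost of the extra expansion; the paper's version is shorter because that lemma is already needed for Theorem~\ref{thm:bound-discrete}. One slip worth fixing: your $x_d$ is not ``the solution of the initial value problem of Section~\ref{sec:continuous} with $x_d(0)=1$'' --- that problem has the constant solution $1$; what you actually define and use is the derivative $x_d'=1-x_d^d$ of the paper's solution, i.e., the match rate, and with that reading every step goes through (the $\ceil{t_0 n}$ versus $\floor{t_0 n}$ mismatch with Lemma~\ref{lem:prob-to-xprime} is immaterial by continuity and strictness).
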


While Theorem~\ref{thm:main-discrete} allows us to compare probabilities of getting any school for students under two different list lengths, it does not give us any information on the absolute probability, $\mathbb{P}(M_i^{n,d}=1)$, that a student $i$ will get matched. This probability decreases with $i$ and eventually converges to $0$ as $i$ becomes large enough (since all schools will have been taken), so a natural question is to ask how this probability behaves for smaller values of $i$. It turns out that as $d$ increases, $\mathbb{P}(M_n^{n,d}=1)$ quickly approaches $1/2$.  This is significant for all students $i\leq n$ (since $\mathbb{P}(M_i^{n,d}=1)$ decreases in $i$) and for all students when the number of students is comparable to the number of schools.

\begin{restatable}{theorem}{thmBoundDiscrete}\label{thm:bound-discrete}
Let $d \in \mathbb{N}$. For every $n$ large enough, we have 
\begin{equation}\label{eq:bound-on-P-Mid=1}
\frac{d}{2d+1} \leq \mbb{M_n^{n,d}=1} \leq \frac{2d}{4d+1}.
\end{equation}
In particular,
\begin{align*}
\lim_{d\to\infty} \lim_{n\to\infty} \mathbb{P}(M_n^{n,d} =1)=\frac{1}{2}.
\end{align*}
\end{restatable}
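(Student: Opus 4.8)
The plan is to pass to the continuous limit, reduce the statement to a two‑sided bound on the solution of a scalar ODE evaluated at time $t=1$, and then obtain the double limit by squeezing. Throughout, let $y_d(t)$ denote the limiting fraction of schools already occupied once an $i=\lfloor tn\rfloor$ fraction of students has been processed. The heuristic that each arriving student occupies a fresh school with probability $1-y_d^d$ (the probability that at least one of its $d$ asymptotically independent, uniform choices is still free) yields the initial value problem $y_d'(t)=1-y_d(t)^d$ with $y_d(0)=0$. I would make this rigorous by invoking the limit theorem (Theorem~\ref{thm:xts}) together with the properties of the continuous market established earlier, which give $\lim_{n\to\infty}\mbb{M_n^{n,d}=1}=1-y_d(1)^d$ (student $n$ corresponds to $t=1$ in the limit, and matches iff one of its $d$ choices avoids the occupied fraction $y_d(1)$). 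Since \eqref{eq:bound-on-P-Mid=1} only requires ``$n$ large enough'', it suffices to prove the bounds for this limit and then transfer them to all large $n$ via the convergence.

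The crux is to bound $y_d(1)^d$. Separating variables turns the IVP into the implicit equation $\int_0^{y_d(1)}\frac{\dd u}{1-u^d}=1$, and since $v\mapsto\int_0^{v}\frac{\dd u}{1-u^d}$ is strictly increasing, the two halves of the statement decouple cleanly. Proving $y_d(1)^d\le\frac{d+1}{2d+1}$ (equivalently the lower bound $\mbb{M_n^{n,d}=1}\ge\frac{d}{2d+1}$) reduces to showing $\int_0^{a}\frac{\dd u}{1-u^d}\ge1$ at $a=\big(\tfrac{d+1}{2d+1}\big)^{1/d}$, while proving $y_d(1)^d\ge\frac{2d+1}{4d+1}$ (the upper bound) reduces to $\int_0^{b}\frac{\dd u}{1-u^d}\le1$ at $b=\big(\tfrac{2d+1}{4d+1}\big)^{1/d}$. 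I would attack these through two‑sided estimates of the integrand $\frac1{1-u^d}=\sum_{k\ge0}u^{dk}$, using the cheap a priori bound $\frac{d}{d+1}\le y_d(1)<1$ (immediate from $y_d'\le1\Rightarrow y_d(t)\le t\Rightarrow y_d'\ge1-t^d$) to control the tail of the series.

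I expect this integral estimate to be the main obstacle. The first‑order bounds $1\le\frac1{1-u^d}\le\frac1{1-y_d(1)^d}$ are only barely too weak to reach the exact constants $2d+1$ and $4d+1$, so the argument must retain enough of the geometric series — or, equivalently, compare $y_d$ against a sharper explicit subsolution and supersolution on $[0,1]$ rather than the crude linear ones — in order to land precisely on these fractions. Getting the slack in these elementary estimates to match the target constants for every $d$, uniformly, is the delicate part.

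Finally, the double limit follows by squeezing. As the two bounds hold for every large $n$, they survive the inner limit $n\to\infty$, giving $\frac{d}{2d+1}\le\lim_{n\to\infty}\mbb{M_n^{n,d}=1}\le\frac{2d}{4d+1}$. Both endpoints tend to $\tfrac12$ as $d\to\infty$, so the squeeze theorem yields $\lim_{d\to\infty}\lim_{n\to\infty}\mbb{M_n^{n,d}=1}=\tfrac12$, completing the proof.
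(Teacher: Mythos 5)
Your overall route coincides with the paper's: pass to the continuous limit (Theorem~\ref{thm:xts} together with Lemma~\ref{lem:prob-to-xprime}) to get $\mbb{M_n^{n,d}=1}\to x_d'(1)=1-x_d(1)^d$, rewrite the IVP~\eqref{eq:ivp} as the implicit equation $\int_0^{x_d(1)}\frac{\dd u}{1-u^d}=1$, reduce the two bounds to showing that this integral is $\geq 1$ at $\br{\tfrac{d+1}{2d+1}}^{1/d}$ and $\leq 1$ at $\br{\tfrac{2d+1}{4d+1}}^{1/d}$, and finish the double limit by squeezing between the two endpoints. This reduction is exactly the paper's Lemma~\ref{lem:xd-bounds}, and your decoupling of the two inequalities and identification of the target thresholds are correct.

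However, there is a genuine gap: you never prove those two integral estimates, and they are the entire substance of the theorem. You yourself flag them as ``the delicate part'' and observe that first-order bounds on the integrand are too weak, but the proposal stops there. In the paper this is where all the work happens: one expands $\frac{1}{1-u^d}=\sum_{r\geq 0}u^{rd}$, integrates term by term to get $\sum_{r\geq 0}\frac{z^{rd+1}}{rd+1}$, and then compares the tail to logarithmic series --- using $\frac{1}{rd+1}\geq\frac{1}{r(d+1)}$ in one direction, and keeping the $r=1$ term exact while using $\frac{1}{rd+1}\leq\frac{1}{rd}$ for $r\geq 2$ in the other --- which, after the substitution $q=1/d$, reduces the problem to explicit logarithmic/polynomial inequalities on $q\in(0,1]$ (Lemmas~\ref{lemma:first_ineq} and~\ref{lemma:technical_2}, plus Claims~\ref{claim:ineqs2} and~\ref{claim:two-ineqs}). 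These inequalities are several pages of careful calculus and are not routine; in particular your proposed a priori bound $\tfrac{d}{d+1}\leq y_d(1)<1$ is strictly weaker than the required lower bound $\br{\tfrac{2d+1}{4d+1}}^{1/d}$ for every $d\geq 1$, so it cannot substitute for this analysis (it can at best help control series tails). Until some version of these estimates is actually carried out, the proof is incomplete precisely at its only nontrivial step.
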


We additionally prove the following lemma that bounds the change in probability of a given student getting matched to one of their top-$k$ schools for $k\leq d$. 

\begin{restatable}{lemma}{lemWorstCaseRank}\label{lem:worst-case-rank}
Let $d\in\bbN$. For every $n$ large enough, $k\leq d$ and $i\leq n$, we have
\begin{align*}
\mbb{K^{n,d}_i\leq k}-\mbb{K^{n,d+1}_i\leq k}&\leq\br{\frac{d+2}{2d+3}}^{k/(d+1)}-\br{\frac{2d+1}{4d+1}}^{k/d}.
\end{align*}
\end{restatable}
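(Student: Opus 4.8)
The plan is to pass to the continuous model and to show that the quantity of interest, viewed as a function of the normalized position of the student, is largest for the last relevant student $i=n$. First I would rewrite the left-hand side through complementary events as $\mbb{K^{n,d+1}_i>k}-\mbb{K^{n,d}_i>k}$, so that only the ``tail'' probabilities of being matched beyond the top $k$ schools appear. Writing $T^{n,d}_i:=\sum_{j=1}^n H^{n,d}_j(i)$ for the number of schools already occupied when student $i$ arrives, I would condition on $T^{n,d}_i$: since student $i$'s list is an independent uniform ordered sample and the schools are exchangeable, $\mbb{K^{n,d}_i>k \mid T^{n,d}_i=t}=t^{\underline k}/n^{\underline k}$ (the first $k$ listed schools are all occupied), and in particular $\mbb{M^{n,d}_i=0 \mid T^{n,d}_i=t}=t^{\underline d}/n^{\underline d}$. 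Invoking the limit theorem of Section~\ref{sec:pf-cts-to-discrete} (Theorem~\ref{thm:xts}), the occupancy $T^{n,d}_i/n$ concentrates, uniformly over $i\le n$, around a deterministic limit $y_d(t)$ at $t=i/n$, where $y_d$ is the continuous occupancy with $y_d'=1-y_d^{d}$ and $y_d(0)=0$. Because $t^{\underline k}/n^{\underline k}\to y^k$ whenever $t/n\to y$, this yields $\mbb{K^{n,d}_i>k}=y_d(i/n)^k+o(1)$ and $\mbb{M^{n,d}_i=0}=y_d(i/n)^d+o(1)$, with the error $o(1)$ uniform over $i\le n$ and the finitely many $k\le d$.

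The heart of the argument is the continuous inequality $\sup_{0<t\le1}\br{y_{d+1}(t)^k-y_d(t)^k}=y_{d+1}(1)^k-y_d(1)^k$, which I would establish by showing that $D(t):=y_{d+1}(t)^k-y_d(t)^k$ is non-decreasing on $[0,1]$. Differentiating and substituting the ODE gives
\[
D'(t)=k\left[y_{d+1}(t)^{k-1}\left(1-y_{d+1}(t)^{d+1}\right)-y_d(t)^{k-1}\left(1-y_d(t)^{d}\right)\right].
\]
Two comparison facts, both for $t\le1$ and both already available, make each factor of the first product dominate the corresponding factor of the second: comparing the vector fields $1-y^{d+1}\ge 1-y^{d}$ on $[0,1]$ with the common initial value $y_d(0)=y_{d+1}(0)=0$ gives $y_{d+1}(t)\ge y_d(t)$ (the continuous form of Lemma~\ref{lem:school-love}), hence $y_{d+1}^{k-1}\ge y_d^{k-1}$; while the continuous counterpart of Theorem~\ref{thm:main-discrete} proved in Section~\ref{sec:pf-main-result} gives $y_{d+1}(t)^{d+1}\le y_d(t)^{d}$, hence $1-y_{d+1}^{d+1}\ge 1-y_d^{d}$. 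All four factors being nonnegative, the product of the two larger ones dominates, so $D'\ge0$ and the supremum over $(0,1]$ is attained at $t=1$.

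Finally I would bound $D(1)=y_{d+1}(1)^k-y_d(1)^k$. Since $y_{d+1}(1)^{d+1}$ and $y_d(1)^{d}$ are exactly the limits of $\mbb{M^{n,d+1}_n=0}$ and $\mbb{M^{n,d}_n=0}$, Theorem~\ref{thm:bound-discrete} gives $y_{d+1}(1)^{d+1}\le\frac{d+2}{2d+3}$ and $y_d(1)^{d}\ge\frac{2d+1}{4d+1}$, so that $D(1)\le\br{\frac{d+2}{2d+3}}^{k/(d+1)}-\br{\frac{2d+1}{4d+1}}^{k/d}=:C$, and in fact $D(1)<C$ with a fixed positive gap because these bounds are not tight for the actual limiting occupancies. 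For every $i\le n$ the discrete difference is then $D(i/n)+o(1)\le D(1)+o(1)$, and choosing $n$ large enough to push the uniform error below the gap $C-D(1)$ yields the bound $\le C$ for all $i\le n$ and $k\le d$ simultaneously. I expect the main obstacle to be exactly this last transfer: arguing that the occupancy concentrates uniformly over all $i\le n$ with an error controllable below the fixed slack $C-D(1)$ (and confirming that this slack is genuinely positive for every $d$ and $k\le d$), whereas the monotonicity of $D$ is clean once the two comparison facts for $y_d$ are in place. Note in particular that a term-by-term bound cannot work: for small $i$ the lower estimate $\mbb{K^{n,d}_i>k}\ge(\frac{2d+1}{4d+1})^{k/d}$ fails (it is $0$ at $i=1$), so the positivity and smallness of the difference for small $i$ must be captured jointly, which is precisely what the monotonicity of $D(t)$ provides.
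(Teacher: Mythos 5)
Your proposal follows essentially the same route as the paper's proof: both reduce to the continuous market via Theorem~\ref{thm:xts}, show that $x_{d+1}(t)^k-x_d(t)^k$ is non-decreasing on $[0,1]$ by exactly the same derivative computation using Theorem~\ref{thm:bound-cts} together with $x_{d+1}\geq x_d$, and then bound the value at $t=1$ by the two-sided bounds on $x_d(1)$ (your invocation of Theorem~\ref{thm:bound-discrete} is equivalent to the paper's direct use of Lemma~\ref{lem:xd-bounds}, from which that theorem is itself derived). Your additional care in transferring the continuous bound back to finite $n$ --- uniformity of the error over $i\leq n$ and the strict gap between $x_{d+1}(1)^k-x_d(1)^k$ and the stated constant --- addresses a point the paper's proof silently glosses over, so it is a refinement of, not a departure from, the same argument.
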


\paragraph{The dynamics of the discrete model.} To analyze the discrete model, it is useful to study the number of students matched to any school just before it is student $i$'s turn, given by
\begin{align*}
T^{n,d}_i&=\sum_{j=1}^{i-1}\indc_{\set{K^{n,d}_j<\infty}}.
\end{align*}
Since each school has exactly one seat, $T^{n,d}_i$ coincides with the number of schools matched to the students $\set{1,\dots,i-1}$. The number of schools that remain unmatched before student $i$'s turn is therefore $n-T^{n,d}_i$.

We remark on one property of the random market, called the \emph{principle of deferred decisions}. This has proved useful in analyzing other markets, see, e.g.,~\cite{immorlica2015incentives}. Observe that the preference list of student $i$ does not play a role in the output until the moment the $i$-th round of Serial Dictatorship, when student $i$ is selected to pick their favorite school. We therefore do not need to specify $i$'s preference list until this moment, allowing us to defer this decision until the list is needed. In particular, note that the distribution of $K^{n,d}_i$ depends only on $T^{n,d}_i$: we only need to know the number of schools that remain unmatched at the start of the turn to know the distribution of the position that the student gets matched to. This is further illustrated below.

Suppose it is the $i$-th student's turn, and the students prior to $i$ have been matched to $k$ schools, so $T^{n,d}_i=k$. It is then straightforward to compute the probability that a list of $d$ schools chosen uniformly at random from the set of $n$ schools will overlap with any of the unmatched $n-k$ schools. This probability is given by the following formula.
\begin{align}\label{eq:discrete-match-prob}
\mbb{M^{n,d}_i=1\given T^{n,d}_i=k}=\piecewise{1-\frac{\binom{k}{d}}{\binom{n}{d}}, & k\geq d, \\ 1, & \text{otherwise}.}
\end{align}
It is immediately clear from this formula that the probability of getting matched depends only on $n$, $d$, and $T^{n,d}_i$. Furthermore, all else held constant, $\mbb{M^{n,d}_i=1\given T^{n,d}_i=k}$ decreases as $T^{n,d}_i$ increases. The principle of deferred decisions allows us to connect our discrete model to a continuous one, discussed next. 

\subsection{A Continuous Market Model}\label{sec:continuous}

We now introduce a continuous market model, which we will show is equivalent, to the limit (uniformly in probability) of the previously introduced discrete market model in the next section.

\paragraph{Description.} Consider a (deterministic) continuous matching model consisting of a unit mass of schools and a continuum of students represented by the interval $[0,\infty)$ and indexed by $t\geq 0$. Let $d\geq 1$ be a parameter of the market, and define $x_d(t)$ to be the proportion of schools matched to students in $[0,t)$. The market is now defined by the Initial Value Problem
\begin{equation}\label{eq:ivp}
x_d(0) = 0, \quad x_d'(t) = 1 - x_d(t)^d.
\end{equation}

The interpretation of the Initial Value Problem~\eqref{eq:ivp} is as follows: start with $x_d(0)=0$ and traverse students in order from $t=0$. At the time of student $t$, they get matched to schools at rate $1-x_d(t)^d$, which is also the rate by which $x_d(t)$ increases at $t$. We note $x_d'(t)$ depends only on the proportion of schools matched until that point, and that $x_d(t)$ is smooth in its argument and takes values in $[0,1]$.

\paragraph{Dynamics.} The dynamics of the continuous model are described by the (conceptually) simple initial value problem in~\eqref{eq:ivp}. Note that $x_d$ is everywhere increasing, and $x_d(t)\to 1$ as $t\to\infty$, but $x_d'(t)\to 0$ as $n\to\infty$. Further, one immediately observes that for $t\approx 0$, $x_d'(t)\approx 1$: at first all students are getting matched at the highest possible rate. The parameter $d$ is a measure of intensity of matching: if the proportion of schools matched were fixed, smaller $d$ would yield a slower rate of matching.

\begin{figure}[!h]
\centering
\includegraphics[width=.8\linewidth]{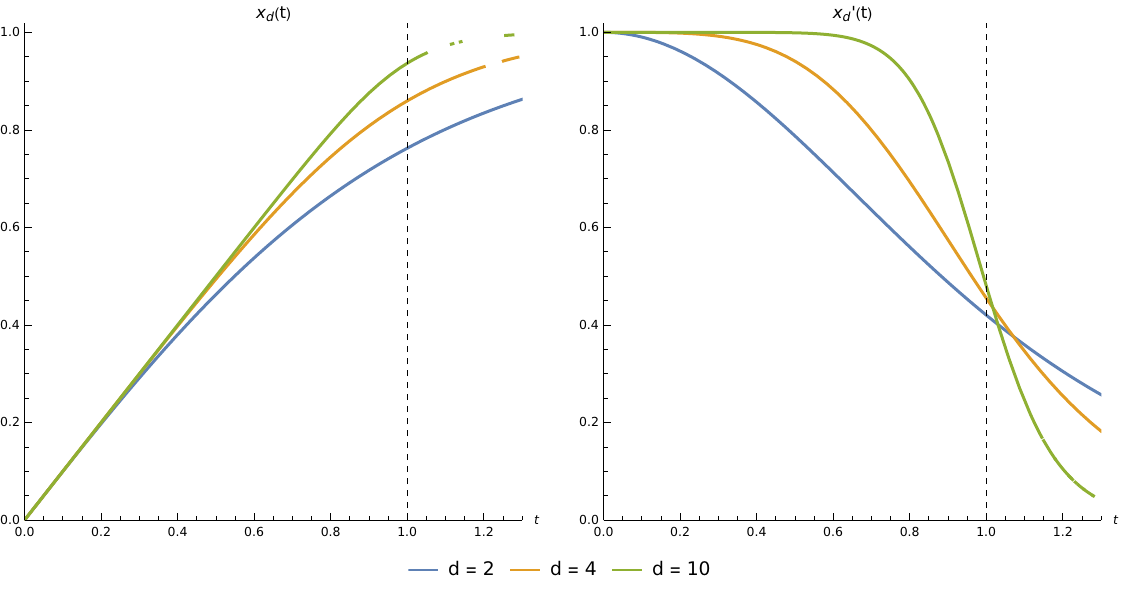}
\label{fig:xd_and_xd_prime}
\caption{$x_d$ and $x_d'$ for various values of $d$: note how the values for $x_d'$ cross just after $t=1$} 
\end{figure}

Similarly to the discrete model (see the discussion in Section~\ref{sec:our-contributions}), we see that there are again two forces at play: for small $d$, the rate of getting matched is smaller, but simultaneously the proportion of schools taken up by earlier students is smaller, so later students may prefer small $d$ as it gives them some schools to possibly get matched to. Indeed, for $t$ exceeding $1$, the probability of getting matched quickly vanishes as $d$ increases. On the other hand, a large $d$ increases the rate of getting matched, but also reduces the number of schools remaining by the time of student $t$'s turn.

We remark that although the description of the initial value problem might look deceptively straightforward, actually analyzing it proves to be quite challenging. The differential equation is highly non-linear, a special case of the \emph{Chini type} differential equation studied at least since 1924~\cite{kamkeDE}, to which there is no known analytical solution for general $d$ \cite{henk2022series}. To circumvent this, we must find implicit ways of proving properties about the solutions of the differential equation that do not require a full analytical solution.

\subsection{From Continuous to Discrete}\label{sec:continuous-to-discrete}

In this section we discuss the main theorem that connects the continuous and discrete markets, then state our main results within the continuous realm, and then describe how they carry over to the discrete case.

\paragraph{Connection to the discrete model.} We now give an intuition on how the discrete market with fixed list length $d$ approaches the continuous market when the number of schools $n\to\infty$. We make this limit rigorous via a functional law of large numbers in Theorem~\ref{thm:xts}.

To see the connection, consider the discrete market described in Section~\ref{sec:discrete} for a fixed list length $d$, and suppose we are at the turn of student $t=i/n$ for some fixed $n$. In Lemma~\ref{lem:match-prob-approx}, we show that
\begin{align}\label{eq:approx-match-prob}
\mbb{M^{n,d}_i=1}=1-(T^{n,d}_i/n)^d+O(d^2/n).
\end{align}
Here $T^{n,d}_i/n$ is the proportion of schools taken by students $\set{1,\dots,i-1}$ and so is an analogue of $x_d(t)$ (the proportion of schools taken by students $[0,t)$) in the continuous model. In the limit as $n\to\infty$, the last term vanishes so when student $t=i/n$ has their turn, they get matched to a school with probability $x_d'(t)=1-x_d(t)^d$. Indeed, we show in Lemma~\ref{lem:prob-to-xprime} that $\mbb{M^{n,d}_i=1}\to x_d'(i/n)$ in probability as $n\to\infty$. In the continuous model $d$ is therefore the analogue of the list length (but is now allowed to be any real number in $[1,\infty)$).

\begin{figure}[!h]
\centering
\includegraphics[width=0.8\linewidth]{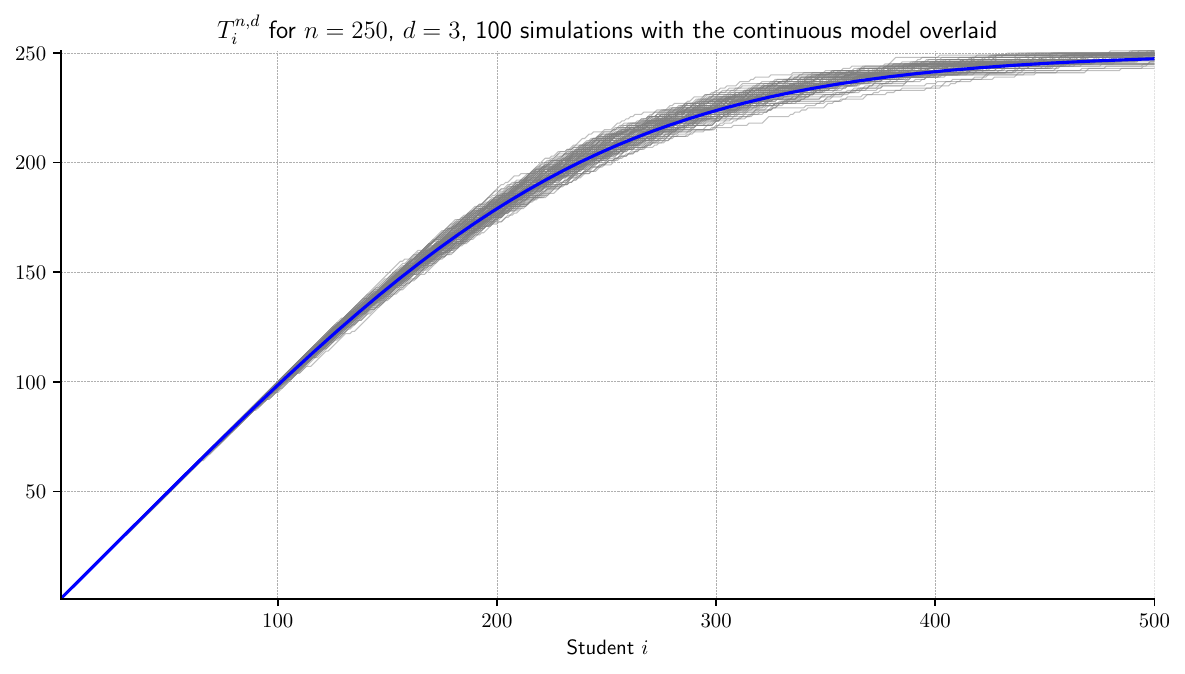}
\caption{100 simulations of the discrete model with the continuous model overlaid.}
\end{figure}

The following theorem, proved in Section~\ref{sec:pf-cts-to-discrete}, describes rigorously the sense in which the continuous market is a limit of the discrete market. 

\begin{restatable}{theorem}{thmXts}\label{thm:xts}
 Fix $d\in\bbN$ and define $p(x)=1-x^d$ for $x\in[0,1]$. For $t\geq 0$, let $T^{n,d}_{\floor{tn}}$ be the random variable denoting the number of schools matched by the first $\floor{tn}$ students when there are $n$ schools in the discrete random market  with list length $d$. Then $n^{-1}T^{n,d}_{\floor{tn}}\to x_d(t)$ uniformly in probability as $n\to\infty$, where $x_d(t)$ is the unique solution satisfying $x_d(0)=0$ and $x_d'(t)=p(x_d(t))$ for $t\geq 0$. That is, for all $s\geq 0$ and $\eps>0$, as $n\to\infty$,
\begin{align}\label{eq:converges-uniformly}
\mbb{\sup_{t\in[0,s]}\abs{n^{-1}T^{n,d}_{\floor{tn}}-x_d(t)}\geq\eps}\to 0.
\end{align}
Moreover, as $n\to\infty$ for all $r\geq1$, we have
\begin{align}\label{eq:mean-converges}
\meb{\abs{n^{-1}T^{n,d}_{\floor{tn}}-x_d(t)}^r}\to 0.
\end{align}
\end{restatable}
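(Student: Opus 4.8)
The plan is to recognize \eqref{eq:converges-uniformly} as an instance of the differential equation method (in the spirit of Wormald's theorem) for a density-dependent Markov chain, and \eqref{eq:mean-converges} as an easy consequence via uniform boundedness. By the principle of deferred decisions and \eqref{eq:discrete-match-prob}, the sequence $\br{T^{n,d}_i}_{i\geq1}$ is a Markov chain with increments $M^{n,d}_i=T^{n,d}_{i+1}-T^{n,d}_i\in\set{0,1}$ whose one-step conditional drift depends only on the current state. Writing $\mcF_i$ for the natural filtration and $p(x)=1-x^d$, Lemma~\ref{lem:match-prob-approx} (equation \eqref{eq:approx-match-prob}) gives $\meb{M^{n,d}_i\given\mcF_i}=p\br{T^{n,d}_i/n}+\eps_{n,i}$ with $\abs{\eps_{n,i}}=O(d^2/n)$ uniformly in $i$. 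In the regime $T^{n,d}_i<d$ the true conditional probability equals $1$ rather than $p\br{T^{n,d}_i/n}$, but there $T^{n,d}_i/n\leq d/n\to0$, so $p\br{T^{n,d}_i/n}=1-O\br{(d/n)^d}$ and this discrepancy is absorbed into $\eps_{n,i}$.

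First I would form the Doob decomposition $T^{n,d}_i=\sum_{j<i}p\br{T^{n,d}_j/n}+R^n_i+\mcM^n_i$, where $\mcM^n_i=\sum_{j<i}\br{M^{n,d}_j-\meb{M^{n,d}_j\given\mcF_j}}$ is a martingale with increments of magnitude at most $1$, and $R^n_i=\sum_{j<i}\eps_{n,j}$ obeys $n^{-1}\sup_{i\leq sn}\abs{R^n_i}=O(d^2 s/n)\to0$. To control the martingale I would apply the Azuma--Hoeffding inequality: since $\mcM^n$ has at most $\floor{sn}$ increments each bounded by $1$, $\mbb{\sup_{i\leq sn}\abs{\mcM^n_i}\geq n^{3/4}}\leq 2\exp\br{-n^{1/2}/(2s)}\to0$, whence $n^{-1}\sup_{i\leq sn}\abs{\mcM^n_i}\to0$ in probability.

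The heart of the argument is a deterministic comparison between the scaled chain and the Euler polygon of the ODE. Set $\hat x_i=T^{n,d}_i/n$ and $e_i=\hat x_i-x_d(i/n)$. Dividing the decomposition by $n$ and subtracting the telescoping identity $x_d(i/n)=\tfrac1n\sum_{j<i}p\br{x_d(j/n)}+\Delta^n_i$, where the accumulated Euler error satisfies $\sup_{i\leq sn}\abs{\Delta^n_i}=O(s/n)$ (as $x_d\in C^2$ with $\abs{x_d'}\leq1$ on $[0,s]$), and using that $p$ is Lipschitz on $[0,1]$ with constant $L=d$, yields $\abs{e_i}\leq\tfrac{L}{n}\sum_{j<i}\abs{e_j}+\delta_n$ with $\delta_n=n^{-1}\sup_{i\leq sn}\abs{\mcM^n_i}+n^{-1}\sup_{i\leq sn}\abs{R^n_i}+\sup_{i\leq sn}\abs{\Delta^n_i}\to0$ in probability. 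The discrete Grönwall inequality then gives $\sup_{i\leq sn}\abs{e_i}\leq\delta_n\,e^{Ls}\to0$ in probability; since $\hat x_{\floor{tn}}$ and $x_d(t)$ each vary by $O(1/n)$ between consecutive grid points, the supremum over the grid controls the supremum over $t\in[0,s]$, establishing \eqref{eq:converges-uniformly}. I expect the synthesis in this step---simultaneously bookkeeping the three error sources uniformly in $i$ and closing the Grönwall loop---to be the main technical obstacle, although each ingredient is individually standard once the martingale bound is in hand.

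Finally, \eqref{eq:mean-converges} follows with no extra work: for fixed $t$, \eqref{eq:converges-uniformly} gives $n^{-1}T^{n,d}_{\floor{tn}}\to x_d(t)$ in probability, and since $\abs{n^{-1}T^{n,d}_{\floor{tn}}-x_d(t)}\leq1$ is uniformly bounded, the bounded convergence theorem yields $\meb{\abs{n^{-1}T^{n,d}_{\floor{tn}}-x_d(t)}^r}\to0$ for every $r\geq1$.
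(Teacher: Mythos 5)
Your proposal is correct, but it takes a genuinely different route from the paper. The paper's proof embeds the discrete-time chain $\{T^{n,d}_{\floor{tn}}\}$ into a continuous-time Markov chain via a rate-$n$ Poisson process, invokes an off-the-shelf functional law of large numbers for density-dependent Markov chains (Theorem~\ref{thm:flln}, a Kurtz-type result cited from the literature) to get uniform convergence of the embedded chain to $x_d$, and then transfers the conclusion back to discrete time by bounding $\abs{X^n_t-J^n_t}$ and controlling $\sup_t\abs{n^{-1}H^n_t-t}$ with Doob's martingale inequality. You instead stay entirely in discrete time and reprove the functional LLN from scratch in this special case: Doob decomposition of $T^{n,d}_i$ into drift plus martingale, the Azuma--Hoeffding maximal inequality to kill the martingale part at scale $n^{3/4}$, an Euler-polygon comparison, and a discrete Gr\"onwall argument to close the loop --- essentially Wormald's differential equation method. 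Both proofs share the same inputs at the boundary: Lemma~\ref{lem:match-prob-approx} supplies the drift approximation (and your explicit handling of the $T^{n,d}_i<d$ regime is a point the paper glosses over), and the $r$-mean convergence step is identical (boundedness plus convergence in probability, the paper's Lemma~\ref{lem:r-mean-conv}). What each buys: the paper's route is short on the page because the analytic heavy lifting is outsourced to the cited theorem, at the cost of the continuous-time detour and the de-embedding step; your route is self-contained, avoids continuous-time machinery, and yields explicit quantitative rates (exponential concentration and an error of order $e^{ds}(n^{-1/4}+s/n)$), which the black-box theorem does not provide. One small patch you should add: the theorem's conclusion also asserts existence and uniqueness of $x_d$ on $[0,\infty)$, which the paper gets for free from Theorem~\ref{thm:flln}; your Euler comparison presupposes that $x_d$ exists and is $C^2$ with $\abs{x_d'}\leq1$ on $[0,s]$, so you should note that this follows from Picard--Lindel\"of (the paper's Theorem~\ref{thm:picardlindelof}), since $p$ is Lipschitz on $[0,1]$ and the dynamics keep $x_d$ in $[0,1]$.
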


In particular,~\eqref{eq:converges-uniformly} implies that as $n\to\infty$, the number of schools taken prior to student $t=i/n$ converges in probability to the solution of the Initial Value Problem~\eqref{eq:ivp}, i.e., the continuous market. Not only does the proportion of schools taken converge in probability for every point $t\geq 0$, but the maximum deviations of the random market around the continuous solution vanish in probability, and the mean converges to the continuous solution (by taking $r=1$ in~\eqref{eq:mean-converges}). The theorem additionally posits that the initial value problem defining the continuous market, that is \eqref{eq:ivp}, has a unique solution satisfying the boundary data and that this solution extends for all time $t\geq 0$. 

The next lemma rigorously connects the match rates of the discrete and continuous models in this limit. Recall that $M^{n,d}_i$ is the indicator random variable for whether student $i$ gets matched to any school in the discrete random market with $n$ schools and preference lists of length $d$.

\begin{restatable}{lemma}{lemProbToXPrime}\label{lem:prob-to-xprime}
For all $d\inN$, $t\geq 0$, we have $\mbb{M^{n,d}_{\floor{tn}}=1}\to x_d'(t)$ as $n\to\infty$.
\end{restatable}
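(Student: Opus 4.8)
The plan is to write the unconditional match probability as an expectation over the random number $T^{n,d}_{\floor{tn}}$ of previously matched schools, approximate the resulting ratio of binomial coefficients by a power, and then transfer the limit from $n^{-1}T^{n,d}_{\floor{tn}}$ (which is controlled by Theorem~\ref{thm:xts}) to the quantity of interest. Throughout write $Y_n=n^{-1}T^{n,d}_{\floor{tn}}\in[0,1]$.

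First I would condition on $T^{n,d}_{\floor{tn}}$ and apply the law of total expectation together with the conditional formula \eqref{eq:discrete-match-prob}. Since $\binom{k}{d}=0$ whenever $k<d$, the formula $1-\binom{k}{d}/\binom{n}{d}$ is valid for every value $k$, so
\begin{align*}
\mbb{M^{n,d}_{\floor{tn}}=1}=1-\meb{\frac{\binom{T^{n,d}_{\floor{tn}}}{d}}{\binom{n}{d}}}.
\end{align*}
The ratio factors as $\prod_{j=0}^{d-1}\frac{T^{n,d}_{\floor{tn}}-j}{n-j}$, and the elementary estimate $\frac{k-j}{n-j}=\frac{k}{n}+O(1/n)$, uniform over $k\in\set{0,\dots,n}$ and $j\in\set{0,\dots,d-1}$, yields $\binom{k}{d}/\binom{n}{d}=(k/n)^d+O(d^2/n)$ uniformly in $k$; this is precisely the content of Lemma~\ref{lem:match-prob-approx} and \eqref{eq:approx-match-prob}. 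Because the error is uniform in the argument, it survives taking the expectation, so
\begin{align*}
\mbb{M^{n,d}_{\floor{tn}}=1}=1-\meb{Y_n^d}+O(d^2/n).
\end{align*}

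Next I would transfer the limit. By Theorem~\ref{thm:xts}, in particular \eqref{eq:mean-converges} with $r=1$, we have $\meb{\abs{Y_n-x_d(t)}}\to 0$. Since $Y_n$ and $x_d(t)$ both lie in $[0,1]$ and $y\mapsto y^d$ is Lipschitz there with constant $d$ (indeed $\abs{a^d-b^d}\leq d\abs{a-b}$ for $a,b\in[0,1]$), it follows that
\begin{align*}
\abs{\meb{Y_n^d}-x_d(t)^d}\leq\meb{\abs{Y_n^d-x_d(t)^d}}\leq d\,\meb{\abs{Y_n-x_d(t)}}\longrightarrow 0.
\end{align*}
As $d$ is fixed, the error term $O(d^2/n)\to 0$ as well, so combining the two displays gives $\mbb{M^{n,d}_{\floor{tn}}=1}\to 1-x_d(t)^d$. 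Finally, the initial value problem \eqref{eq:ivp} states $x_d'(t)=1-x_d(t)^d$, which is exactly the claimed limit.

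The only delicate point is ensuring the approximation $\binom{k}{d}/\binom{n}{d}=(k/n)^d+O(d^2/n)$ holds \emph{uniformly} in the random argument $k=T^{n,d}_{\floor{tn}}$, so that the error survives the expectation; this is why Lemma~\ref{lem:match-prob-approx} is stated with a uniform error term. Once that is in hand, the remainder is a routine combination of the $L^1$ convergence from Theorem~\ref{thm:xts} with the Lipschitz bound on the power map, so I do not anticipate further obstacles.
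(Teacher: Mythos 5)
Your proof is correct, and it transfers the limit by a genuinely different mechanism than the paper's. Both proofs rest on the same two ingredients (Lemma~\ref{lem:match-prob-approx} and Theorem~\ref{thm:xts}), but the paper invokes the \emph{in-probability} part of Theorem~\ref{thm:xts}, \eqref{eq:converges-uniformly}: it fixes $\eps$, uses continuity of $x\mapsto 1-x^d$ to pick a $\delta$, and splits via the law of total probability on the good event $\bigl\{\abs{n^{-1}T^{n,d}_{\floor{tn}}-x_d(t)}\leq\dta\bigr\}$ versus its complement. You instead invoke the \emph{$r$-mean} part, \eqref{eq:mean-converges} with $r=1$: after noting that $\binom{k}{d}=0$ for $k<d$ makes the conditional formula \eqref{eq:discrete-match-prob} valid for every $k$, you write the unconditional probability exactly as $1-\meb{Y_n^d}+O(d^2/n)$ and close the argument with the Lipschitz bound $\abs{a^d-b^d}\leq d\abs{a-b}$ on $[0,1]$. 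Your route is shorter and avoids the paper's somewhat delicate conditional-probability bookkeeping (its final display conditions on events described through an auxiliary variable $y$, which is notationally awkward); the price is that it leans on the $L^1$ conclusion of Theorem~\ref{thm:xts}, which is itself derived from in-probability convergence plus boundedness (Lemma~\ref{lem:r-mean-conv}), whereas the paper's argument would go through with only the weaker in-probability statement. You also correctly flag the one delicate point — that the $O(d^2/n)$ error in Lemma~\ref{lem:match-prob-approx} is uniform in $k$, which is exactly what lets it survive the expectation — so there is no gap.
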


To understand whether $\mbb{M^{n,d}_i=1}$ is larger in longer or shorter lists, we therefore need to understand whether $x_d'(t)$  is larger for larger or smaller $d$. That is, if $x_{d+1}'(t)\geq x_d'(t)$ for some $t\geq 0$, then student $t=i/n$ prefers lists of length $d+1$ to those of length $d$ for every large enough $n$. If we can show such a relationship holds for every $d\geq 1$ at $t$, then clearly student $t$ always prefers longer lists to shorter ones. Our task of understanding match probability therefore becomes understanding $x_d'(t)$ in the continuous market. 

\paragraph{The case of $d=1$ and $d=2$.} For $d=1$ and $d=2$, we can in fact compute analytic solutions to the initial value problem~\eqref{eq:ivp} defining the continuous market. For the case of $d=1$, we unsurprisingly get
\begin{align*}
x_1(t)=1-e^{-t},\qquad x_1'(t)=e^{-t},
\end{align*}
and for $d=2$ we get
\begin{align*}
x_2(t)=\frac{e^t-e^{-t}}{e^t+e^{-t}};\qquad x_2'(t)=\frac{4e^{2t}}{(e^{2t}+1)^2}.
\end{align*}
The latter may also be written $x_2(t)=\tanh(t)$ and $x_2'(t)=\sech(t)^2$ in terms of the hyperbolic functions. With these analytic expressions, one can solve to find $x_2'(t)\geq x_1'(t)$ if and only if $t\leq 1.219$. This means that for students before this cutoff, they prefer lists of length $2$ to lists of length $1$, and vice versa for larger $t$, students prefer the case of shorter lists: see Figure~\ref{fig:d1_vs_d2}. We wish to highlight two key phenomena from this example: all students for $t\in[0,1]$ have a higher probability of getting matched to any school with longer lists compared to shorter lists, and there is a cutoff (after $t=1$) where this pattern reverses. 

\begin{figure}[!h]
\centering
\includegraphics[width=.8\textwidth]{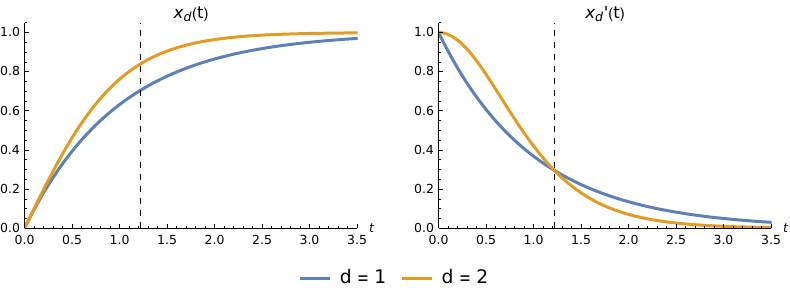}
\caption{$x_d$ and $x_d'$ for $d=1$ and $d=2$. Note that students before the highlighted cutoff at $t=1.219$ have higher probability of being assigned to any school with longer lists, and vice versa for students after the cutoff.}\label{fig:d1_vs_d2}
\end{figure}

\paragraph{The case of general $d$.} In the case of general $d$, we cannot hope to solve the initial value problem analytically. Our main result in the continuous realm, proved in Section~\ref{sec:pf-main-result} is that all students for $t\leq 1$ always prefer longer lists.

\begin{restatable}{theorem}{thmBoundCts}\label{thm:bound-cts}
For all $d,\ell\in[1,\infty)$ with $\ell\geq d$, and for all $t\in[0,1]$, we have $x_\ell'(t)\geq x_d'(t)$ .
\end{restatable}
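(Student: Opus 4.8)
The plan is to treat the exponent as a continuous parameter $a\in[1,\infty)$ and prove the stronger statement that $x_a'(t)$ is nondecreasing in $a$ for every fixed $t\in[0,1]$; integrating the pointwise bound $\partial_a x_a'(t)\geq 0$ over $a\in[d,\ell]$ then yields $x_\ell'(t)\geq x_d'(t)$. Since the right-hand side $1-x^a$ of the initial value problem~\eqref{eq:ivp} is smooth in $(a,x)$ for $x>0$, and $x_a(t)>0$ for $t>0$, standard results on smooth dependence on parameters guarantee that $y(t):=\partial_a x_a(t)$ exists and that $\partial_a x_a'(t)=y'(t)$. The only quantitative input I will need at the very end is the elementary bound $x_a(s)\le s$, which holds because $x_a'=1-x_a^a\le 1$ and $x_a(0)=0$.

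First I would differentiate $x_a'=1-x_a^a$ with respect to $a$ to get the linear equation $y'=-x_a^a\ln x_a-a\,x_a^{a-1}y$ with $y(0)=0$. Differentiating $x_a'=1-x_a^a$ in $t$ instead gives $x_a''=-a\,x_a^{a-1}x_a'$, so $a\,x_a^{a-1}=-(\ln x_a')'$, and hence the integrating factor for the linear ODE is exactly $1/x_a'(t)$ (using $x_a'(0)=1$). The aim of this computation is the clean identity
\begin{equation*}
\partial_a x_a'(t)=-\frac{w_a'(t)}{a}\int_0^t\bigl(1+\ln x_a(s)\bigr)\,ds,\qquad w_a:=x_a^{a}.
\end{equation*}
The most transparent route is to pass to $w_a=x_a^a$, which satisfies the autonomous ODE $w_a'=\phi_a(w_a)$ with $\phi_a(w)=a\,w^{(a-1)/a}(1-w)$. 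Variation of parameters for $\partial_a w_a$ then has propagator $w_a'(t)/w_a'(s)$, because $w_a''=\phi_a'(w_a)w_a'$ makes $\phi_a'(w_a)=(\ln w_a')'$; and the inhomogeneous term collapses using $\partial_a\phi_a(w)=\phi_a(w)(a+\ln w)/a^2$ together with $\ln w_a=a\ln x_a$, so that the factor $w_a'(s)$ cancels and only $\int_0^t\bigl(1+\ln x_a(s)\bigr)\,ds$ survives. Because the $w$-equation is singular at $t=0$, I would either justify the $t\to0$ boundary term with care or, more safely, simply verify the displayed identity directly against the rigorous (nonsingular) formula for $y$ coming from the $x_a$-equation.

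With the identity in hand the sign is immediate: $w_a'(t)\ge 0$ and $a>0$, so $\partial_a x_a'(t)\ge 0$ if and only if $\int_0^t\bigl(1+\ln x_a(s)\bigr)\,ds\le 0$. Here the bound $x_a(s)\le s$ gives $\ln x_a(s)\le\ln s$, whence
\begin{equation*}
\int_0^t\bigl(1+\ln x_a(s)\bigr)\,ds\le\int_0^t\bigl(1+\ln s\bigr)\,ds=t\ln t\le 0\qquad\text{for } t\in(0,1],
\end{equation*}
and $t=0$ is trivial. This gives $\partial_a x_a'(t)\ge0$ on $[0,1]$ and hence the theorem. It is worth noting that the hypothesis $t\le 1$ enters precisely through $t\ln t\le 0$, which matches the fact — already visible for $d=1,\ell=2$ — that the inequality must reverse for some $t>1$.

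I expect the main obstacle to be the derivation of the clean identity above: obtaining the exact cancellation that collapses the variation-of-parameters expression down to the single integral $\int_0^t\bigl(1+\ln x_a(s)\bigr)\,ds$. Everything downstream — the sign analysis and the comparison $x_a(s)\le s$ — is elementary, and the only remaining care is the standard (but necessary) justification of differentiability in the parameter $a$ and of the boundary behaviour at $t=0$.
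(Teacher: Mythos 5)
Your proof is correct, and it reaches the paper's key identity but then finishes it in a genuinely different --- and substantially simpler --- way. Both you and the paper differentiate in the list-length parameter and reduce the theorem to showing $\partial_a \partial_t x_a(t) \geq 0$ on $[0,1]$, and your identity $\partial_a x_a'(t) = -\tfrac{1}{a}w_a'(t)\int_0^t (1+\ln x_a(s))\,ds$ is exactly the paper's expression $\partial_d\partial_t x = -x^{d-1}(1-x^d)\int_0^{x(d,t)}\frac{1+\log u}{1-u^d}\,du$ after the change of variables $u = x_a(s)$, $du = (1-x_a(s)^a)\,ds$; I verified the two agree term by term. The divergence is in how the sign of that integral is established. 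The paper stays in the school-proportion variable $u$: it reduces to $t=1$ by a monotonicity argument (its Theorem~\ref{thm:ig}), then needs the sharp bound $x(d,1) \leq \left(\frac{d+1}{2d+1}\right)^{1/d}$ (Lemma~\ref{lem:xd-bounds}) and a delicate series estimate (Lemma~\ref{lemma:x1a_ineq}), which together occupy several pages of technical inequalities in the appendix. You instead stay in the time variable $s$, where the elementary comparison $x_a(s)\leq s$ (from $x_a'\leq 1$, $x_a(0)=0$) gives
\begin{align*}
\int_0^t (1+\ln x_a(s))\,ds \leq \int_0^t(1+\ln s)\,ds = t\ln t \leq 0 \quad \text{for } t\in(0,1],
\end{align*}
eliminating the series machinery entirely and making transparent why $t\leq 1$ is precisely the right threshold. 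Two remarks: first, the technical lemmas your route avoids are not wasted effort in the paper, since Lemma~\ref{lem:xd-bounds} is also the engine behind Theorem~\ref{thm:bound-discrete}, Lemma~\ref{lem:crossin-discrete}, and Lemma~\ref{lem:worst-case-rank}, so the paper would need it regardless --- but Theorem~\ref{thm:bound-cts} itself does not. Second, the points you flag as needing care (differentiability in the parameter $a$, and the singularity of the $w$-equation at $t=0$) are real but standard; your fallback of verifying the identity against the nonsingular variation-of-parameters formula for $\partial_a x_a$ is sound, and the paper handles the analogous issue no more rigorously (it differentiates the implicit integral equation via the Leibniz rule and asserts commuting partials).
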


\subsection{Implications and Extensions}\label{sec:implications-extensions}

\paragraph{Random Serial Dictatorship.} In the Random Serial Dictatorship mechanism, the order in which students propose is sampled from all possible permutations of students. When applied to our setting, it introduces an additional layer of randomness, on top of the one due to the preferences of agents. 

Formally, consider a modification of our model, where there are $n$ schools (each with one seat), $m\in \mathbb{N}$ students ordered randomly, and each of them has a strict preference list of $d$ schools, chosen uniformly at random from the set of $n$ schools. For $d,n \in \mathbb{N}$, we denote by $R^{n,d}$ the indicator random variable for the event that a student is matched to some school under the Serial Dictatorship algorithm, where students propose following the sampled order (note that this probability does not depend on the specific student, while it depends on $m$, but we are omitting this dependency for the sake of brevity). The following is an easy corollary of Lemma~\ref{lem:school-love}; its proof is given in Appendix~\ref{appx:lemmas_and_proofs}. 

\begin{restatable}{corollary}{corSerial}\label{cor:Serial}
Let $d,\ell,n\inN$ with $n\geq \ell\geq d$. We have
\begin{align*}
\mbb{R^{n,\ell}=1}\geq\mbb{R^{n,d}=1}.
\end{align*}
\end{restatable}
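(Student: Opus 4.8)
The plan is to express $\mathbb{P}(R^{n,d}=1)$ as a normalized expected number of matched schools and then apply Lemma~\ref{lem:school-love} school-by-school. First I would exploit the symmetry of the Random Serial Dictatorship model: the proposing order is a uniformly random permutation of the $m$ students and the preference lists are i.i.d.\ uniform, so the students are exchangeable and each is matched with the same probability. Hence, writing $N$ for the (random) number of matched students, we have $\mathbb{P}(R^{n,d}=1)=\tfrac1m\,\mathbb{E}[N]$.

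The key reduction is that $\mathbb{E}[N]$ does not depend on the order. For any fixed permutation $\sigma$, running Serial Dictatorship in order $\sigma$ on i.i.d.\ lists $L_1,\dots,L_m$ yields the same number of matches as running it in the canonical order $1,2,\dots,m$ on the permuted lists $L_{\sigma(1)},\dots,L_{\sigma(m)}$; since the lists are i.i.d.\ (hence exchangeable), these two counts have the same distribution, so $\mathbb{E}[N]$ equals the expected number of matched students in the fixed-order discrete model of Section~\ref{sec:discrete} restricted to students $1,\dots,m$. Because each school has a single seat, in every realization the number of matched students equals the number of matched schools, and the number of schools matched once all $m$ students have taken their turn is exactly $\sum_{j=1}^n H^{n,d}_j(m+1)$. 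Combining these observations gives
\[
m\,\mathbb{P}(R^{n,d}=1)=\sum_{j=1}^n \mathbb{P}\bigl(H^{n,d}_j(m+1)=1\bigr).
\]

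Finally I would apply Lemma~\ref{lem:school-love} with $i=m+1$ to each summand: for $d\le\ell\le n$ and every $j\in[n]$ it gives $\mathbb{P}(H^{n,\ell}_j(m+1)=1)\ge \mathbb{P}(H^{n,d}_j(m+1)=1)$. Summing over $j\in[n]$ and dividing by $m$ then yields $\mathbb{P}(R^{n,\ell}=1)\ge\mathbb{P}(R^{n,d}=1)$, as claimed. The only step requiring genuine care is the exchangeability argument that removes the dependence on the random order and lets me pass from the RSD quantity to the fixed-order indicators $H^{n,d}_j$ to which Lemma~\ref{lem:school-love} applies; the identity ``matched students $=$ matched schools'' and the termwise use of the lemma are then immediate, which is why the statement is an easy corollary.
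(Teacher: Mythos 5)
Your proof is correct and follows essentially the same route as the paper: both equate the RSD match probability with the normalized expected number of matched schools (the paper writes this via symmetry as $\mbb{R^{n,d}=1}=\frac{n}{m}\mbb{H^{n,d}_1(m)=1}$, you keep the sum $\sum_{j=1}^n \mbb{H^{n,d}_j(m+1)=1}$) and then apply Lemma~\ref{lem:school-love} termwise. Your version merely spells out the exchangeability argument that the paper compresses into ``by symmetry,'' and your index $m+1$ is in fact the more careful reading of the definition of $H^{n,d}_j(i)$.
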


\paragraph{Schools with multiple seats.}\label{para:multiple-seats}

In this section we discuss an extension of the discrete random market introduced in Section~\ref{sec:discrete} to the case where schools each have $q\inN$ seats. Like in the simpler market, we are interested in students $i=1,2,\dots$ as they each have their turn and are (possibly) matched to one of their remaining acceptable schools. Formally, denote by $M^{n,d,q}_{i}$ the indicator random variable of whether student $i$ gets matched with $n$ schools, $d$ length lists, and $q$ seats per school. Based on numerical experiments (some of which are reported in Section~\ref{sec:conclusions}), we conjecture that the following extension of Theorem~\ref{thm:main-discrete} holds.

\begin{conjecture}\label{conj:q_ge_1}
Let $d,\ell\inN$ with $\ell\geq d$. For every $n$ large enough and $i\leq nq$, we have
\begin{align*}
\mbb{M_i^{n,\ell,q}=1}\geq\mbb{M_i^{n,d,q}=1}.
\end{align*}
\end{conjecture}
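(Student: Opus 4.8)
The plan is to mirror, in the multi-seat setting, the three-part architecture used for Theorem~\ref{thm:main-discrete}: (i) identify the right ``state'' of the discrete Markov process together with a formula for the conditional match probability, (ii) prove a functional law of large numbers connecting this state to a continuous (now multi-dimensional) initial value problem, and (iii) reduce the conjecture to a monotonicity statement about the continuous model and prove it. The first observation is that, with $q$ seats, a student fails to match only when every school on their list is \emph{full} (all $q$ seats taken). Hence, if $F^{n,d,q}_i$ denotes the number of full schools just before student $i$'s turn, the analogue of~\eqref{eq:discrete-match-prob} is $\mbb{M^{n,d,q}_i=1\given F^{n,d,q}_i=f}=1-\binom{f}{d}/\binom{n}{d}$ for $f\geq d$ and $1$ otherwise. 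The subtlety absent in the $q=1$ case is that $F^{n,d,q}_i$ is not Markov in itself: a match increases $F$ only when it lands on a school already holding $q-1$ seats, so one must track the entire profile of fill levels.

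First I would set up the continuous limit. Let $y_k(t)$ be the proportion of schools holding exactly $k$ seats, $0\le k\le q$, and use the symmetry observation that, conditioned on a student matching, the seat is added to a uniformly random \emph{non-full} school (a student's preferences ignore fill levels, so all non-full schools are exchangeable). This yields the coupled system
\begin{align*}
y_0'(t) &= -\br{1-y_q(t)^d}\frac{y_0(t)}{1-y_q(t)}, \\
y_k'(t) &= \br{1-y_q(t)^d}\frac{y_{k-1}(t)-y_k(t)}{1-y_q(t)}, \quad 1\le k\le q-1, \\
y_q'(t) &= \br{1-y_q(t)^d}\frac{y_{q-1}(t)}{1-y_q(t)},
\end{align*}
with $y_0(0)=1$ and $y_k(0)=0$ for $k\ge 1$; here $y_q$ plays the role of $x_d$, since the match rate is $1-y_q(t)^d$. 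I would then extend Theorem~\ref{thm:xts} to vector-valued convergence, showing $n^{-1}F^{n,d,q}_{\floor{tn}}\to y_q(t)$ uniformly in probability; the same martingale/Markov-chain machinery applies, with the drift computed from the ``uniform among non-full schools'' rule, the only added bookkeeping being the $(q+1)$-dimensional state. Combined with the analogue of Lemma~\ref{lem:prob-to-xprime}, this gives $\mbb{M^{n,d,q}_{\floor{tn}}=1}\to 1-y_q(t)^d$.

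With this reduction in hand, Conjecture~\ref{conj:q_ge_1} follows (for $n$ large) once I prove the continuous monotonicity: writing $y_{q,d}$ for the full-school proportion of the length-$d$ system, it suffices to show $y_{q,\ell}(t)^\ell\le y_{q,d}(t)^d$ for all $\ell\ge d$ and all $t\in[0,q]$, since $i\le nq$ corresponds to $t\le q$ in the scaling $t=i/n$ (the multi-seat ``balanced market'' threshold, generalizing the cutoff $t=1$ of Theorem~\ref{thm:bound-cts}). This last step is where I expect the real difficulty. For $q=1$ the system collapses to the scalar Chini equation~\eqref{eq:ivp} and the inequality is exactly Theorem~\ref{thm:bound-cts}; for $q>1$ we face a genuinely coupled, analytically intractable cascade in which the full-school proportion $y_{q,d}$ sits at the tail end, so its dependence on $d$ is indirect and the scalar comparison arguments do not transfer directly. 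I would attempt either a monotone coupling between the length-$d$ and length-$\ell$ systems, an invariant-region or Lyapunov argument controlling $y_{q,d}^d$ directly, or an induction on $q$ relating the $q$-seat cascade to the $(q-1)$-seat one. Making any of these rigorous, rather than the functional-LLN bookkeeping of the first two steps, is the main obstacle, and is presumably why the statement is posed only as a conjecture.
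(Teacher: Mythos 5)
You have not produced a proof, but neither does the paper: the statement is posed there as a conjecture, and the gap in your proposal is exactly the gap the authors acknowledge. Your plan reproduces, almost step for step, the paper's own sketch in Appendix~\ref{appx:q_ge_1}: the same observation that a student fails only when every listed school is full, giving the conditional match probability $1-\binom{f}{d}/\binom{n}{d}$ from \eqref{eq:discrete-match-prob}; the same point that the number of full schools is not Markov by itself, forcing one to track the whole fill-level profile; the same $(q+1)$-dimensional cascade of ODEs (your system is precisely the paper's \eqref{eq:q_ge_1_ode}, with your $y_q$ equal to their $x_d$ and the common factor $\gamma_d(t)=(1-x_d(t)^d)/(1-x_d(t))$, justified by the same exchangeability of non-full schools); and the same vector-valued extension of Theorem~\ref{thm:xts}, which the paper records as Lemma~\ref{lem:xts-q_ge_1}. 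You then correctly identify where the argument stops: the continuous monotonicity statement generalizing Theorem~\ref{thm:bound-cts} (in your notation, $y_{q,\ell}(t)^\ell\leq y_{q,d}(t)^d$ on the balanced range), which is exactly what the paper calls ``the missing piece.'' So your proposal is incomplete, but it is incomplete in the same place and for the same reason as the paper's; you have not missed anything the authors had.

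One idea in the paper's sketch that your proposal lacks, and which bears directly on the step you flag as the real difficulty: because $\gamma_d(t)$ multiplies every equation of the cascade, the paper introduces a time change $\tau_d$ solving $\dot\tau_d(t)=\bigl(1-x_1(\tau_d(t))^d\bigr)/\bigl(1-x_1(\tau_d(t))\bigr)$, $\tau_d(0)=0$, and observes that $x_d(t)=x_1(\tau_d(t))$ and $y_d^k(t)=y_1^k(\tau_d(t))$, where the $d=1$ solution is fully explicit (an Erlang cumulative distribution function, \eqref{eq:q_ge_1_sol_d1}). This collapses the ``genuinely coupled, analytically intractable cascade'' you worry about back to a scalar object: the entire dependence on the list length enters through the one-dimensional time change $\tau_d$, so comparing list lengths $d$ and $\ell$ reduces to comparing two scalar time changes run against the same fixed explicit function $x_1$. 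That reduction looks considerably more tractable than the monotone coupling, Lyapunov, or induction-on-$q$ strategies you propose for the full system, although it still leaves the decisive inequality open --- which is why the statement remains a conjecture.
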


We now sketch some ideas on how the techniques introduced in this paper could be used to attack Conjecture~\ref{conj:q_ge_1}. More details are given in Appendix~\ref{appx:q_ge_1}.

We fix $q\inN$ and do not include it in superscripts for brevity. Additionally, fix $n\inN$ as the number of schools, and $d\geq 1$ as the list length. It is now not enough to keep track just of $T^{n,d}_i$, now defined to be the number of schools with all seats taken after students $\set{1,2,\dots,i-1}$ have had their turn; instead, we must additionally keep track of the vector $S_i=(S_i^0,S_i^1,S_i^2,\dots,S_i^{q-1})$ counting the number of schools with $k=0,1,\dots,q-1$ seats taken at this point. (Note that $T^{n,d}_i$ is exactly the number of schools with $q$ seats taken). Given this information, we can now see the market dynamics. When a student $i$ has their turn, they pick $d$ schools uniformly at random from the set of all schools. The probability that the student picks in this list any school that has any seats remaining (and thus the student gets matched) is still the same one given in~\eqref{eq:discrete-match-prob}, and is approximated by the same expression as Lemma~\ref{lem:match-prob-approx}. If the student picks a school with a remaining seat, it will have $k$ seats taken with probability $S_i^k/\sum_{j=0}^{q-1}S_i^j$. The update rule for the counts is then as follows. If the school has, say $k$ seats taken we then have $S_{i+1}^k=S_i^k-1$, and if $k<q-1$ then $S_{i+1}^{k+1}=S_i^{k+1}+1$ and if $k=q-1$ then $T^{n,d}_{i+1}=T^{n,d}_i+1$.

In the case of $q=1$ we observed interesting behavior near $i\approx n$. Then with $q>1$ we would look for this at $i\approx nq$ due to the higher number of seats.

In Section~\ref{sec:continuous} we introduced a continuous deterministic model with $q=1$ and in Section~\ref{sec:continuous-to-discrete} we showed how the discrete random model approaches the continuous model in probability as $n\to\infty$. Similarly, it is possible to construct a continuous model for arbitrary $q=1,2,3,\dots$ and prove a similar limit result. 
We give an outline of this process in Appendix~\ref{appx:q_ge_1} and discuss strategies for solving the resultant differential equations. The missing piece to prove Conjecture~\ref{conj:q_ge_1} is an analogous of Theorem~\ref{thm:bound-cts}. Again, we refer to Appendix~\ref{appx:q_ge_1} for details.

\subsection{Discussion of the Discrete Model}\label{sec:discussion}

Models where agents have preference lists that are sampled uniformly at random from the list of all permutations (possibly, as in our paper, of bounded size) form a very common theoretical assumption for the study of two-sided matching markets, see for example~\cite{ashlagi2017unbalanced,immorlica2015incentives,pittel1989average,pittel1992likely,pittel2019likely}. One of the reason for the popularity of these models is that they often lead to tractable expressions. Results can then be verified to hold numerically for models with correlated preferences, or this can motivate further analytical study of such more cumbersome markets. analytically. 
Our approach follows this general line of inquiry, with our main result (Theorem~\ref{thm:main-discrete}) proved for the case where lists are sampled uniformly at random, and verified numerically in Section~\ref{sec:numerical} for preference lists sampled from more complex distributions.

The choice of Serial Dictatorship for the mechanism investigated in this paper is motivated by multiple facts. First, its relevance: it is broadly studied in theory (including its randomized version, see, e.g.~\cite{bampis2023online,bogomolnaia2001new}) and applied in practice to many settings, most famously house allocation problems~\cite{abdulkadirouglu1999house}. A Serial Dictatorship mechanism also naturally arises in the case of two-sided stable matching problems where schools have a shared (homogeneous) and strict preference list over the students which is, for instance, dictated by test scores, or by a lottery system where the lottery number determines priority~\cite{arnosti2023lottery}. In such cases, the Deferred Acceptance mechanism reduces to Serial Dictatorship, with students proceeding according to the shared order and picking their favorite remaining school. Both of these scenarios are common in school districts. For instance, in New York primary school matching the Department of Education has very limited information on students and so assigns priorities largely at random, and completely at random in cases such as city-wide gifted and talented schools (where the admissible students are restricted to those who are considered gifted or talented). On the other hand, the NYC specialized high schools are by law required to rank students uniquely by their score on the standardized Specialized High School Admissions Test (SHSAT) only~\cite{faenza2023discovering}.

Another reason to study the Serial Dictatorship mechanism is that it endows students with an obvious order, allowing granular statements to be made about each student uniquely (such as in our case discussing outcomes based on order). In many other mechanisms where randomness comes into play, students become indistinguishable, and one can hope for at most aggregate level statements and results.

\section{Proofs of Main Results}\label{sec:proofs}

\subsection{Connections Between the Discrete and Continuous Markets}\label{sec:pf-cts-to-discrete}

In this section we prove the following theorem, which characterizes the behavior of the rate at which students get matched to schools as $n\to\infty$ via a solution to a differential equation.

\thmXts*

We begin by computing an approximation to the probability of a student getting matched in the discrete case. 

\begin{lemma}[Probability of getting matched]\label{lem:match-prob-approx}
In the discrete market with $n$ schools, we have
\begin{align*}
\abs{\mbb{M^{n,d}_i=1\given T^{n,d}_i=k}-\br{1-(k/n)^d}}\leq d^2/n.
\end{align*}
That is, when $k$ schools have been matched prior to student $i$, then the probability that student $i$ gets matched to any school is approximated by $1-(k/n)^d$.
\end{lemma}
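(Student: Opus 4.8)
The plan is to start from the exact conditional probability in~\eqref{eq:discrete-match-prob} and show that the combinatorial term $\binom{k}{d}/\binom{n}{d}$ lies within $d^2/n$ of $(k/n)^d$. The first step is to rewrite the ratio of binomial coefficients as a telescoping product,
\[
\frac{\binom{k}{d}}{\binom{n}{d}} = \prod_{j=0}^{d-1}\frac{k-j}{n-j},
\]
valid when $k\geq d$, which I then compare term-by-term against $(k/n)^d = \prod_{j=0}^{d-1}(k/n)$.

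The main tool is the elementary product inequality: for any $a_0,\dots,a_{d-1}$ and $b_0,\dots,b_{d-1}$ in $[0,1]$, one has $\abs{\prod_j a_j - \prod_j b_j}\leq \sum_j\abs{a_j-b_j}$, which follows from a standard telescoping argument using that all partial products lie in $[0,1]$. Applying this with $a_j=k/n$ and $b_j=(k-j)/(n-j)$, both of which lie in $[0,1]$ since $0\leq k\leq n$, reduces the task to bounding each factor. A direct computation gives
\[
\abs{\frac{k}{n}-\frac{k-j}{n-j}} = \frac{j(n-k)}{n(n-j)}\leq \frac{j}{n},
\]
where the inequality uses $n-k\leq n-j$ (valid since $k\geq d>j$ in this regime). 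Summing over $j=0,\dots,d-1$ yields $\sum_{j=0}^{d-1} j/n = d(d-1)/(2n)\leq d^2/n$, which settles the case $k\geq d$ with room to spare.

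It remains to handle the case $k<d$, where~\eqref{eq:discrete-match-prob} gives conditional probability $1$, so the quantity to bound is simply $(k/n)^d$. For $d\geq 2$ I would use $k/n\leq 1$ to write $(k/n)^d\leq (k/n)^2\leq (d-1)^2/n^2\leq d^2/n$; for $d=1$ the condition $k<d$ forces $k=0$ and the difference vanishes. I do not anticipate a serious obstacle here: the only mildly delicate points are keeping the per-factor estimate tight enough that the sum collapses to $d(d-1)/(2n)$ rather than a looser bound, and verifying that the $k<d$ boundary regime is correctly covered.
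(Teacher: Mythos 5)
Your proof is correct, but it takes a genuinely different route from the paper's. You argue algebraically: rewrite $\binom{k}{d}/\binom{n}{d}$ as the falling-factorial product $\prod_{j=0}^{d-1}\frac{k-j}{n-j}$, invoke the telescoping inequality $\abs{\prod_j a_j-\prod_j b_j}\leq\sum_j\abs{a_j-b_j}$ for factors in $[0,1]$, and bound each factor discrepancy by $j/n$, summing to $d(d-1)/(2n)\leq d^2/n$; you then treat the boundary regime $k<d$ by hand. The paper instead argues probabilistically: it couples the true sampling \emph{without} replacement to an auxiliary sampling \emph{with} replacement (where each drawn school is taken with probability $k/n$ independently), observes that the exact match probability equals the with-replacement match probability conditioned on the event $X^c$ of no repeated school, bounds $\abs{\mbb{E}-\mbb{E\given X^c}}\leq\mbb{X}$ by the law of total probability, and controls the collision probability $\mbb{X}\leq d(d-1)/(2n)$ by a union bound over the $\binom{d}{2}$ pairs of draws. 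Both arguments yield exactly the same quantitative bound. What your approach buys is a fully elementary, self-contained computation with no probabilistic scaffolding, plus an explicit (and correct) treatment of the $k<d$ case, which the paper handles only implicitly since its coupling argument is uniform in $k$. What the paper's approach buys is brevity and a conceptual interpretation in the spirit of the principle of deferred decisions: the error term $d^2/n$ is literally the probability of a birthday-type collision among the $d$ draws, which is the kind of insight that generalizes more readily (e.g., to the multi-seat extension sketched in the appendix). One small point of care in your write-up: your per-factor inequality $\frac{j(n-k)}{n(n-j)}\leq\frac{j}{n}$ requires $j\leq k$, which indeed holds in the $k\geq d>j$ regime as you note, so no gap remains.
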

\begin{proof}
Recall that when it is student $i$'s turn, they get matched to any school with probability given in \eqref{eq:discrete-match-prob}, which for $k\geq d$ states $\mbb{M^{n,d}_i=1\given T^{n,d}_i=k}=1-\binom{k}{d}/\binom{n}{d}$. This is the complement of the probability that a random list of $d$ schools sampled from $n$ without replacement will overlap entirely with the list of $k$ taken schools. Consider now an alternative procedure where student $i$ samples a list of $d$ schools \emph{with replacement}, where each school has a probability of $k/n$ of being already taken by previous students. The probability of sampling a list with at least one school not taken by previous students conditioned on exactly $k$ schools having been matched to students $1,\dots, i-1$ would be $1-(k/n)^d$.

It remains to bound the difference between sampling the preference lists with replacement and sampling them without replacement. Observe that the outcomes differ exactly in the case that we choose at least one pair of schools that are the same while sampling with replacement. Otherwise the outcomes coincide. Formally, consider the probability space generated by the student sampling with replacement, assuming $k$ schools are matched to students $1,\dots,i-1$. Let $E$ be the event that student $i$ gets matched to any school and let $X$ be the event that the sample contains a repeated school. We have $\mbb{M^{n,d}_i=1\given T^{n,d}_i=k}=\mbb{E\given X^c}$, and $\mbb{E}=1-(k/n)^d$. Now with the law of total probability, write
\begin{align*}
\abs{\mbb{E}-\mbb{E\given X^c}}
&=\abs{\mbb{E\given X}\mbb{X}+\mbb{E\given X^c}(1-\mbb{X})-\mbb{E\given X^c}} \\
&=\mbb{X}\abs{\mbb{E\given X}-\mbb{E\given X^c}} \\
&\leq\mbb{X}.
\end{align*}
Finally we bound $\mbb{X}$, the probability that a sample drawn with replacement contains duplicate items. Observe that for each of the $d(d-1)/2$ pairs of positions in the sample there is exactly a $1/n$ probability that that pair is the same. Using the union bound, this yields $\mbb{X}\leq\frac{d(d-1)}{2n}\leq d^2/n$.
\end{proof}

Our aim is to show that the discrete market converges to the continuous one as $n$ grows large. 
We assume here that reader is familiar with the theory of ODE and Markov processes. A gentle introduction to the relevant background can be found in Appendix~\ref{app:background-ODE-MT}.

To connect the behavior of the discrete market with that of the continuous market, we apply the following theorem. See \cite[Theorem~17.3.1]{kroese2013handbook} for this version and \cite{ethier2009markov} for a proof. 

\begin{theorem}[A Functional Law of Large Numbers]\label{thm:flln}
Suppose $\set{J^n_t}_{n=1,2,\dots}$ is a family of continuous time Markov chains on finite state spaces where for $t\geq 0$, $J^n_t$ takes values in $\mcS^n\subseteq\bbZ$ and has transition rate matrix $Q^n=(q^n(i,j);i,j\in\mcS^n)$. Suppose that there is a subset $\mcD\subseteq\bbR$ and a family $\set{f^n}_{n=1,2,\dots}$ of functions with $f^n:\mcD\cross\bbZ\to\bbR$ which are bounded and continuous in the first argument such that, for each $i\in\mcS^n$ and $k\in\mathbb{Z}\setminus\{0\}$ such that $i+k\in\mcS^n$, we have
\begin{align}\label{eq:qs-and-fs}
q^n(i,i+k)=nf^n(i/n,k).
\end{align}
Define $g^n(x)=\sum_{k\in\bbZ} kf^n(x,k)$ for $x\in\mcD$, and suppose there exists a Lipschitz continuous function $g:\mcD\to\bbR$ such that the $g^n$ converge uniformly to $g$ on $\mcD$. Suppose $\lim_{n\to\infty}n^{-1}J^n_0=x_0$ for some $x_0$. Then there exists a unique deterministic trajectory $x(t)$ satisfying $x(0)=x_0$ and $x'(t)=g(x(t))$, $x(t)\in\mcD$, $t\in[0,T]$, and $\{n^{-1}J^n_t\}_{t\geq 0}$ converges uniformly in probability on $[0,T]$ to $x(t)$. 
\end{theorem}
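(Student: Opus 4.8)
The plan is to follow the standard fluid-limit (Kurtz) strategy: derive a semimartingale decomposition of the scaled process $X^n_t := n^{-1}J^n_t$, show that its martingale part is asymptotically negligible, and then close the gap to the deterministic trajectory via Gronwall's inequality. The existence and uniqueness of $x(t)$ on $[0,T]$ is the easy starting point: since $g$ is Lipschitz on $\mcD$, the Picard--Lindel\"of theorem yields a unique absolutely continuous solution to $x'(t)=g(x(t))$, $x(0)=x_0$, and one checks that it remains in $\mcD$ on $[0,T]$ (shrinking $T$ if necessary).

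The first substantive step is to invoke Dynkin's formula. For a continuous-time Markov chain with rate matrix $Q^n$ and any bounded $h$, the process $h(J^n_t)-h(J^n_0)-\int_0^t (Q^n h)(J^n_s)\,ds$ is a martingale, where $(Q^n h)(i)=\sum_k q^n(i,i+k)\br{h(i+k)-h(i)}$. Taking $h$ to be the identity and using the scaling $q^n(i,i+k)=n f^n(i/n,k)$ gives $(Q^n h)(i)=n g^n(i/n)$. Dividing the resulting identity by $n$ produces the decomposition
$$
X^n_t = X^n_0 + \int_0^t g^n(X^n_s)\,ds + n^{-1}M^n_t,
$$
where $M^n$ is a martingale with $M^n_0=0$. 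This isolates a drift governed by $g^n$ from a stochastic remainder $n^{-1}M^n$.

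The second step controls that remainder. The predictable quadratic variation of $M^n$ is $\abr{M^n}_t=\int_0^t\br{\sum_k k^2\,q^n(J^n_s,J^n_s+k)}\,ds = n\int_0^t \sum_k k^2 f^n(X^n_s,k)\,ds$, so $\abr{n^{-1}M^n}_t = n^{-1}\int_0^t \sum_k k^2 f^n(X^n_s,k)\,ds$. Because the jumps of the relevant chains are uniformly bounded, $\sum_k k^2 f^n(x,k)$ is bounded by a constant $C$, whence $\abr{n^{-1}M^n}_T\le CT/n$. Doob's $L^2$ maximal inequality, together with $\meb{(n^{-1}M^n_T)^2}=\meb{\abr{n^{-1}M^n}_T}$, then gives $\meb{\sup_{t\le T}(n^{-1}M^n_t)^2}\le 4\,\meb{\abr{n^{-1}M^n}_T}\le 4CT/n\to 0$, so $\sup_{t\le T}\abs{n^{-1}M^n_t}\to 0$ in probability.

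Finally I subtract the ODE and apply Gronwall. Writing $g^n(X^n_s)-g(x(s))=\br{g^n(X^n_s)-g(X^n_s)}+\br{g(X^n_s)-g(x(s))}$, bounding the first bracket by $\norm{g^n-g}_\infty$ and the second by $L\abs{X^n_s-x(s)}$, I obtain
$$
\abs{X^n_t-x(t)}\le A^n + L\int_0^t \abs{X^n_s-x(s)}\,ds, \qquad A^n:=\abs{X^n_0-x_0}+T\norm{g^n-g}_\infty+\sup_{s\le T}\abs{n^{-1}M^n_s}.
$$
Gronwall's inequality yields $\sup_{t\le T}\abs{X^n_t-x(t)}\le A^n e^{LT}$, and each of the three terms in $A^n$ tends to $0$ in probability by the initial-condition hypothesis, the uniform convergence $g^n\to g$, and the martingale bound, respectively. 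The hard part will be the localization: the Lipschitz bound on the second bracket requires $X^n_s\in\mcD$, which the chain need not respect a priori, so I would run the entire argument up to the stopping time $\tau^n=\inf\{t:X^n_t\notin\mcD\}$ (or the first exit from a fixed tube around the trajectory) and show $\tau^n>T$ with probability tending to $1$. Managing this localization cleanly, along with verifying the uniform second-moment bound on the jumps, is where the care is needed.
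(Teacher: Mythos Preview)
The paper does not actually prove this theorem; it is quoted as a known result, with the statement attributed to \cite[Theorem~17.3.1]{kroese2013handbook} and the proof deferred to \cite{ethier2009markov}. So there is no in-paper argument to compare against.

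That said, your sketch is precisely the standard Kurtz-type fluid-limit proof that those references carry out: Dynkin's semimartingale decomposition of the scaled chain, an $L^2$ bound on the martingale part via its predictable quadratic variation and Doob's maximal inequality, and then a Gronwall closure using the Lipschitz constant of $g$ together with the uniform convergence $g^n\to g$. The two caveats you flag---the need for a uniform second-moment bound on the jump sizes (here guaranteed because in the paper's application the chain only jumps by $+1$), and the localization to keep $X^n_s$ inside the domain where the Lipschitz estimate is valid---are exactly the technical points that a full proof (as in Ethier--Kurtz) must handle, so your identification of ``where the care is needed'' is accurate.
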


We are now ready to prove Theorem~\ref{thm:xts}.

\begin{proof}[Proof of Theorem \ref{thm:xts}]
As in the statement of the theorem, fix $d\geq 1$ and define $p(x)=1-x^d$ for $x\in[0,1]$. 
For $t \geq 0$, define $X^n_t=T^{n,d}_{\floor{nt}}$. Letting $\Delta=n^{-1}$ for convenience, note that $\{X^{n}_t\}_{t=0,\Delta,2\Delta,\dots}$ is a discrete time Markov chain on the state space $\mcS^n=\set{0,1,\dots,n}$ with transition probabilities
\begin{align*}
\mbb{X^n_{t+\Delta}=j\given X^n_t=i}&=\piecewise{p^n_i,& j=i+1,\\ 1-p^n_i,& j=i, \\ 0,& \text{otherwise}.}
\end{align*}
where $p^n_i=1-(i/n)^d\pm O(d^2/n)$ by Lemma~\ref{lem:match-prob-approx}.

Starting from $\{X_t^n\}_{t=0,\Delta,2\Delta}$, we will now construct a sequence of continuous time Markov chains that satisfy the conditions of Theorem~\ref{thm:flln}, then bring back the result to the discrete time chain.

Associate to each discrete time chain $\{X^n_t\}_{t=0,\Delta,2\Delta,\dots}$, a coupled continuous time Markov chain $\{J^n_t\}_{t\geq 0}$ as follows: let $H^n_t$ be a homogeneous Poisson process with rate $\lambda=n$, and let $\{J^n_t\}_{t\geq 0}$ be defined by $J^n_t=X^n_{\Delta H^n_t}$. This is a well-known embedding technique (see Theorem~\ref{thm:embedding} in Appendix~\ref{appx:mc} for a statement and discussion) and yields the transition rate matrix
\begin{align*}
q^n(i,j)&=\piecewise{n p^n_i,& j=i+1,\\ -n p^n_i,& j=i, \\ 0,& \text{otherwise}.}
\end{align*}
Set $f^n(x,1)=p^n_{xn}$ and $f^n(x,\cdot)=0$ otherwise. We now verify that we can apply Theorem~\ref{thm:flln}. First, let $\mcS^n=\{0,1,2,\dots,n\}$ and ${\mcD}=[0,1]$. Note $q^n,f^n$ satisfy~\eqref{eq:qs-and-fs}: we only need to check $k=1$ in~\eqref{eq:qs-and-fs}, which we verify as
\begin{align*}
q^n(i,i+1)&=np^n_i \\
&=nf^n(i/n,1).
\end{align*}
For each $n$, $f^n$ is clearly bounded and continuous in the first argument. We moreover have for all $x\in[0,1]$
\begin{align*}
g^n(x)&=\sum_{k\in\bbZ}kf^n(x,k) \\
&=p^n_{xn} \\
&=p(x)\pm O(d^2/n), & \hbox{(by Lemma~\ref{lem:match-prob-approx})}
\end{align*}
hence $g^n\to p$ uniformly, since $\abs{g^n(x)-p(x)}= O(d^2/n)$ is independent of $x$. Observe that $p$ is Lipschitz-continuous with Lipschitz-constant $d$. Since $J^n_0=0$, we have $n^{-1}J^n_0=0$.

We can therefore apply Theorem~\ref{thm:flln} and deduce that $\{n^{-1}J^n_t\}_{t \geq 0}$ converges uniformly in probability to the unique solution $x_d(t)$ of the initial value problem~\eqref{eq:ivp}.

To carry over convergence uniformly in probability to the discrete Markov chain $\{X_t^n\}_{t=0,\Delta, 2\Delta,\dots}$, fix some $s\geq 0$. Recall that, for $\tau \geq 0$, $H^n_\tau$ is the count of events in the underlying Poisson process up to time $\tau$. Since every event of $H^n_t$ corresponds to to either a unit jump in $\{X^n_t\}_{t=0,\Delta,2\Delta,\dots}$ (that is, $X^n_{t+\Delta}=X^n_{t}+1$) or no transition (that is, $X^n_{t+\Delta}=X^n_{t}$), we have for each $t \geq 0$.
\begin{align*}
\abs{X^n_t-J^n_t}&=\abs{X^n_t-X^n_{\Delta H^n_t}} \\
&\leq\abs{\floor{nt}-H^n_t} \\
&\leq1+\abs{nt-H^n_t}.
\end{align*}
This yields

\begin{align}
\sup_{t\in[0,s]}\abs{n^{-1}X^n_t-x_d(t)}&\leq\sup_{t\in[0,s]}\abs{n^{-1}X^n_t-n^{-1}J^n_t}+\sup_{t\in[0,s]}\abs{n^{-1}J^n_t-x_d(t)} \nonumber \\
&\leq\frac{1}{n}+\sup_{t\in[0,s]}\abs{t-n^{-1}H^n_t}+\sup_{t\in[0,s]}\abs{n^{-1}J^n_t-x_d(t)}.\label{eq:split-soup}
\end{align}

Note that $n^{-1}H^n_t-t$ is a martingale\footnote{For $s\geq t$, using the independence of increments property and the fact that $\meb{H^n_{s-t}}=n(s-t)$, we have
\begin{align*}
\meb{n^{-1}H^n_s-s\given H^n_t}-(n^{-1}H^n_t-t)=\meb{n^{-1}(H^n_s-H^n_t)\given H^n_t}-(s-t)=\meb{n^{-1}(H^n_{s-t})}-(s-t)=0.
\end{align*}}, and applying Doob's martingale inequality (see Theorem~\ref{thm:doob} of Appendix~\ref{appx:mc}) we have
\begin{align}
\mbb{\sup_{t\in[0,s]}\abs{n^{-1}H^n_t-t}\geq\eps}&\leq\eps^{-2}\meb{(n^{-1}H^n_s-s)^2} \nonumber \\
\nonumber \\
&=\frac{s}{\eps^2n}. \label{eq:Poisson-fishing}
\end{align}
To see that the latter inequality holds, observe  $\meb{n^{-1}H^n_s}=s$, so $\meb{(n^{-1}H^n_s-s)^2}$ equals the variance of $n^{-1}H^n_s$. Since $H_s^n$ has a Poisson distribution with rate $ns$,~\eqref{eq:Poisson-fishing} follows.

Note that the second (by~\eqref{eq:Poisson-fishing}) and third (by Theorem~\ref{thm:flln}) terms of the right-hand side of~\eqref{eq:split-soup} vanish in probability as $n\to\infty$. So we must have that $n^{-1}T^{n,d}_{\floor{tn}}\to x_d(t)$ uniformly in probability as $n\to\infty$. The convergence in $r$-mean follows since $n^{-1}T^{n,d}_{\floor{tn}}$ is bounded, see Lemma~\ref{lem:r-mean-conv} in Appendix~\ref{app:background-ODE-MT}.
\end{proof}

We connect the solution to preferences of students via the following lemma.

\lemProbToXPrime*
\begin{proof}

Fix $d \in \mathbb{N}$, $t,\eps>0$. We will show that there exists $N\inN$ such that for $n>N$,
\begin{align*}
\abs{\mbb{M^{n,d}_{\floor{tn}}=1}-x_d'(t)}<\eps.
\end{align*}
To proceed, observe that Lemma~\ref{lem:match-prob-approx} implies that there exists $N_1$ be such that for $n>N_1$ and all $x\in[0,1]$,
\begin{align*}
\abs{\mbb{M^{n,d}_{\floor{tn}}=1\given n^{-1}T^{n,d}_{\floor{tn}}=x}-(1-x^d)}& \leq \frac{\eps}{4}.
\end{align*}
Next, note that $1-x^d$ is continuous in $x$, so there exists some $\dta>0$ such that for all $y$ with $\abs{y-x}\leq\dta$, $\abs{(1-x^d)-(1-y^d)}<\eps/4$. Combining these two facts, we have that
\begin{align*}
\abs{\mbb{M^{n,d}_{\floor{tn}}=1\given n^{-1}T^{n,d}_{\floor{tn}}=y,\abs{y-x}\leq\dta}-(1-x^d)} & \leq \frac{\eps}{4} + \frac{\eps}{4} = \frac{\eps}{2}.
\end{align*}
By Theorem~\ref{thm:xts}, there exists now $N_2\inN$ such that for $n>N_2$,
\begin{align*}
\mbb{\abs{n^{-1}T^{n,d}_{\floor{tn}}-x_d(t)}>\dta}&<\frac{\eps}{2}.
\end{align*}
Let $N=\max\set{N_1,N_2}$, then putting all these together with the law of total probability, we have for $n>N$
\begin{align*}
\abs{\mbb{M^{n,d}_{\floor{tn}}=1}-x_d'(t)}&=\biggl|
\mbb{M^{n,d}_{\floor{tn}}=1\given\abs{n^{-1}T^{n,d}_{\floor{tn}}-x_d(t)}\leq\dta}\mbb{\abs{n^{-1}T^{n,d}_{\floor{tn}}-x_d(t)}\leq\dta} \\
&\qquad+\mbb{M^{n,d}_{\floor{tn}}=1\given\abs{n^{-1}T^{n,d}_{\floor{tn}}-x_d(t)}>\dta}\mbb{\abs{n^{-1}T^{n,d}_{\floor{tn}}-x_d(t)}>\dta}-x_d'(t)
\biggr| \\
&\leq\abs{\mbb{M^{n,d}_{\floor{tn}}=1\given n^{-1}T^{n,d}_{\floor{tn}}=y,\abs{y-x_d(t)}\leq\dta}+\mbb{\abs{n^{-1}T^{n,d}_{\floor{tn}}-x_d(t)}>\dta}-x_d'(t)} \\
&<\abs{1-x_d(t)^d+\frac{\eps}{2}+\frac{\eps}{2}-x_d'(t)} \\
&=\eps.
\end{align*}
\end{proof}

\subsection{Continuous Market}\label{sec:pf-main-result}
In this section we prove that the match rate in the continuous market increases with $d$ for all students $t\in[0,1]$. 

\thmBoundCts*
\begin{proof}
We use the notation $x(d,t)$ and simplify to $x$ when it is clear from the context. We wish to show that for all $1\leq d<\ell$
\begin{align*}
\frac{\partial}{\partial t}x(d,t)\leq \frac{\partial}{\partial t}x(\ell,t),
\end{align*}
which follows if for all $d\in[1,\infty)$ and $t\in[0,1]$ we have
\begin{align*}
\frac{\partial^2}{\partial d\partial t}x(d,t)\geq 0.
\end{align*}
Recall that the ODE is defined by $x(d,0)=0$, $\frac{\partial}{\partial t}x(d,t)=1-x^d$. Note that $x$ is twice continuously differentiable on its domain so its partial derivatives commute. We claim that $x(d,t)$ is a solution to this differential equation if and only if it solves the integral equation
\begin{align}\label{eq:timpl}
t&=\int_0^{x(d,t)}\frac{1}{1-u^d}\,du.
\end{align}
To observe this, note that $x(d,0)=0$, and by implicitly differentiating~\eqref{eq:timpl} with respect to $t$, we recover the required condition on the derivative of $x$ with respect to $t$:
\begin{align*}
\frac{\partial x}{\partial t}=1-x(d,t)^d.
\end{align*}
Now we can use the Leibniz integral rule to implicitly differentiate \eqref{eq:timpl} with respect to $d$ and obtain
\begin{align*}
0&=\frac{\partial}{\partial d}\br{\int_0^{x(d,t)}\frac{1}{1-u^d}\,du} \\
&=\int_0^{x(d,t)}\frac{\partial}{\partial d}\br{\frac{1}{1-u^d}}\,du+\frac{1}{1-x(d,t)^d}\frac{\partial x}{\partial d}.
\end{align*}
Rearranging for $\frac{\partial x}{\partial d}$ and computing the derivative in the integrand yields

\begin{align}
\frac{\partial x}{\partial d}&=-(1-x(d,t)^d)\int_0^{x(d,t)}\frac{u^d\log u}{(1-u^d)^2}\,du.\label{eq:dxdd} \\
&=-\frac{\partial x}{\partial t}\int_0^{x(d,t)}\frac{u^d\log u}{(1-u^d)^2}\,du.\nonumber
\end{align}

Note that $\frac{\partial x}{\partial t}$ is positive, and the integrand is negative (since $0\leq u\leq x\leq 1$, so $\log u\leq0$). This means $\frac{\partial x}{\partial d}\geq0$ (intuitively, fixing $t$ and increasing $d$ will increase the number of schools taken). Next compute
\begin{align*}
\frac{\partial^2x}{\partial d\partial t}&=\frac{\partial}{\partial d}(1-x(d,t)^d) \\
&=-x^d\log x-dx^{d-1}\frac{\partial x}{\partial d} \\
&=-x^{d-1}\br{x\log x+d\frac{\partial x}{\partial d}}.
\end{align*}

We want $\frac{\partial^2x}{\partial d\partial t}\geq 0$, which means we need $x\log x+d\frac{\partial x}{\partial d}\leq 0$. That is,

\begin{align*}
x\log x-d(1-x^d)\int_0^x\frac{u^d\log u}{(1-u^d)^2}\,du&\leq 0 \\
\iff \frac{x\log x}{1-x^d}-\int_0^x\frac{du^d\log u}{(1-u^d)^2}\,du&\leq 0.
\end{align*}

Expressing both terms as integrals, we get

\begin{align*}
\frac{x\log x}{1-x^d}-\int_0^x\frac{du^d\log u}{(1-u^d)^2}\,du&=\int_0^x\br{\frac{1}{1-u^d}+\frac{du^d\log u}{(1-u^d)^2}+\frac{\log u}{1-u^d}}\,du-\int_0^x\frac{du^d\log u}{(1-u^d)^2}\,du \\
&=\int_0^x\frac{1+\log u}{1-u^d}\,du.
\end{align*}

The rest of the proof is completed in Theorem~\ref{thm:ig} of Appendix~\ref{appx:lemmas_and_proofs} where we show that this integral is non-positive for $x(d,t)$ with $t\leq 1$.
\end{proof}

\subsection{Discrete Market: Probability of Being Matched}\label{sec:proof:discrete-market-probability}

In this section we prove the main results in the discrete market, often bringing back results from the continuous market via an application of Lemma~\ref{lem:prob-to-xprime}.

The following theorem succinctly states that all students in a balanced market (where $i\leq n$) prefer longer lists for $n$ large.

\thmMainDiscrete*
\begin{proof}
This follows directly from Theorem~\ref{thm:bound-cts} and Lemma~\ref{lem:prob-to-xprime}.
\end{proof}

\thmLemmaSchoolsLike*
\begin{proof}
We have 
\begin{align*}
\mbb{H_j^{n,d}(i)=1} & 
= \frac{1}{n} \sum_{r=1}^n \mbb{H^{n,d}_r(i)=1} & \hbox{(by symmetry)}
\\ & = \frac{1}{n} \sum_{r=1}^n \mathbb{E}(H^{n,d}_r({i})) \\ 
& = \frac{1}{n} \meb{\sum_{r=1}^n H^{n,d}_r(i)} \\ 
&=\frac{1}{n}\meb{T^{n,d}_i}. 
\end{align*}

Indeed, the random variable $\sum_{r=1}^n H^{n,d}_r(i)$ counts the number of schools matched to a student in positions $\{1,2,\dots,i-1\}$ in the the discrete market with $n$ schools, where every student has a preference list of length $d$. This equals the number of students from $\{1,2,\dots, i-1\}$ that are matched in that market, which is precisely $T^{n,d}_i$.

Recall from~\eqref{eq:discrete-match-prob} that
\begin{align*}
\mbb{M^{n,d}_i=1\given T^{n,d}_i=k}=\piecewise{1-\frac{\binom{k}{d}}{\binom{n}{d}}, & k\geq d, \\ 1, & \text{otherwise}.}
\end{align*}
It is clear from this expression that the left hand side monotonically increases in $d$: for all $i=1,2,3,\dots$, a longer list length increases the probability that student $i$ is matched to any school. Since $T_{i}^{n,d}=\sum_{\bar \imath =1}^{i-1} M_{\bar \imath}^{n,d}$ we deduce that $T^{n,\ell}_i\geq T^{n,d}_i$ in the sense of stochastic dominance, and therefore in expectation. 
\end{proof}

We next state a lemma that we prove in Appendix~\ref{appx:lemmas_and_proofs}.

\begin{restatable}{lemma}{lemXdBounds}\label{lem:xd-bounds}
For $d \geq 1$, we have $\br{\frac{2d+1}{4d+1}}^{1/d}\leq x(d,1)\leq\br{\frac{d+1}{2d+1}}^{1/d}$.
\end{restatable}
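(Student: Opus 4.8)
The plan is to work from the integral characterization~\eqref{eq:timpl} of the solution. Evaluating it at $t=1$ gives $1=\int_0^{x(d,1)}\frac{du}{1-u^d}$. Writing $G(x)=\int_0^x\frac{du}{1-u^d}$, the map $G$ is continuous and strictly increasing on $[0,1)$, and both candidate endpoints lie in $[0,1)$ since their $d$-th powers, $\frac{2d+1}{4d+1}$ and $\frac{d+1}{2d+1}$, are strictly below $1$ (so in particular $G$ is finite at them, and $x(d,1)<1$ gives $G(x(d,1))=1$). Hence, by monotonicity of $G$, the two desired inequalities are equivalent to the single-point estimates
\begin{align*}
G\br{\br{\tfrac{2d+1}{4d+1}}^{1/d}}\leq 1\leq G\br{\br{\tfrac{d+1}{2d+1}}^{1/d}},
\end{align*}
so the whole problem reduces to two-sided bounds on $G$ at these explicit points.

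To estimate $G$, I would expand the integrand as a geometric series, $\frac{1}{1-u^d}=\sum_{k\geq0}u^{dk}$ for $u\in[0,1)$, and integrate term by term to obtain
\begin{align*}
G(x)=\sum_{k\geq0}\frac{x^{dk+1}}{dk+1}=x\sum_{k\geq0}\frac{y^k}{dk+1},\qquad y:=x^d.
\end{align*}
The virtue of this form is that the $d$-th power of each candidate is a clean rational function of $d$: the upper point has $y_U=\frac{d+1}{2d+1}$ with $1-y_U=\frac{d}{2d+1}$, and the lower point has $y_L=\frac{2d+1}{4d+1}$ with $1-y_L=\frac{2d}{4d+1}$. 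I would then bound the series $\sum_{k\geq0}\frac{y^k}{dk+1}$ on both sides by comparing $\frac{1}{dk+1}$ with harmonic terms. For a lower bound, $\frac{1}{dk+1}\geq\frac{1}{(d+1)k}$ for $k\geq1$ yields $\sum_{k\geq0}\frac{y^k}{dk+1}\geq 1-\frac{\log(1-y)}{d+1}$; for an upper bound, keeping the $k=0,1$ terms exactly and using $\frac{1}{dk+1}\leq\frac{1}{dk}$ for $k\geq2$ yields $\sum_{k\geq0}\frac{y^k}{dk+1}\leq 1+\frac{y}{d+1}+\frac{-\log(1-y)-y}{d}$. Substituting the explicit values of $y$ and $1-y$ turns both sides into elementary expressions in $d$ involving only $\log\br{2+\tfrac1d}$ and $\log\br{2+\tfrac1{2d}}$.

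After this substitution, the lemma reduces to two single-variable inequalities in $d\in[1,\infty)$: namely
\begin{align*}
\br{\tfrac{d+1}{2d+1}}^{1/d}\br{1-\tfrac{\log(1-y_U)}{d+1}}\geq1
\qquad\text{and}\qquad
\br{\tfrac{2d+1}{4d+1}}^{1/d}\br{1+\tfrac{y_L}{d+1}+\tfrac{-\log(1-y_L)-y_L}{d}}\leq1 .
\end{align*}
I expect the second (lower-bound) inequality to hold comfortably: an asymptotic expansion in $1/d$ shows the product equals $1-\tfrac{(\log 2)^2}{2}\,d^{-2}+O(d^{-3})$, strictly below $1$ with healthy margin. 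The main obstacle is the first (upper-bound) inequality, which is a genuinely tight race: both factors tend to $1$ as $d\to\infty$, the product equals $1+\br{1-\log 2-\tfrac12(\log 2)^2}d^{-2}+O(d^{-3})$, and the leading gap coefficient $1-\log 2-\tfrac12(\log 2)^2\approx0.067$ is positive but very small.

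Establishing this last inequality rigorously for \emph{every} real $d\geq1$ — not merely asymptotically and at sample points — is where the real work lies. I would handle it by taking logarithms to clear the $1/d$ power, reducing the claim to showing that a single smooth function of $d$ keeps a fixed sign on $[1,\infty)$, and then verifying that sign via monotonicity together with the behavior at the endpoint $d=1$ and as $d\to\infty$. If the crude two-term truncation above turns out to leave too little room near the tight regime, the same series representation permits retaining one more exact term (or sharpening the harmonic comparison), which enlarges the margin at the cost of slightly messier elementary bookkeeping but no new ideas.
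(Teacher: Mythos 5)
Your reduction is exactly the paper's: the integral characterization $1=\int_0^{x(d,1)}\frac{du}{1-u^d}$, monotonicity of $G$ to convert the two bounds on $x(d,1)$ into two single-point estimates on $G$, the geometric-series expansion $G(z)=z\sum_{k\geq0}\frac{y^k}{dk+1}$ with $y=z^d$, the comparison $\frac{1}{dk+1}\geq\frac{1}{(d+1)k}$ for $k\geq1$ on one side, and keeping $k=0,1$ exact with $\frac{1}{dk+1}\leq\frac{1}{dk}$ for $k\geq2$ on the other. After the substitution $q=1/d$, your two displayed inequalities are precisely the paper's Lemma~\ref{lemma:first_ineq} and Lemma~\ref{lemma:technical_2}, and your asymptotic expansions (gap $\approx(1-\log 2-\tfrac{1}{2}\log^2 2)\,d^{-2}$ on the tight side, $-\tfrac{1}{2}(\log^2 2)\,d^{-2}$ on the other) are correct.

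The genuine gap is that you never prove these two single-variable inequalities for all $d\geq1$; you verify them only asymptotically and then gesture at ``taking logarithms \dots\ monotonicity together with the behavior at the endpoint $d=1$ and as $d\to\infty$.'' That is not yet an argument, and it is exactly where the paper spends all of its effort. Because the margin is only $O(q^2)$ with leading constant $\approx0.067$ near $q=0$, the inequality you must win is exact to second order at $q=0$: any first-order bound on the logarithms (e.g.\ $\log(1+x)\leq x$) is too lossy, and monotonicity of the difference is itself a claim of comparable difficulty. The paper gets through this with second-order-sharp rational bounds such as $\log(1+x)\geq\frac{2x}{2+x}$, $\log(2+x)\geq\log 2+\frac{x}{2}-\frac{x^2}{4(1+x)}$, and $-\frac{\log 2}{1+q\log 2}\leq\log\frac{q+1}{q+2}\leq-\frac{2\log^2 2}{q+2\log 2}$ (Lemma~\ref{lemma:bound_log_A}, Claims~\ref{claim:ineqs2} and~\ref{claim:two-ineqs}), which reduce each claim to a polynomial inequality settled by sign/discriminant arguments. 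Note also that your proposed fallback---retaining one more exact series term or sharpening the harmonic comparison---attacks the truncation of $G$, which is not the bottleneck (the paper uses the identical truncation and it suffices); the bottleneck is the resulting elementary inequality in $q$, and that part your proposal leaves undone.
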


\thmCrossingDiscrete*
\begin{proof}
Consider $y(t)=x_\ell(t)-x_d(t)$, it is clear that $\lim_{t\to\infty}y(t)=0$. By Lemma~\ref{lem:xd-bounds}, it is also clear that $y(1)>0$. It therefore follows that there exists some $t\geq 1$ such that $y'(t)<0$. The thesis then follows from Lemma~\ref{lem:prob-to-xprime}. 
\end{proof}

We finally prove a corollary of Lemma~\ref{lem:xd-bounds} that is an intriguing and surprising result.
 
\thmBoundDiscrete*
\begin{proof}
From Lemma~\ref{lem:prob-to-xprime} and~\eqref{eq:ivp}, we have $\mbb{M_n^{n,d}=1}\to x'_d(1)= 1-x_d(1)^d$. The claim then follows from the bounds in Lemma~\ref{lem:xd-bounds}.
\end{proof}

\subsection{Discrete Market: Impact on Rank}\label{sec:proofs-impact-on-rank}

In this section, we prove the following lemma.

\lemWorstCaseRank*
\begin{proof}
Fix $d\inN$ and $k\leq d$. Similar to Lemma~\ref{lem:match-prob-approx}, it is easy to show that probability of getting matched to the first $k$ schools is given by
\begin{align*}
\mbb{K^{n,d}_i\leq k\given T^{n,d}_i=k}&=1-(k/n)^k\pm O(d^2/n).
\end{align*}
By Theorem~\ref{thm:xts}, we have that $n^{-1}T^{n,d}_i\stackrel{\mcP}{\to}x_d(i/n)$, so
\begin{align*}
\mbb{K^{n,d}_i\leq k}-\mbb{K^{n,d+1}_i\leq k}&\stackrel{\mcP}{\to}x_{d+1}(i/n)^k-x_d(i/n)^k.
\end{align*}
Consider the function $x_{d+1}(t)^k-x_d(t)^k$ for $t\in[0,1]$, and observe that
\begin{align*}
\frac{\partial}{\partial t}\br{x_{d+1}(t)^k-x_d(t)^k}&=k(x_{d+1}(t)^{k-1}x_{d+1}'(t)-x_d(t)^{k-1}x_d'(t)) \\
&\geq k(x_d(t)^{k-1}x_d'(t)-x_d(t)^{k-1}x_d'(t)) \\
&=0.
\end{align*}
The second last line follows from Theorem~\ref{thm:bound-cts} since $x_{d+1}'(t)\geq x_d'(t)$ and $x_{d+1}(t)\geq x_d(t)$ (from the former since $x_{d+1}(0)-x_d(0)=0$).

We therefore have that $x_{d+1}(i/n)^k-x_d(i/n)^k$ is increasing in $i$, so applying Lemma~\ref{lem:xd-bounds}, we have
\begin{align*}
\max_{i\leq n}\br{x_{d+1}(i/n)^k-x_d(i/n)^k}&=x_{d+1}(1)^k-x_d(1)^k \\
&\leq\br{\frac{d+2}{2d+3}}^{k/(d+1)}-\br{\frac{2d+1}{4d+1}}^{k/d},
\end{align*}
which completes the proof.
\end{proof}

\section{Numerical Experiments}\label{sec:numerical}

We perform numerical experiments to evaluate how well our results generalize beyond the one-to-one Serial Dictatorship mechanism with uniformly random preferences.

\paragraph{Schools not being sampled uniformly at random.} In our first set of numerical experiments we investigate the impact of students not picking their preference list uniformly at random from all schools. In particular, we now assume there is some probability distribution $p:\set{1,\dots,n}\to[0,1]$ with $\sum_{j=1}^n p(j)=1$, with $p(j)$ dictating the probability that students sample school $j$ at their turn. The case of uniform sampling occurs when $p\equiv 1/n$.

We experiment with five different distributions for $p$: a uniform distribution ($p_1$), two types of Pareto-like distributions where students' preference of schools is concentrated at certain schools ($p_2$ and $p_3$ with low, and high concentration respectively), a setting of two classes (high demand and low demand) of schools ($p_4$), and finally a ``degenerate'' distribution where students sample from the first half of schools almost exclusively ($p_5$). Note that we would expect the last distribution to behave as if we had half as many schools, so that only students with $i\leq n/2$ would be guaranteed to prefer longer lists. The distributions are as follows:

\begin{align*}
p_1(j)&=\frac{1}{n}, \qquad p_2(j)\propto\frac{1}{2+j/n}, \qquad p_3(j)\propto\frac{1}{(1+j/n)^{10}}, \\
& \\
p_4(j)&\propto1+3\indc_{\set{j\leq 200}}, \qquad p_5(j)=\piecewise{2/n-1/100,&j\leq 500,\\ 1/100,&j>500.}
\end{align*}

We perform all numerics with $n=1000$ schools, and compare the cases of $d$ being $1,2,4,10$, and $20$. All experiments are done with 100 000 repetitions.

Figure~\ref{fig:nonuniform-numerics} in Appendix~\ref{appx:numerical-experiments} shows the outcome of these experiments. We graph first the distribution of $p(\cdot)$ for all schools on the left, then the proportion of schools taken by students up to student $i$ in the middle, and the probability that a given student $i$ is matched to a school on the right.

Our main thesis -- that all students in balanced markets prefer long lists -- holds under all distributions other than the degenerate one (which is almost equivalent to having half as many schools). In particular, $\mbb{M^{n,d}_i=1}$ increases in $d$ under all distributions other than the degenerate. We conclude that our main result, Theorem~\ref{thm:main-discrete}, seems to be robust to i.i.d.~sampling via reasonable non-uniform distributions.

\paragraph{Schools with multiple seats.} In Section~\ref{para:multiple-seats} and Appendix~\ref{appx:q_ge_1} we discuss the extension of the discrete and continuous models to the case when schools have $q\inN$ seats. We now perform some numerical experiments to verify Conjecture~\ref{conj:q_ge_1}. To do so, we compute numerical solutions to the differential equations describing the continuous model using the Euler method with step size $5\times10^{-5}$, and compare the probability of getting matched for successive values of $d\inN$. We verified that the conjecture holds for all pairs of $q=1,2,\dots,20$ and $d=1,2,\dots,15$. This leads us to believe that the conjecture holds for all $q\inN$ and $d\geq 1$. In Appendix~\ref{appx:numerical-experiments} we show some plots of the resultant output.

\section{Conclusions}\label{sec:conclusions}

In this paper, we investigated the impact of truncating preference lists in a two-sided matching market where students choose schools following a  Serial Dictatorship order. Our main result is that if the market is balanced, all agents increase their probability of being matched when lists are longer. This result is proved for preference lists that are sampled uniformly at random and school quotas equal to $1$ and then shown to hold numerically for more complex preferences and larger quotas. We believe these valuable insights can be used to support the expansion of the length of preference lists, which are often set to small values. Investigating similar questions for other mechanisms is a relevant open question.

\paragraph{Acknowledgments.} The authors would like to acknowledge the support of NSF Grant 2046146 and the valuable feedback provided by three anonymous EC'25 reviewers.

\bibliographystyle{ACM-Reference-Format}
\bibliography{refs}

\appendix

\section{Background on Differential Equations and Stochastic Processes}\label{app:background-ODE-MT}

This appendix contains some preliminaries that are required to follow the proofs in the main text, in particular on the theory of ordinary differential equations (ODEs) and initival value problems (IVPs), as well as probability and Markov theory.

\subsection*{Ordinary Differential Equations}\label{appx:ode}

An ordinary differential equation is a description of the local dynamics of a system in terms of its current state and derivatives; in particular, a first order ordinary differential equation describes the rate of change of a quantity $x$ as a function of the current time $t$ and the current state $x(t)$. That is, $x'(t)=f(t,x(t))$ for some $f$. A problem with such a description along with boundary data (e.g., $x(t_0)=x_0$) is called an \emph{initial value problem}. A main theme in differential equations is to understand when such local descriptions give rise to solutions, either in some small neighborhood of a point, or globally. Lipschitz continuity is an important property in differential equations, as it is a sufficient condition for the existence of such solutions as we will argue next.

\begin{definition}[Lipschitz continuity]
A function $f:\bbR^m\to\bbR^n$ is \emph{Lipschitz continuous} under norm $\norm{\cdot}$ on $U\subseteq\bbR^m$ if there exists some $L\geq 0$ such that
\begin{align*}
\norm{f(x)-f(y)}\leq L\norm{x-y},
\end{align*}
for all $x,y\in U$. We call $L$ the Lipschitz constant of $f$ on $U$.
\end{definition}

The standard result for the existence and uniqueness of solutions to initial value problems of ordinary differential equations comes from variations of the Picard-Lindel\"of Theorem, which roughly state that if $f$ is Lipschitz at a point, a solution exists and is unique in some neighborhood of that point. The following is a version that extends this fact to the positive reals for the one dimensional case, adapted from \cite[Theorem~2.2 and Corollary~2.6]{teschl2024ordinary}.

\begin{theorem}\label{thm:picardlindelof}
Consider the initial value problem
\begin{align}\label{eq:ivp-appendix}
    x'=f(t,x), \quad x(t_0)=x_0,
\end{align} where $(t_0,x_0)\in U$ for some open set $U \subseteq \bbR^2$ and $f:U\to\bbR$ is a Lipschitz continuous function. Then if $[t_0,\infty)\cross\bbR\in U$ there exists a unique solution $x(t)$ to~\eqref{eq:ivp-appendix} for all $t\geq t_0$.
\end{theorem}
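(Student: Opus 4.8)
The plan is to establish the theorem in two stages: first obtain a unique local solution via the classical Picard--Lindel\"of construction, and then promote it to a global solution on $[t_0,\infty)$ by ruling out finite-time blow-up, where the global Lipschitz hypothesis does all the work. First I would rewrite the initial value problem~\eqref{eq:ivp-appendix} as the equivalent integral equation
\[
x(t) = x_0 + \int_{t_0}^{t} f(s, x(s))\,\dd s,
\]
and apply the contraction mapping principle to the associated Picard operator on a short interval $[t_0, t_0+\delta]$. Since $f$ is Lipschitz continuous, it is in particular continuous and uniformly Lipschitz in its second argument with some constant $L$, so the operator is a contraction once $\delta$ is small enough; this yields a unique local solution and is exactly the local content of the cited Picard--Lindel\"of result, which I would invoke directly. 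Let $[t_0, T_{\max})$ denote the maximal interval of existence of the unique forward solution.

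The crux, and the main obstacle, is to show $T_{\max} = \infty$. Suppose for contradiction that $T_{\max} < \infty$. Here the \emph{global} Lipschitz assumption is essential: because $U$ contains the whole half-plane $[t_0,\infty)\times\mathbb{R}$ and $f$ is Lipschitz on all of $U$, the constant $L$ is valid everywhere along the trajectory rather than only locally. Writing $\abs{f(s,x(s))} \leq \abs{f(s,x_0)} + L\abs{x(s)-x_0}$ and bounding $\abs{f(s,x_0)} \leq M$ on the compact interval $[t_0,T_{\max}]$ by continuity of $s\mapsto f(s,x_0)$, the integral equation gives
\[
\abs{x(t)-x_0} \leq M(t-t_0) + L\int_{t_0}^{t}\abs{x(s)-x_0}\,\dd s .
\]
Gr\"onwall's inequality then produces the a priori bound $\abs{x(t)-x_0} \leq M(T_{\max}-t_0)\,e^{L(T_{\max}-t_0)}$ for all $t\in[t_0,T_{\max})$, uniform in $t$.

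Finally I would extract the contradiction. This uniform bound confines the trajectory to a compact rectangle $[t_0,T_{\max}]\times[x_0-R,x_0+R]$, which lies in the interior of $U$ precisely by the half-plane hypothesis; the standard continuation theorem (the other half of the cited result) then shows the solution extends past $T_{\max}$, contradicting maximality. Hence $T_{\max}=\infty$. Global uniqueness follows by patching: any two forward solutions agree near $t_0$ by local uniqueness, and the set on which they agree is both open (by local uniqueness applied at interior points) and closed (by continuity) in the common interval of existence, hence equals all of $[t_0,\infty)$. I expect the only genuinely delicate point to be the a priori estimate; once the global Lipschitz condition is used to feed Gr\"onwall's inequality, the continuation and uniqueness steps are routine.
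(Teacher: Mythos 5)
Your proof is correct, and it is essentially the argument behind the result as the paper uses it: the paper offers no proof of this theorem at all, stating it as background adapted from \cite[Theorem~2.2 and Corollary~2.6]{teschl2024ordinary}. Your reconstruction --- local existence and uniqueness by the contraction mapping principle, a Gr\"onwall a priori bound exploiting the global Lipschitz hypothesis on the half-plane to rule out finite-time blow-up, continuation past any finite maximal time, and the open/closed patching argument for global uniqueness --- is precisely the standard proof underlying the cited textbook results, so there is no gap and no genuinely different route to compare.
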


\subsection*{Probability and Markov theory}\label{appx:mc}

In this section we briefly cover some fundamentals of general probability and Markov theory. The treatment is not fully formal as we omit certain regularity conditions and assumptions -- however such conditions are immediately satisfied by the objects we study in this paper. We refer to \cite{grimmett2020probability} for further details.

Recall that a random variable $X$ on a probability triple $(\Omega,\mcF,\mb)$ is a measurable real-valued function from $(\Omega,\mcF)$ to $(\bbR,\mcL)$.

\begin{definition}[Stochastic process]
A stochastic process on a state space $\mcS\subseteq\bbR$ is a set of random variables $\set{X_t}_{t\in\mcT}$ on the same probability triple that are indexed by some $\mcT$ (thought of as time) and each take values in $\mcS$ (that is, $X_t\in\mcS$ for all $t\in\mcT$).
\end{definition}

Common cases are $\mcT=\set{0,1,2,3,\dots}$ or $\mcT=\set{0,\Delta,2\Delta,3\Delta,\dots}$ for some $\Delta>0$ for \emph{discrete time} processes and $\mcT=[0,\infty)$ for \emph{continuous time} processes.

\paragraph{Convergence.} Consider a sequence $\set{X^n}_{n=1,2,\dots}$ of random variables on the same probability triple $(\Omega,\mcF,\mb)$. We recall some ways in which $X^n$ may converge to some other random variable $X$.

\begin{definition}[Convergence in Probability]
The sequence of random variables $\{X^n\}_{n=1,2,\dots}$ converges to the random variable $X$ in probability, denoted $X^n\stackrel{\mcP}{\to}X$ if for all $\eps>0$,
\begin{align*}
\lim_{n\to\infty}\mbb{\abs{X^n-X}\geq\eps}\to 0,
\end{align*}
as $n\to\infty$.
\end{definition}

A stochastic process can converge pointwise (for all $t\in\mcT$) in probability, but clearly a stronger condition is that of uniform convergence, where this convergence is uniform through $\mcT$.

\begin{definition}[Uniform convergence in probability]
The sequence of stochastic processes $\set{X^n_t}_{t\in\mcT,n=1,2,\dots}$ on $\mcT$ converges uniformly in probability to $\set{X_t}_{t\in\mcT}$ if for all $\eps>0$,
\begin{align*}
\mbb{\sup_{t\in\mcT}\abs{X^n_t-X_t}\geq\eps}\to0,
\end{align*}
as $n\to\infty$.
\end{definition}

That is, the maximum deviation over the index set between the $X^n_t$ and $X_t$ is itself bounded and vanishes in probability.

\begin{definition}[Convergence in $r$-mean]
The sequence of random variables $\{X^n\}_{n=1,2,\dots}$ converges to a random variable $X$ in $r$-mean, denoted $X^n\stackrel{L^r}{\to}X$ if for all $\eps>0$, $\lim_{n\to\infty}\meb{\abs{X^n-X}^r}\to 0$, as $n\to\infty$.
\end{definition}

Note in particular that for $r=1$, this is convergence in mean, that is, $\meb{X^n}\to\meb{X}$. Convergence in probability does not automatically imply convergence in $r$-mean, though the converse is true. One special case where the converse holds is when $X^n$ and $X$ are bounded.

\begin{lemma}\label{lem:r-mean-conv}
Suppose $X^n\stackrel{\mcP}{\to}X$ and $\abs{X}\leq M$ and $\abs{X^n}\leq M$ for some $M\geq 0$. Then $X^n\stackrel{L^r}{\to}X$ for all $r\geq 1$.
\end{lemma}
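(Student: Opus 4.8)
The plan is to prove this directly by a truncation-and-splitting argument, which is essentially the bounded convergence theorem specialized to this setting. The key structural observation is that uniform boundedness gives $\abs{X^n - X} \leq \abs{X^n} + \abs{X} \leq 2M$ almost surely, so the random variable $\abs{X^n - X}^r$ is uniformly bounded by $(2M)^r$. This is precisely what lets us turn a statement about probabilities (convergence in probability) into a statement about expectations (convergence in $r$-mean): without such a bound the two notions genuinely differ, since probability mass escaping to large values could keep the expectation from vanishing.

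First I would fix $\eps > 0$ and an auxiliary threshold $\dta > 0$, and split the expectation according to whether the deviation is small or large:
\begin{align*}
\meb{\abs{X^n-X}^r} = \meb{\abs{X^n-X}^r\,\indc_{\set{\abs{X^n-X}<\dta}}} + \meb{\abs{X^n-X}^r\,\indc_{\set{\abs{X^n-X}\geq\dta}}}.
\end{align*}
On the first event the integrand is bounded by $\dta^r$, so that term is at most $\dta^r$; on the second event the integrand is bounded by $(2M)^r$, so that term is at most $(2M)^r\,\mbb{\abs{X^n-X}\geq\dta}$.

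Next I would invoke the hypotheses to drive both pieces to zero. I would choose $\dta$ small enough that $\dta^r < \eps/2$, fixing the first term. With $\dta$ now fixed, convergence in probability gives $\mbb{\abs{X^n-X}\geq\dta}\to 0$ as $n\to\infty$, so there exists $N$ such that for all $n>N$ we have $(2M)^r\,\mbb{\abs{X^n-X}\geq\dta} < \eps/2$. Combining the two bounds yields $\meb{\abs{X^n-X}^r} < \eps$ for $n > N$, which is exactly convergence in $r$-mean.

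I do not anticipate a genuine obstacle here, as the argument is routine once the uniform bound $2M$ is in hand; the only point requiring care is the order of quantifiers, namely that $\dta$ must be chosen \emph{before} $N$ (since $N$ depends on $\dta$ through the convergence-in-probability statement), and that the bound $(2M)^r$ on the tail contribution is uniform in $n$ thanks to the assumption $\abs{X^n}\leq M$ holding for every $n$ with the same constant $M$.
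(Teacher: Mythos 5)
Your proof is correct. The paper states this lemma only as standard background in its appendix and gives no proof of its own, so there is nothing to compare against; your truncation-and-splitting argument (bounded convergence specialized to this setting) is the canonical one, and you handle the one delicate point — fixing $\dta$ before choosing $N$ — correctly.
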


\paragraph{Markov processes.} A Markov process is a stochastic process which additionally satisfies a property of being \emph{memoryless}, meaning that its future evolution is entirely dictated by its current state independent of the past. We restrict ourselves to time-homogeneous Markov chains with finite state spaces.

\begin{definition}[Discrete time Markov chain]
A \emph{discrete time Markov chain} is a stochastic process on a finite state space $\mcS\subseteq\bbR$ and index set $\mcT=\set{0,1,2,3,\dots}$ that satisfies the property
\begin{align*}
\mbb{X_{t+1}=j\given X_t=i_t,X_{t-1}=i_{t-1},\dots,X_1=i_1}=\mbb{X_{t+1}=j\given X_t=i_t},
\end{align*}
for all states $j,i_\cdot\in\mcS$ and all $t\in \mcT$. The evolution of such a Markov chain is therefore determined by its (one step) \emph{transition probabilities} $p(i,j)$ for $i,j\in\mcS$, arranged in a transition matrix $P$ defined by
\begin{align*}
p(i,j)=\mbb{X_{t+1}=j\given X_t=i}.
\end{align*}
\end{definition}

\begin{definition}[Continuous time Markov chain]
A \emph{continuous time Markov chain} is a stochastic process on a finite state space $\mcS\subseteq\bbR$ and index set $\mcT=[0,\infty)$ that satisfies the property
\begin{align}\label{eq:continuous-MC}
\mbb{X_{t+s}=j\given X_t=i,X_{t_l}=i_l,X_{t_{l-1}}=i_{l-1},\dots,X_0=i_0}&=\mbb{X_s=j\given X_0=i},
\end{align}
for all $s\geq 0$, $t>t_l>t_{l-1}>\cdots>t_1>t_0\geq 0$ and all $j,i,i_\cdot\in\mcS$.
\end{definition}

Note that~\eqref{eq:continuous-MC} implies that the transition times must also be random and memoryless. It turns out that this property embodies continuous time Markov chains with a large amount of structure. In particular, the time between transitions (when the chain moves from one state to another) must be exponentially distributed (this is the only memoryless continuous distribution, satisfying $\mbb{X\leq x\given X\geq y}=\mbb{X\leq x-y}$).

In the discrete case the evolution was defined by the one step transition probabilities $P$. In the continuous case, these are replaced by a Matrix-valued function of time $P(t)$. It turns out there is a compact way to represent the possible transitions via \emph{transition rates} $q(i,j)$ that describe the instantaneous rate of moving from state $i$ to state $j$ as formalized by the following lemma.

\begin{theorem}[Transition rates]
For a continuous time Markov chain on a finite state space, let $p(i,j;t)=\mbb{X_t=j\given X_0=i}$ for $t\geq 0$. Denote by $P(t)$ the matrix with entries $p(i,j;t)$. Then there exists a \emph{transition rate} matrix $Q$ with entries $q(i,j)$ that is the unique solution to
\begin{align*}
Q=\lim_{t\to0}\frac{P(t)-I}{t},
\end{align*}
note that $q(i,i)=-\sum_{j\neq i}q(i,j)$. 

\end{theorem}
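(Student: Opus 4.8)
The plan is to build on the two structural facts that any continuous time Markov chain on a finite state space supplies: the Chapman--Kolmogorov (semigroup) identity $P(s+t)=P(s)P(t)$, a direct consequence of the Markov property~\eqref{eq:continuous-MC}, together with continuity at the origin, $\lim_{t\to 0^+}P(t)=I=P(0)$, which holds for the standard (honest, right-continuous) chains under consideration. Granting these, the whole theorem reduces to a single analytic claim: that each entry of $(P(t)-I)/t$ converges as $t\to 0^+$. Once the limit is shown to exist, uniqueness is automatic, since limits of matrix-valued functions are unique, and the diagonal identity follows from one differentiation.

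For the existence of the limit I would exploit the structural description recalled just above: started from $i$, the chain holds for an $\Exp(\lambda_i)$ time and then jumps to some $j\neq i$ with probability $P_{ij}$, where $\sum_{j\neq i}P_{ij}=1$ whenever $\lambda_i>0$. Conditioning on the number of jumps in $[0,t]$, I would establish $p(i,i;t)=e^{-\lambda_i t}+O(t^2)=1-\lambda_i t+o(t)$ and, for $j\neq i$, $p(i,j;t)=\lambda_i P_{ij}\,t+o(t)$. In both cases the $O(t^2)$ error is the probability that two or more jumps occur in $[0,t]$: each additional jump costs an independent exponential waiting time and so contributes a factor of order $t$, and since $\mcS$ is finite the holding rates are uniformly bounded, making the error genuinely $O(t^2)$. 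This gives $q(i,j)=\lambda_i P_{ij}$ for $j\neq i$ and $q(i,i)=-\lambda_i$, so the entrywise limit exists and is finite. A slicker alternative is to note that on the finite-dimensional space of matrices the semigroup $\cbr{P(t)}$ is norm-continuous, hence uniformly continuous, so by the standard theory of uniformly continuous one-parameter semigroups it has a bounded generator $Q=\lim_{t\to 0^+}(P(t)-I)/t$ with $P(t)=e^{tQ}$; this delivers the limit at once, at the cost of citing a functional-analytic fact.

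The diagonal identity then drops out of the stochasticity constraint. Each $P(t)$ is a stochastic matrix, so $\sum_{j}p(i,j;t)=1$ for every $t$; subtracting the value at $t=0$, dividing by $t$, and letting $t\to 0^+$ gives $\sum_j q(i,j)=0$, that is, $q(i,i)=-\sum_{j\neq i}q(i,j)$ (equivalently $\lambda_i=\sum_{j\neq i}\lambda_i P_{ij}$ in the holding-time picture). I expect the main obstacle to be the bookkeeping in the first route: making the ``two-or-more jumps is $O(t^2)$'' estimate fully rigorous and uniform in the starting state, which is precisely where finiteness of $\mcS$ is essential and where the otherwise routine $o(t)$ expansions must be justified with care.
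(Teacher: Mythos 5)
The paper never proves this statement: it appears in the background appendix as a standard fact, with the reader deferred to the probability literature (Grimmett--Stirzaker) for details, so there is no in-paper proof to compare against. Your sketch is exactly the standard textbook argument that such a reference gives, and it is correct: using the jump-chain/holding-time description (which the paper itself asserts as prior background immediately before the theorem, so you are not being circular), the estimate $\mbb{\text{two or more jumps in } [0,t]} = O(t^2)$ -- uniform over the starting state because the finitely many rates are bounded -- yields $p(i,i;t)=1-\lambda_i t+o(t)$ and $p(i,j;t)=\lambda_i P_{ij}t+o(t)$, hence the entrywise limit exists with $q(i,i)=-\lambda_i$, $q(i,j)=\lambda_i P_{ij}$; uniqueness is trivial (limits are unique), and the row-sum identity follows by differentiating $\sum_j p(i,j;t)=1$ at $t=0$, where the interchange of limit and sum is justified by finiteness of the state space. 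Your alternative functional-analytic route (continuity at $0$ in finite dimension gives a norm-continuous semigroup, hence a bounded generator with $P(t)=e^{tQ}$) is also valid and arguably cleaner, at the cost of invoking semigroup theory; the only small omission in the first route is the absorbing case $\lambda_i=0$, which is handled trivially since then $p(i,\cdot;t)$ is constant in $t$.
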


We next introduce the concept of a Poisson point process in order to build continuous time Markov chains from discrete time ones. Intuitively, a homogeneous Poisson point process simply counts the number of events that have happened until time $t$, where the time between successive events is iid and exponentially distributed.

\begin{definition}[Homogeneous Poisson point process]
Let $\lambda>0$. Define the stochastic process $\{H_t\}_{t\geq 0}$ with state space $\set{0,1,2,\dots}$ as follows. For $i=1,2,\dots$ let $E_i\distributed\Exp(\lambda)$, and let $H_t=\sup_{i=0,1,2,\dots}\set{\sum_{j=1}^iE_j\leq t}$. We call $\{H_t\}_{t\geq 0}$ the homogeneous Poisson process with rate $\lambda$ on $t\geq 0$. Note that $H_t\distributed\Pois(\lambda t)$, and in particular $\meb{H_t}=\lambda t$ for all $t\geq 0$.
\end{definition}

A Homogeneous Poisson point process is an example of a very simple Markov chain.

\begin{lemma}
A homogeneous Poisson point process with rate $\lambda$ is a continuous time Markov chain on the countable state space $\mcS=\set{0,1,2,\dots}$ with transition rates
\begin{align*}
q(i,j)&=\piecewise{
\lambda,& j=i+1, \\
-\lambda,& j=i, \\
0,& \text{otherwise}.
}
\end{align*}
\end{lemma}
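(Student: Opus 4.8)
The plan is to establish the two defining features of a continuous time Markov chain with the claimed rate matrix: first the Markov (memoryless) property of~\eqref{eq:continuous-MC}, and then the value of $Q=\lim_{s\to 0}(P(s)-I)/s$. The engine behind both is the memorylessness of the exponential distribution governing the inter-arrival times $E_i$.

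First I would verify the Markov property. Fix a time $t$ and condition on the entire history up to $t$, which in particular fixes the current count $H_t=i$ and the elapsed time $a=t-\sum_{j=1}^{i}E_j$ since the $i$-th event; here $\set{H_t=i}$ is exactly the event that $\sum_{j=1}^{i}E_j\le t<\sum_{j=1}^{i+1}E_j$, so $E_{i+1}>a$. Because $E_{i+1}\distributed\Exp(\lambda)$ satisfies $\mbb{E_{i+1}>a+b\given E_{i+1}>a}=\mbb{E_{i+1}>b}$, the residual time $E_{i+1}-a$ to the next event is again $\Exp(\lambda)$ and is independent of $a$ and of the rest of the past; the subsequent inter-arrival times $E_{i+2},E_{i+3},\dots$ are fresh i.i.d.\ $\Exp(\lambda)$ variables, also independent of the past. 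Hence the conditional law of $\set{H_{t+s}}_{s\ge 0}$ depends on the history only through $i=H_t$, which is precisely~\eqref{eq:continuous-MC}.

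Next I would compute the transition probabilities $p(i,j;s)=\mbb{H_{t+s}=j\given H_t=i}$. The argument above shows that, started from state $i$, the increment $H_{t+s}-H_t$ counts the events of a fresh rate-$\lambda$ Poisson process on an interval of length $s$, so $H_{t+s}-H_t\distributed\Pois(\lambda s)$ by the distributional fact $H_s\distributed\Pois(\lambda s)$ recorded in the definition. Therefore $p(i,i+k;s)=e^{-\lambda s}(\lambda s)^k/k!$ for $k\ge 0$ and $p(i,j;s)=0$ for $j<i$, since the count can only increase.

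Finally I would read off the rates from $q(i,j)=\lim_{s\to 0}\br{p(i,j;s)-\indc_{\set{i=j}}}/s$. A short expansion gives $q(i,i)=\lim_{s\to 0}\br{e^{-\lambda s}-1}/s=-\lambda$, $q(i,i+1)=\lim_{s\to 0}\lambda e^{-\lambda s}=\lambda$, and $q(i,i+k)=\lim_{s\to 0}\lambda^k s^{k-1}e^{-\lambda s}/k!=0$ for $k\ge 2$ (and $q(i,j)=0$ for $j<i$), matching the claimed matrix. The main obstacle is the first step: the Markov-property argument must handle the residual inter-arrival time correctly, and it is exactly the memorylessness of the exponential that makes this residual waiting time independent of the elapsed time $a$ and of the entire past. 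Everything after that is a routine computation with the Poisson probability mass function.
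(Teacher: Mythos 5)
Your proposal is correct. Note, however, that the paper itself offers no proof of this lemma: it appears in Appendix~\ref{appx:mc} as standard background material (the appendix explicitly states the treatment is informal and defers to \cite{grimmett2020probability}), so there is no in-paper argument to compare against. What you give is the standard textbook derivation, and it is sound: the memorylessness identity $\mbb{E_{i+1}>a+b\given E_{i+1}>a}=\mbb{E_{i+1}>b}$ makes the residual waiting time after time $t$ a fresh $\Exp(\lambda)$ variable independent of the history, which together with the i.i.d.\ later inter-arrival times yields both the Markov property~\eqref{eq:continuous-MC} and the increment law $H_{t+s}-H_t\distributed\Pois(\lambda s)$; from there the limits
\begin{align*}
q(i,i)=\lim_{s\to 0}\frac{e^{-\lambda s}-1}{s}=-\lambda,\qquad
q(i,i+1)=\lim_{s\to 0}\lambda e^{-\lambda s}=\lambda,\qquad
q(i,i+k)=\lim_{s\to 0}\frac{\lambda^k s^{k-1}e^{-\lambda s}}{k!}=0\ (k\geq 2)
\end{align*}
give exactly the claimed rate matrix. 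One cosmetic remark: the lemma's state space $\set{0,1,2,\dots}$ is countable, whereas the paper's formal definitions of Markov chains and of the rate matrix $Q$ are stated for finite state spaces; this mismatch is in the paper, not in your argument, and your direct computation of $q(i,j)$ sidesteps it entirely.
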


Given a continuous time Markov chain, one can decompose it into a discrete time Markov chain that describes the transition probabilities at jump times, and a per-state rate of an Exponential distribution that describes the duration that the chain stays in that state before jumping to the next.

The following theorem gives one variation of a theorem that allows easily constructing continuous time Markov chains from discrete time Markov chains.

\begin{theorem}[Constant-rate Markov Chain Embedding]\label{thm:embedding}
Let $\set{X_n}_{n=0,1,2,\dots}$ be a discrete time Markov chain on finite state space $\mcS$ with $X_0=s_0$ for some initial state $s_0\in\mcS$. Let $\lambda>0$ be some rate and let $\{H_t\}_{t\geq 0}$ be the homogeneous Poisson point process with rate $\lambda$ on $[0,\infty)$.

Define a stochastic process $\set{Z_t}_{t\geq 0}$ 
by $Z_t=X_{H_t}$. Then $\set{Z_t}_{t\geq0}$ is a continuous time Markov chain whose state space is $\mcS$, initial state is $s_0$, and that has rates $q(i,j)=\lambda p(i,j)$ for $i\neq j$ and $q(i,i)=-\lambda(1-p(i,i))$. This continuous time Markov chain is called an \emph{embedded} chain because its transition rates follow the transition probabilities of the discrete chain, that is, $\mbb{Z_s=j\given Z_t=i}=p(i,j)$ if $H_t=H_s+1$.
\end{theorem}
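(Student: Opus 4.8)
The plan is to exploit the two structural facts supplied by the hypotheses: the discrete chain $\{X_n\}$ is Markov with one-step matrix $p$, and the subordinator $\{H_t\}$ is a homogeneous Poisson process, hence has independent, stationary increments and is independent of $\{X_n\}$. First I would dispatch the trivial claims: since $H_0=0$ we have $Z_0=X_{H_0}=X_0=s_0$, and because $X_n\in\mcS$ for every $n$, the process $\{Z_t\}$ takes values in $\mcS$.

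Next I would establish the continuous-time Markov property together with time-homogeneity. Writing the future as $Z_{t+s}=X_{H_t+(H_{t+s}-H_t)}$, the increment $H_{t+s}-H_t$ is independent of the history $\{H_u:u\le t\}$ and distributed as $H_s\distributed\Pois(\lambda s)$. Conditioning on $Z_t=i$, the discrete Markov property gives that the law of $X_{H_t+m}$ depends only on $i$ and $m$, not on the value of $H_t$ nor on the earlier trajectory; combining this with the independence of the increment from the past yields
\begin{align*}
\mbb{Z_{t+s}=j\given Z_t=i,\{Z_u\}_{u\le t}}=\sum_{m\geq 0}\mbb{H_s=m}\,p^{(m)}(i,j),
\end{align*}
where $p^{(m)}$ denotes the $m$-step transition probability of the discrete chain. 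The right-hand side depends only on $i$, $j$, and $s$, which simultaneously proves the Markov property and homogeneity, and identifies the transition function $p(i,j;t)=\sum_{m\geq 0}e^{-\lambda t}\frac{(\lambda t)^m}{m!}p^{(m)}(i,j)$.

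Finally I would read off the rates by expanding this transition function to first order as $t\to 0$. Using $\mbb{H_t=0}=1-\lambda t+o(t)$, $\mbb{H_t=1}=\lambda t+o(t)$, and $\mbb{H_t\geq 2}=o(t)$, together with $p^{(0)}=I$ and $p^{(1)}=p$, I obtain $p(i,j;t)=\lambda t\,p(i,j)+o(t)$ for $i\neq j$ and $p(i,i;t)=1-\lambda t(1-p(i,i))+o(t)$. Dividing by $t$ and letting $t\to0$ produces exactly $q(i,j)=\lambda p(i,j)$ and $q(i,i)=-\lambda(1-p(i,i))$, matching the defining limit $Q=\lim_{t\to0}(P(t)-I)/t$. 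The closing interpretive remark---that a single Poisson arrival between $t$ and $s$ moves $Z$ according to $p$---is immediate from $Z_s=X_{H_t+1}$ and one application of the discrete Markov property.

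The step I expect to require the most care is the Markov property itself: conditioning on the entire past trajectory $\{Z_u\}_{u\le t}$ reveals information about both the Poisson arrivals and the states visited, so I must argue that nothing beyond the current value $X_{H_t}=i$ influences the future. This reduces to combining the independence of Poisson increments with the fact that the discrete Markov property holds conditionally on $X_{H_t}$ \emph{uniformly over the value} of $H_t$; the finiteness of $\mcS$ keeps all the sums and limits well behaved.
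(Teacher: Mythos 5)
The paper never proves this statement: Theorem~\ref{thm:embedding} sits in the background appendix, which explicitly disclaims full formality and defers to the literature (the Grimmett--Stirzaker reference), so there is no in-paper argument to compare yours against. Your proposal is the standard subordination (Poissonization) proof, and it is correct in substance: the identification of the transition function as the Poisson mixture $p(i,j;t)=\sum_{m\geq 0}e^{-\lambda t}\frac{(\lambda t)^m}{m!}p^{(m)}(i,j)$, followed by the first-order expansion $\mbb{H_t=0}=1-\lambda t+o(t)$, $\mbb{H_t=1}=\lambda t+o(t)$, $\mbb{H_t\geq 2}=o(t)$, yields exactly the claimed rates $q(i,j)=\lambda p(i,j)$ for $i\neq j$ and $q(i,i)=-\lambda(1-p(i,i))$, matching the paper's definition $Q=\lim_{t\to 0}(P(t)-I)/t$. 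The one step where your write-up is looser than a full proof is the Markov property: conditioning on $\{Z_u\}_{u\leq t}$ directly is awkward because that $\sigma$-algebra only partially reveals $H_t$ (Poisson arrivals that cause self-transitions are invisible in $Z$). The clean fix, which your closing paragraph gestures at, is to condition on the finer $\sigma$-algebra generated by $\{H_u\}_{u\leq t}$ and $\{X_n\colon n\leq H_t\}$ --- where independence of Poisson increments and the discrete Markov property (applied at the index $H_t$, legitimate since $H$ is independent of $X$) give a $\sigma(Z_t)$-measurable answer --- and then pull back to $\{Z_u\}_{u\leq t}$ by the tower property. Minor point in your favor: the paper's final display has the indices reversed ($H_t=H_s+1$ should read $H_s=H_t+1$ for $s>t$), and your interpretation via $Z_s=X_{H_t+1}$ is the correct one.
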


\paragraph{Martingales.} A martingale is a stochastic process with the property that its expected value in the future is the current value.

\begin{definition}[Martingale]
A stochastic process $\{X_t\}_{t\geq 0}$ is a \emph{martingale} if \, $\meb{X_s\given X_t}=X_t$ for all $s\geq t$.
\end{definition}

The homogeneous Poisson process corrected for its mean is a martingale. That is, $H_t-\lambda t$ is a martingale. The following is a standard result in martingale theory.

\begin{theorem}[Doob's martingale inequality]\label{thm:doob}
If $\{X_t\}_{t\geq 0}$ is a martingale, then for all $T\geq 0$ and $C>0$,
\begin{align*}
\mbb{\sup_{t\in[0,T]}X_t\geq C}\leq \frac{\meb{\max(X_T,0)}}{C}.
\end{align*}
Further, for $r\geq 1$, we have
\begin{align*}
\mbb{\sup_{t\in[0,T]}\abs{X_t}\geq C}\leq\frac{\meb{\abs{X_T}^r}}{C^r}.
\end{align*}
\end{theorem}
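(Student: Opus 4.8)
The plan is to reduce both displayed bounds to the classical maximal inequality for submartingales in discrete time, and then lift the result to the continuous index set $[0,T]$ by a path-regularity argument; I would use the elementary first-passage decomposition rather than the full optional stopping theorem. The discrete core is the following: for any submartingale $Y_0,\dots,Y_N$ we have $\mbb{\max_{0\leq i\leq N} Y_i \geq C}\leq \meb{\max(Y_N,0)}/C$. To prove it, let $\tau$ be the first index $i$ with $Y_i\geq C$, and decompose the event $A=\{\max_i Y_i\geq C\}$ into the disjoint, $\mcF_k$-measurable events $A_k=\{\tau=k\}$. On $A_k$ we have $Y_k\geq C$, so $C\,\mbb{A_k}\leq\meb{Y_k\indc_{A_k}}$; the submartingale property (via the tower rule, since $A_k\in\mcF_k$) gives $\meb{Y_k\indc_{A_k}}\leq\meb{Y_N\indc_{A_k}}\leq\meb{\max(Y_N,0)\indc_{A_k}}$. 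Summing over $k$ yields $C\,\mbb{A}\leq\meb{\max(Y_N,0)}$.

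The two stated inequalities are then instances of this core estimate. For the first, apply it directly to $Y_i=X_i$, using that a martingale is a submartingale; this gives $\mbb{\max_i X_i\geq C}\leq\meb{\max(X_N,0)}/C$. For the second, observe that $\phi(x)=|x|^r$ is convex for $r\geq 1$, so conditional Jensen's inequality makes $\{|X_t|^r\}$ a nonnegative submartingale; applying the core estimate to $Y_i=|X_i|^r$ at level $C^r$ and using $\max_i|X_i|\geq C\iff\max_i|X_i|^r\geq C^r$ gives $\mbb{\max_i|X_i|\geq C}\leq\meb{|X_N|^r}/C^r$.

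It remains to remove the discreteness. Assuming the paths $t\mapsto X_t$ are right-continuous — which holds for the only process to which the theorem is applied in the paper, namely $n^{-1}H^n_t-t$, whose paths are piecewise linear — the supremum over $[0,T]$ is the increasing limit of the maxima over the finite dyadic grids $\set{jT/2^m:0\leq j\leq 2^m}$. Sampling a submartingale along an increasing finite sequence of times again yields a submartingale, so the discrete bounds hold on each grid with the same right-hand side, which depends only on the endpoint $T$. Letting $m\to\infty$ and using continuity of the probability measure from below transfers the inequalities to $\sup_{t\in[0,T]}$.

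I expect the main obstacle to be exactly this last step: the clean combinatorial argument is intrinsically discrete, and turning a supremum over an uncountable index set into a monotone limit of finite maxima requires the path-regularity hypothesis together with care at the level set (for instance passing through $\{>C-\delta\}$ and letting $\delta\downarrow0$ to recover the closed event $\{\geq C\}$). Since the paper deliberately omits such regularity conditions, I would state the right-continuity assumption explicitly and otherwise keep this step short, as it is classical; the substance of the result is the discrete first-passage estimate.
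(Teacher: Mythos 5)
Your proof is correct, but note that the paper itself does not prove this statement at all: Theorem~\ref{thm:doob} appears in the background appendix as a ``standard result in martingale theory,'' with the reader pointed to \cite{grimmett2020probability}, so there is no paper proof to compare against. Your argument is the classical one and is sound: the first-passage decomposition $A=\bigcup_k A_k$ with $A_k\in\mcF_k$, the tower-property step $C\,\mbb{A_k}\leq\meb{Y_k\indc_{A_k}}\leq\meb{Y_N\indc_{A_k}}\leq\meb{\max(Y_N,0)\indc_{A_k}}$, conditional Jensen to turn $\{\abs{X_t}^r\}$ into a nonnegative submartingale for the second inequality, and the nested dyadic grids plus the $\{>C-\delta\}$ level-set limit to pass from finite maxima to $\sup_{t\in[0,T]}$. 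Two small points are worth flagging. First, in the Jensen step you should note that if $\meb{\abs{X_T}^r}=\infty$ the claim is vacuous, and otherwise $\meb{\abs{X_T}^r\given\mcF_t}\geq\abs{X_t}^r$ supplies the integrability needed for $\{\abs{X_t}^r\}$ to be a genuine submartingale. Second, your right-continuity hypothesis is exactly the kind of regularity condition the paper says it deliberately omits, and it is indeed satisfied by the only process the theorem is applied to (the compensated Poisson process $n^{-1}H^n_t-t$ in the proof of Theorem~\ref{thm:xts}, used there with $r=2$), so making it explicit strengthens rather than conflicts with the paper's presentation.
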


\section{Extension to Schools Having Multiple Seats}\label{appx:q_ge_1}

In Section~\ref{para:multiple-seats}, we set up an analogue of the discrete model of Section~\ref{sec:discrete} in the case that each school has $q=1,2,3,\dots$ seats. In this appendix, we roughly trace the steps of Section~\ref{sec:models} for the case of arbitrary $q$. We first describe the continuous model for the multiple-seat case, then discuss how Theorem~\ref{thm:xts} may be extended to this case. In particular, we develop the initial value problem equivalents of~\eqref{eq:ivp} for arbitrary $q=1,2,3,\dots$ and $d\geq 1$. We discuss the solutions to the initial value problem for $d=1$, then discuss how we may use this solution for $d>1$.

\paragraph{The continuous model with multiple seats.} Recall that $T^{n,d}_i$ is the number of schools with $q$ seats taken after students $\set{1,2,3,\dots,i-1}$ have had their turn, and $S^k_i$ is the number of schools with $k$ seats taken at this point for $k=1,2,\dots,q-1$. We again keep track of the continuous analogues, for $i=\floor{tn}$, letting $x_d(t)$ be the proportion of schools with all seats taken, and $y_d^k(t)$ be the proportion of schools with $k$ seats taken for $k=0,\dots,q-1$. Initially all schools have no seats taken, so
\begin{align}
x_d(0)=0, \qquad y_d^0(0)=1, \qquad y_d^k(0)=0,\quad k=1,2,\dots,q-1. \label{eq:q_ge_1_iv}
\end{align}
As we discuss in the main body, the probability that student $t$ is matched to any student goes to $1-x_d(t)^d$ in probability, and the proportion of time they get matched to a school with $k$ seats taken is proportional to $S^k_i$, we therefore get that the rate at which schools with $k$ seats taken become schools with $k+1$ seats taken is
\begin{align*}
(1-x_d(t)^d)\frac{y_d^k(t)}{\sum_{m=0}^{q-1}y_d^m(t)}&=\frac{1-x_d(t)^d}{1-x_d(t)}y_d^k(t),
\end{align*}
since $x_d(t)+\sum_{k=0}^{q-1}y_d^k(t)=1$. Note further that this is a positive flow into $y_d^{k+1}$ and negative for $y_d^k$. We have each of these flows from $y_d^k$ to $y_d^{k+1}$ for $k=0,\dots,q-1$ and one from $y_d^{q-1}$ to $x_d$. For convenience, define
\begin{align*}
\gamma_d(t)=\frac{1-x_d(t)^d}{1-x_d(t)}.
\end{align*}
Denoting derivative with respect to time with a dot for clarity to avoid multiple superscripts, this gives us the following differential equation
\begin{align}
\dot y_d^0(t)&=-\gamma_d(t)y_d^0(t), \nonumber \\
\dot y_d^k(t)&=\gamma_d(t)(y_d^{k-1}(t)-y_d^k(t)),\qquad k=1,\dots,q-1, \label{eq:q_ge_1_ode} \\
\dot x_d(t)&=\gamma_d(t)y_d^q(t). \nonumber
\end{align}

\paragraph{Connection with the discrete model.} Following the steps in Section~\ref{sec:continuous-to-discrete}, we can apply a slightly more general equivalent of Theorem~\ref{thm:xts} for this multi-dimensional case, again from \cite[Theorem~17.3.1]{kroese2013handbook} to prove the following lemma.

\begin{lemma}\label{lem:xts-q_ge_1}
Fix $d\in\bbN$, $q=1,2,\dots$ and define the initial value problem for $\{x_d,y_d^k\}$ for $k=0,1,\dots,q-1$ via~\eqref{eq:q_ge_1_iv} and~\eqref{eq:q_ge_1_ode}. For $t\geq 0$, let $T^{n,d}_i$ and $S^k_i$ for $k=0,\dots,q-1$ be as in Section~\ref{para:multiple-seats}. Then $n^{-1}T^{n,d}_{\floor{tn}}\to x_d(t)$ and $n^{-1}S^k_{\floor{tn}}\to y_d^k(t)$ for $k=0,1,\dots,q-1$ uniformly in probability as $n\to\infty$, where $\{x_d(t),y_d^k(t)\}$ for $k=0,1,\dots,q-1$ is the unique solution satisfying the initial value problem of \eqref{eq:q_ge_1_iv} and~\eqref{eq:q_ge_1_ode} for $t\geq 0$.
\end{lemma}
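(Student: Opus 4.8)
The plan is to follow the proof of Theorem~\ref{thm:xts} essentially verbatim, upgrading the scalar state to a vector state and invoking the multidimensional version of Theorem~\ref{thm:flln} (again from \cite[Theorem~17.3.1]{kroese2013handbook}). Concretely, I would track the integer vector $\mathbf V^n_i=(S^0_i,\dots,S^{q-1}_i,T^{n,d}_i)$, which lives in the hyperplane of $\bbZ^{q+1}$ whose coordinates sum to $n$ (so only $q$ coordinates are free), and set $\mathbf X^n_t=\mathbf V^n_{\floor{nt}}$. The argument then splits, as before, into (i) writing down the one-step transition law of $\mathbf X^n$, (ii) embedding it in continuous time with a rate-$n$ Poisson clock via Theorem~\ref{thm:embedding}, (iii) checking the FLLN hypotheses to obtain the limiting ODE \eqref{eq:q_ge_1_ode} with data \eqref{eq:q_ge_1_iv}, and (iv) transferring uniform convergence in probability back to the discrete-time chain through Doob's inequality exactly as at the end of the proof of Theorem~\ref{thm:xts}.

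For the transition law, recall that when student $i$ arrives they match with probability $1-(T^{n,d}_i/n)^d\pm O(d^2/n)$ by Lemma~\ref{lem:match-prob-approx} (the match probability depends only on the number $T^{n,d}_i$ of fully occupied schools, exactly as in \eqref{eq:discrete-match-prob}). Conditioned on matching, the matched school is uniformly distributed among the $n-T^{n,d}_i$ non-full schools: a non-full school is ``available'' to the student regardless of how many of its seats are already taken, so the non-full schools are \emph{exchangeable} in the student's uniformly random list. Hence the move that sends one school from $k$ to $k+1$ occupied seats (decrementing $S^k$ and incrementing $S^{k+1}$, or incrementing $T$ when $k=q-1$) has probability $\bigl(1-(T^{n,d}_i/n)^d\bigr)\,S^k_i/(n-T^{n,d}_i)$, while all other transitions keep the state fixed. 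Writing $x=T/n$ and $y^k=S^k/n$ (so $1-x=\sum_k y^k$), this probability is $(1-x^d)\,y^k/(1-x)$.

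Embedding with a Poisson clock of rate $n$ multiplies these probabilities by $n$, so condition \eqref{eq:qs-and-fs} holds with jump vectors $\mathbf e_{k+1}-\mathbf e_k$, and the drift $g^n$ converges uniformly (with error $O(d^2/n)$, independent of the state) to the vector field whose components are the signed quantities $(1-x^d)\,y^k/(1-x)$. The one point that needs care — and the main obstacle — is verifying that the limiting drift is Lipschitz on the closed simplex, since the factor $1/(1-x)$ appears to blow up as $x\to1$. This is resolved by the identity $(1-x^d)/(1-x)=1+x+\cdots+x^{d-1}=\gamma_d(x)$, a polynomial: the drift components are therefore $\gamma_d(x)\,y^k$ with appropriate signs, manifestly smooth and bounded on the compact simplex, and they vanish correctly at the all-full vertex $x=1$, where every $y^k=0$. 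Being a polynomial vector field on a compact convex set, it is Lipschitz, so the FLLN applies and yields the unique solution of \eqref{eq:q_ge_1_ode}--\eqref{eq:q_ge_1_iv}; the scaled initial condition $n^{-1}\mathbf V^n_0\to(1,0,\dots,0,0)$ matches \eqref{eq:q_ge_1_iv} since initially no seats are taken.

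Finally, the passage from the continuous-time embedded chain back to $\{\mathbf X^n_t\}$ is word-for-word the argument concluding the proof of Theorem~\ref{thm:xts}: the discrepancy is controlled by $\abs{\floor{nt}-H^n_t}$, and $n^{-1}H^n_t-t$ is a mean-zero martingale whose maximal deviation over $[0,s]$ vanishes in probability by Doob's inequality (Theorem~\ref{thm:doob}), so no new idea is needed there. I expect essentially all the real work to lie in the Lipschitz verification above; once the polynomial identity $\gamma_d(x)=1+x+\cdots+x^{d-1}$ is in hand, the remaining steps are routine coordinate-wise repetitions of the one-dimensional proof.
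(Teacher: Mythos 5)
Your proposal is correct and takes essentially the same approach as the paper: the paper's own proof is just a one-line remark that the argument of Theorem~\ref{thm:xts} (Poisson-clock embedding, FLLN from \cite[Theorem~17.3.1]{kroese2013handbook}, Doob transfer back to discrete time) "follows identically, adapted to the case of multiple equations," which is precisely what you carried out. Your observation that the apparent singularity in the drift is resolved by the polynomial identity $(1-x^d)/(1-x)=1+x+\cdots+x^{d-1}$, making the limiting vector field Lipschitz on the compact simplex, is a detail the paper leaves implicit but is indeed the right way to discharge the FLLN hypothesis.
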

\begin{proof}
The proof follows identically to the proof of Theorem~\ref{thm:xts} in Section~\ref{sec:proofs}, adapted to the case of multiple equations.
\end{proof}

\paragraph{Dynamics of the multiple seats model for $d=1$.} Observe that for $d=1$, we have $\gamma_d\equiv 1$. As in the earlier continuous model, we can again solve this explicitly. It's easy to see we get $y_1^0(t)=\exp(-t)$, and for $k=1$, we now have $\dot y_1^1=\exp(-t)-y_1^1$, which yields $y_1^1(t)=t\exp(-t)$. Following this pattern, we have
\begin{align}
y_1^k(t)&=\frac{t^k}{k!}e^{-t}, \qquad k=0,\dots,q-1, \nonumber \\
x_1(t)&=1-e^{-t}\sum_{k=0}^{q-1}\frac{t^k}{k!}.\label{eq:q_ge_1_sol_d1}
\end{align}

Readers familiar with probability or phase-type distributions will note that this is the cumulative distribution function of an Erlang distribution with parameters $k=q$ and $\lambda=1$. Figure~\ref{fig:q_ge_1_sample} shows the solution for $d=1$, $q=4$.

\begin{figure}[!h]
    \centering
    \includegraphics[width=0.75\linewidth]{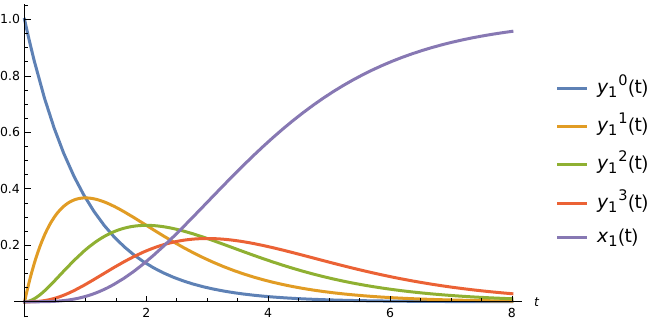}
    \caption{Solution to the multiple-seat continuous market for $d=1$, $q=4$.}
    \label{fig:q_ge_1_sample}
\end{figure}

Recall finally that a student $t$ gets matched with probability that goes to $1-x_d(t)^d$. Denote by $M^{n,d,q}_{i}$ be the indicator random variable of whether a student $i$ gets matched with $n$ schools, $d$ length lists, and $q$ seats per school. We therefore have
\begin{align*}
\mbb{M^{n,d,q}_i=1}\stackrel{\mcP}{\to}1-x_d(t)^d,
\end{align*}
as $n\to\infty$. This follows similarly to Lemma~\ref{lem:prob-to-xprime} via Lemma~\ref{lem:xts-q_ge_1}. Note that the difference to Lemma~\ref{lem:prob-to-xprime} is simply that this time $x_d'(t)$ is not necessarily equal to $1-x_d(t)^d$.

\paragraph{Discussion of dynamics for $q>1$.} The fact that $\gamma_d$ multiplies every equation in~\eqref{eq:q_ge_1_ode} suggests the following idea. Define an initial value problem for $\tau_d$ as
\begin{align}\label{eq:tau_ivp}
\tau_d(0)&=0, \qquad 
\dot\tau_d(t)=\frac{1-x_1(\tau_d(t))^d}{1-x_1(\tau_d(t))}.
\end{align}
It is not hard to show that the following now holds
\begin{align}
y_d^k(t)&=y_1^k(\tau_d(t)), \qquad k=0,1,\dots,q-1 \nonumber \\
x_d(t)&=x_1(\tau_d(t)). \label{eq:q_ge_1_sol_dgen}
\end{align}
This now suggests a procedure for computing solutions to the initial value problem defined by~\eqref{eq:q_ge_1_iv} and~\eqref{eq:q_ge_1_ode}: first compute the solution to~\eqref{eq:q_ge_1_sol_d1}, then solve~\eqref{eq:tau_ivp} to compute $\tau_d(t)$ and plug it into~\eqref{eq:q_ge_1_sol_dgen}.

We further remark that one can interpret $\tau_d(x)$ as describing the ratio of how much faster students are matched to schools when they get $d>1$ tries compared to a single try. $1-x_1^d$ is the probability of getting matched to any school given a list of length $d$ and $1-x_1$ is that probability with lists of length $1$. The interpretation via~\eqref{eq:q_ge_1_sol_dgen} then tells us that the case for general $d$ is exactly the same as for $d=1$ except now the rate at which students are matched to schools is rescaled by a factor of $\tau_d$.

\section{Missing Proofs }\label{appx:lemmas_and_proofs}

\paragraph{From Section~\ref{sec:implications-extensions}.}

\corSerial*

\begin{proof}
By symmetry, we have $$\mbb{R^{n,d}=1}= \frac{n}{m} \mbb{H^{n,d}_1(m)=1},$$ where $m$ is the number of students in the market. From Lemma~\ref{lem:school-love}, $\mbb{H^{n,d}_1(m)=1}$ increases with $d$, so $\mbb{R^{n,d}=1}$ also increases with $d$.
\end{proof}

\begin{theorem}\label{thm:ig}
Suppose $x(d,t)\geq 0$ is defined implicitly via
\begin{align*}
\int_0^x\frac{1}{1-v^d}\,dv=t.
\end{align*}
Then for $t \in [0,1]$, $d\geq 1$, we have
\begin{align}
\int_0^x\frac{1+\log u}{1-u^d}\,du\leq 0.\label{eq:ig-to-prover}
\end{align}
\end{theorem}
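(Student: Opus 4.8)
The plan is to eliminate the awkward factor $1/(1-u^d)$ by changing variables from $u$ to the time parameter $s$, turning the integrand into a function of time alone. Recall that $x(d,s)$ is defined by $s=\int_0^{x(d,s)}\frac{1}{1-v^d}\,dv$; differentiating this relation gives $\frac{\partial}{\partial s}x(d,s)=1-x(d,s)^d$, so that $\frac{du}{1-u^d}=ds$ along the substitution $u=x(d,s)$. Since the endpoints $u=0$ and $u=x=x(d,t)$ correspond to $s=0$ and $s=t$ respectively, this recasts the target quantity as
\begin{align*}
\int_0^{x}\frac{1+\log u}{1-u^d}\,du=\int_0^t\bigl(1+\log x(d,s)\bigr)\,ds.
\end{align*}

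The second step is the pointwise bound $x(d,s)\le s$. This is immediate from the defining relation: for $v\in[0,1)$ we have $1-v^d\le 1$, hence $\frac{1}{1-v^d}\ge 1$, and therefore
\begin{align*}
s=\int_0^{x(d,s)}\frac{1}{1-v^d}\,dv\ge\int_0^{x(d,s)}1\,dv=x(d,s).
\end{align*}
Because $\log$ is increasing, it follows that $1+\log x(d,s)\le 1+\log s$ for every $s\in(0,t]$.

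Combining the two steps and integrating the elementary majorant gives
\begin{align*}
\int_0^{x}\frac{1+\log u}{1-u^d}\,du\le\int_0^t\bigl(1+\log s\bigr)\,ds=\bigl[s\log s\bigr]_0^t=t\log t,
\end{align*}
where the boundary contribution at $0$ vanishes since $s\log s\to 0$ as $s\to 0^+$; this also confirms that the original improper integral converges at $u=0$, as $\log u$ is integrable there. For $t\in[0,1]$ we have $\log t\le 0$, so $t\log t\le 0$, which is exactly the claimed inequality~\eqref{eq:ig-to-prover}.

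The only genuinely delicate points to nail down are the legitimacy of the change of variables — that $s\mapsto x(d,s)$ is a smooth, strictly increasing bijection onto $[0,x]$ with the stated endpoints — and the convergence near $u=0$; both follow from the already-established facts that $x_d$ is smooth and strictly increasing with $x_d(0)=0$. I expect no serious obstacle beyond bookkeeping: the crux is simply spotting the substitution together with the bound $x(d,s)\le s$. It is worth remarking that the estimate is tight, since as $d\to\infty$ one has $x(d,s)\to s$ pointwise on $[0,1)$ and the inequality approaches equality at $t=1$, so there is no slack to exploit for a shortcut.
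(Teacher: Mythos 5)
Your proof is correct, and it takes a genuinely different and considerably more economical route than the paper's. The paper first uses the sign structure of the integrand ($1+\log u$ is negative on $[0,1/e]$, positive beyond) to reduce the claim to the single worst case $t=1$, then invokes the explicit bound $x(d,1)\leq\br{\frac{d+1}{2d+1}}^{1/d}$ (Lemma~\ref{lem:xd-bounds}, proved by geometric-series expansion of $(1-u^d)^{-1}$ and a chain of auxiliary inequalities) and finally a separate, equally technical estimate (Lemma~\ref{lemma:x1a_ineq}) showing the integral up to that explicit bound is non-positive. You instead substitute $u=x(d,s)$, which converts the weighted integral into $\int_0^t\bigl(1+\log x(d,s)\bigr)\,ds$, and then use only the elementary observation $x(d,s)\leq s$ to majorize by $\int_0^t(1+\log s)\,ds=t\log t\leq 0$. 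This bypasses both technical lemmas entirely and even yields the sharper quantitative bound $t\log t$, strictly negative for $t\in(0,1)$; your tightness remark ($x(d,s)\to s$ as $d\to\infty$) confirms no better constant is available. The change of variables and the convergence at $u=0$ are unproblematic for the reasons you give: $s\mapsto x(d,s)$ is smooth, strictly increasing, with $x(d,0)=0$ and derivative $1-x^d>0$. The one thing the paper's heavier machinery buys that your argument does not is Lemma~\ref{lem:xd-bounds} itself, whose two-sided bounds on $x(d,1)$ are reused elsewhere (Theorem~\ref{thm:bound-discrete}, Lemma~\ref{lem:crossin-discrete}, Lemma~\ref{lem:worst-case-rank}); so the paper's detour amortizes across several results, but as a proof of Theorem~\ref{thm:ig} alone, yours is strictly simpler.
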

\begin{proof}
Observe that $x(d,t)$ is monotonically increasing in $t$ and that $x(d,t)\leq t$. The integrand in \eqref{eq:ig-to-prover} is negative on $u\in[0,1/e]$ and vanishes at $u=1/e$ after which it increases monotonically. The results therefore follows immediately when $x(d,t)\in[0,1/e]$, and since the integral increases for $x\in[1/e,1)$, it suffices to show the result for $x(d,1)$.
In Lemma~\ref{lem:xd-bounds} we show that
\begin{align*}
x(d,1)\leq\br{\frac{d+1}{2d+1}}^{1/d},
\end{align*}
for $d\geq 1$ and in Lemma~\ref{lemma:x1a_ineq} we show
\begin{align*}
\int_0^{\br{\frac{d+1}{2d+1}}^{1/d}}\frac{1+\log(u)}{1-u^d}\,du&\leq0,
\end{align*}
for $d \geq 1$ which completes the proof.
\end{proof}

\lemXdBounds*
\begin{proof}
We first show the upper bound. Recall $x(d,1)$ is defined via
\begin{align*}
1=\int_0^{x(d,1)}\frac{du}{1-u^d}.
\end{align*}
Since $(1-u^d)$ is strictly positive, it suffices to show
\begin{align*}
\int_0^{x(d,1)}\frac{du}{1-u^d}\leq \int_0^{\br{\frac{d+1}{2d+1}}^{1/d}}\frac{du}{1-u^d},
\end{align*}
or equivalently
\begin{align*}
1\leq \int_0^{\br{\frac{d+1}{2d+1}}^{1/d}}\frac{du}{1-u^d}.
\end{align*}

$(1-u^d)^{-1}$ is the sum of a geometric series for $u^d\in(0,1)$, so with $z\in(0,1)$ write

\begin{align}
\int_0^z\frac{du}{1-u^d}&=\int_0^z\sum_{r=0}^\infty u^{rd}\,du \nonumber \\
&=\sum_{r=0}^\infty\int_0^zu^{rd}\,du \nonumber \\
&=\sum_{r=0}^\infty\frac{z^{rd+1}}{rd+1} \nonumber \\
&=z\br{1+\sum_{r=1}^\infty\frac{(z^d)^r}{rd+1}}\label{eq:x1a-mid} \\
&\geq z\br{1+\sum_{r=1}^\infty\frac{(z^d)^r}{r(d+1)}} \nonumber \\
&=z\br{1-\frac{1}{d+1}\log(1-z^d)}.\label{eq:x1a-last}
\end{align}

The validity of exchanging the order of summation and integration follows by the Tonelli theorem since the integrand is non-negative. Now substitute $z=\br{\frac{d+1}{2d+1}}^{1/d}$ and $q=1/d$. We have $q\in(0,1]$ and $z=\br{\frac{q+1}{q+2}}^q$, and we need to show that \eqref{eq:x1a-last} is lower bounded by $1$. After some rearranging, this reduces to showing
\begin{align*}
\log(2+q)&\geq\br{\frac{q+1}{q}}\br{\br{1+\frac{1}{q+1}}^q-1},
\end{align*}
for $q \in (0,1]$, which we show in Lemma~\ref{lemma:first_ineq}.

For the lower bound, we apply identical reasoning until~\eqref{eq:x1a-mid}, but must show that for $z=\br{\frac{2d+1}{4d+1}}^{1/d}$, we have
\begin{align*}
1\geq z\br{1+\sum_{r=1}^\infty\frac{(z^d)^r}{rd+1}}.
\end{align*}
We again substitute $q=1/d$ and this time get $z=\br{\frac{q+2}{q+4}}^q$, and this yields
\begin{align*}
z\br{1+\sum_{r=1}^\infty\frac{(z^d)^r}{rd+1}}&=z\br{1+\frac{z^d}{d+1}+\sum_{r=2}^\infty\frac{(z^d)^r}{rd+1}} \\
&\leq z\br{1+\frac{z^d}{d+1}+\sum_{r=2}^\infty\frac{(z^d)^r}{rd}} \\
&=z\br{1+\frac{z^d}{d+1}-\frac{z^d}{d}-\frac{1}{d}\log(1-z^d)} \\
&=z\br{1-\frac{z^d}{d(d+1)}-\frac{1}{d}\log(1-z^d)} \\
&=\br{\frac{q+2}{q+4}}^q\br{1-\frac{q^2(q+2)}{(q+1)(q+4)}-q\log\br{\frac{2}{q+4}}}.
\end{align*}
To complete the proof, we therefore need
\begin{align*}
\br{\frac{q+2}{q+4}}^q\br{1-\frac{q^2(q+2)}{(q+1)(q+4)}-q\log\br{\frac{2}{q+4}}}&\leq 1
\end{align*}
for $q \in (0,1]$, which we show in Lemma~\ref{lemma:technical_2}.
\end{proof}

\begin{lemma}\label{lemma:technical_2}
$1-\frac{q^2(q+2)}{(q+1)(q+4)}-q\log\br{\frac{2}{q+4}}\leq \br{\frac{q+4}{q+2}}^q$ for $q\in[0,1]$.
\end{lemma}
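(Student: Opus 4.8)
The plan is to turn this into an elementary (polynomial/rational) inequality by replacing the transcendental power and the logarithm with truncated series. Write the right-hand side as $\br{\frac{q+4}{q+2}}^q=e^{qa(q)}$, where $a(q)=\log\frac{q+4}{q+2}\geq 0$ on $[0,1]$. Since the exponent $qa(q)$ is nonnegative, every partial sum of the exponential series lower-bounds $e^{qa(q)}$, so $e^{qa(q)}\geq 1+qa(q)+\frac12 q^2a(q)^2+\frac16 q^3a(q)^3$. Substituting this lower bound removes the power entirely, leaving an inequality in only $a(q)$, $\log\frac{q+2}{2}$, and rational functions of $q$.

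The key simplification is the identity
\begin{align*}
qa(q)-q\log\frac{q+4}{2}=-q\log\frac{q+2}{2},
\end{align*}
which holds because the $\log(q+4)$ contributions cancel. Using it, and recalling that the left-hand side of the lemma equals $1-\frac{q^2(q+2)}{(q+1)(q+4)}+q\log\frac{q+4}{2}$, the claim reduces to
\begin{align*}
\frac12 q^2a(q)^2+\frac16 q^3a(q)^3+\frac{q^2(q+2)}{(q+1)(q+4)}\geq q\log\frac{q+2}{2}.
\end{align*}
Dividing by $q$ (the case $q=0$ is equality, both sides being $1$), it then remains to bound two logarithms: a lower bound for $a(q)=\log\br{1+\frac{2}{q+2}}$ and an upper bound for $\log\frac{q+2}{2}=\log\br{1+\frac q2}$. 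Truncating the alternating log series at an order whose omitted tail has the favorable sign converts each into a rational function of $q$; clearing denominators yields a polynomial inequality on $[0,1]$ that can be checked directly, e.g.\ by exhibiting nonnegative coefficients after expansion or via a short Sturm/sign analysis.

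The main obstacle is that the inequality is tight. The two sides agree at $q=0$, and a second-order expansion shows the gap there is only $\frac{\log^2 2}{2}\,q^2+O(q^3)$, so the quadratic exponential term is indispensable and any first-order bound fails near $0$. At the other end the gap at $q=1$ is a comfortable $\approx 0.05$, yet a naive lower bound such as $a(q)\geq \frac{2}{q+2}-\frac{2}{(q+2)^2}$ already loses too much once squared and breaks the inequality there. The delicate step is therefore to pin down the truncation orders of the two log expansions (and, if needed, a fourth exponential term) tightly enough that the resulting elementary inequality holds uniformly across all of $[0,1]$, rather than merely asymptotically at the endpoints. Once such bounds are fixed, the finish is a routine verification of a single polynomial inequality on a compact interval.
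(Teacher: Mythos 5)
Your reduction is correct, and it takes a genuinely different route from the paper. Writing the right-hand side as $e^{qa(q)}$ with $a(q)=\log\frac{q+4}{q+2}\geq 0$, lower-bounding the exponential by its cubic Taylor polynomial, and using the cancellation $q\log\frac{q+4}{2}-qa(q)=q\log\frac{q+2}{2}$ validly reduces the lemma to
\begin{align*}
q\log\frac{q+2}{2}\;\leq\;\frac{q^2(q+2)}{(q+1)(q+4)}+\frac{q^2a(q)^2}{2}+\frac{q^3a(q)^3}{6},\qquad q\in[0,1],
\end{align*}
and this reduced inequality is in fact true (your two diagnostic computations are also accurate: the gap in the original inequality is $\frac{\log^2 2}{2}q^2+O(q^3)$ near $0$ and about $0.05$ at $q=1$). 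The paper proceeds differently: it keeps the power function intact, factors $\br{\frac{q+4}{q+2}}^q=\br{\frac{q+4}{q+2}}\br{\frac{q+4}{q+2}}^{q-1}$, lower-bounds $\br{\frac{q+4}{q+2}}^{q-1}$ by the \emph{linear} function $\frac12+\frac q8(1+4\log 2)$ (Claim~\ref{claim:ineqs2}) and upper-bounds $\log\br{2+\frac q2}$ by the \emph{quadratic} $\log 2+\frac14\br{1-\frac q{10}}q$ (Claim~\ref{claim:two-ineqs}), so that everything collapses to a single explicit quartic whose positivity on $[0,1]$ is settled by a root count. Your route buys a cheaper first step (the exponential-series bound requires no proof at all, whereas Claim~\ref{claim:ineqs2} needs its own derivative argument), but it pays for this by having to control $a(q)$ itself, a quantity the paper never needs to bound.

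That said, as written your proposal has a genuine gap: it stops exactly at what you yourself call ``the delicate step.'' You never fix the truncation orders and never exhibit, let alone verify, the final polynomial inequality -- and for a lemma whose entire content is such a verification, that step \emph{is} the proof. The danger is concrete: pairing the two-term bound $a(q)\geq\frac{2}{q+2}-\frac{2}{(q+2)^2}$ with the three-term upper bound $\log\br{1+\frac q2}\leq \frac q2-\frac{q^2}{8}+\frac{q^3}{24}$ fails at $q=1$ (it would require $0.4167\leq 0.4134$), exactly as you suspected. Numerical checks suggest the repair is within reach -- the four-term lower truncation of $a(q)=\log\br{1+\frac{2}{q+2}}$ together with the same three-term upper bound appears to leave a positive margin on all of $(0,1]$, with equality only at $q=0$ -- but note that cubing a four-term truncation produces denominators of order $(q+2)^{12}$, so the cleared polynomial has degree roughly $17$, against the paper's quartic. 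Until a specific pair of truncations is fixed and that polynomial is actually shown nonnegative on $[0,1]$ (by Sturm sequences or a coefficient-sign argument), the lemma is not proved; the paper's bounds were evidently engineered precisely to keep this last step small.
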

\begin{proof}

\begin{align*}
&1-\frac{q^2(q+2)}{(q+1)(q+4)}-q\log\br{\frac{2}{q+4}}\leq \br{\frac{q+4}{q+2}}^q \\
& \iff \br{\frac{q+4}{q+2}}^q+\frac{q^2(q+2)}{(q+1)(q+4)}-q\log\br{2+\frac{q}{2}}-1\geq 0 \\
&\iff \br{\frac{q+4}{q+2}}\br{\frac{q+4}{q+2}}^{q-1}+\frac{q^2(q+2)}{(q+1)(q+4)}-q\log\br{2+\frac{q}{2}}-1\geq 0.
\end{align*}

Now apply Claim~\ref{claim:ineqs2} and Claim~\ref{claim:two-ineqs} to get
\begin{align*}
&\br{\frac{q+4}{q+2}}\br{\frac{q+4}{q+2}}^{q-1}+\frac{q^2(q+2)}{(q+1)(q+4)}-q\log\br{2+\frac{q}{2}}-1 \\
&\qquad\geq\br{\frac{q+4}{q+2}}\br{\frac{1}{2}+\frac{q}{8}(1+4\log(2))}+\frac{q^2(q+2)}{(q+1)(q+4)}-q\br{\frac{1}{4} \left(1-\frac{q}{10}\right) q+\log (2)}-1 \\
&\qquad=\frac{q^2 \left(q^4-3 q^3-11 q^2-20 q^2 \log (2)+53 q-100 q \log (2)+100-80 \log (2)\right)}{40 (q+1) (q+2) (q+4)}
\end{align*}

This is positive if $q^4-3 q^3-11 q^2-20 q^2 \log (2)+53 q-100 q \log (2)+100-80 \log (2)\geq 0$, but this is a polynomial with two real roots, both at $q>1$, so since this equation is positive for $q=0$, it must be non-negative for all $q\in[0,1]$.
\end{proof}

\begin{claim}\label{claim:ineqs2}
$\left(\frac{q+4}{q+2}\right)^{q-1}\geq\frac{1}{2}+\frac{q}{8}(1+4\log(2))$ for $q\in[0,1]$.
\end{claim}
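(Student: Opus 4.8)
Set $f(q)=\left(\frac{q+4}{q+2}\right)^{q-1}$ and $g(q)=\frac12+\frac q8(1+4\log 2)$, so the claim is $f\ge g$ on $[0,1]$. I would write $h(q)=\log f(q)=(q-1)\log\frac{q+4}{q+2}$, so that $f=e^{h}>0$, $f'=f\,h'$, and $f''=f\,(h'^2+h'')$. The key initial observation is that $g$ is exactly the tangent line of $f$ at $q=0$: one checks $f(0)=\tfrac12=g(0)$ and $f'(0)=f(0)h'(0)=\tfrac12\!\left(\tfrac14+\log2\right)=\frac{1+4\log2}{8}=g'(0)$. Since $g$ is affine with constant slope $g'=\frac{1+4\log2}{8}$, it suffices to prove $f'(q)\ge g'$ for all $q\in[0,1]$ and then integrate, using $f(0)=g(0)$.

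The difficulty is that $f$ is \emph{not} convex on all of $[0,1]$ (one finds $f''<0$ near $q=1$), so I cannot simply invoke ``$f\ge$ tangent line.'' Instead the plan is to show that $f'$ attains its minimum over the compact interval $[0,1]$ at an endpoint. Since $f>0$, the sign of $f''$ equals the sign of $\phi:=h'^2+h''$; hence an interior critical point $q_*$ of $f'$ (a zero of $\phi$) can be a local minimum of $f'$ only if $\phi$ passes from $-$ to $+$ there, i.e.\ only if $\phi'(q_*)\ge0$. I would rule this out by proving $\phi'<0$ at \emph{every} zero of $\phi$ in $[0,1]$; then every interior critical point of $f'$ is a strict local maximum, so $f'$ has no interior local minimum and $\min_{[0,1]}f'=\min\{f'(0),f'(1)\}$.

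The crux — and the step where the transcendental term disappears — is the following. A direct computation gives the purely rational expressions
\[
h''(q)=-\frac{4(4q+11)}{(q^2+6q+8)^2},\qquad h'''(q)=\frac{16(3q^2+17q+25)}{(q^2+6q+8)^3},
\]
with $h''<0$ and $h'''>0$ on $[0,1]$, and $h'>0$ on $[0,1]$ (as $\log\frac{q+4}{q+2}>0$ and $-\tfrac{2(q-1)}{(q+2)(q+4)}\ge0$). At a zero of $\phi$ we have $h'^2=-h''$, and since $h'>0$ this means $h'=\sqrt{-h''}$, eliminating the logarithm:
\[
\phi'=2h'h''+h'''=-2(-h'')^{3/2}+h'''.
\]
So $\phi'<0$ there is equivalent to $h'''<2(-h'')^{3/2}$; squaring (both sides positive) and cancelling the common factor $256$ and the denominator $(q^2+6q+8)^6$ collapses everything to the elementary polynomial inequality
\[
(3q^2+17q+25)^2<(4q+11)^3,\qquad q\in[0,1],
\]
which holds with room to spare: $(4q+11)^3-(3q^2+17q+25)^2=706+602q+89q^2-38q^3-9q^4\ge 706-38-9>0$ on $[0,1]$.

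Finally I would evaluate the two endpoints. We already have $f'(0)=\frac{1+4\log2}{8}=g'$, and $f'(1)=f(1)h'(1)=\log\frac53$, so it remains to verify $\log\frac53\ge\frac{1+4\log2}{8}$, i.e.\ $8\log\frac53\ge1+4\log2$, i.e.\ $\frac{5^8}{2^4\,3^8}=\frac{390625}{104976}\ge e$; this ratio is $\approx3.72>e$, so the inequality holds. Hence $\min_{[0,1]}f'=f'(0)=g'$, giving $f'\ge g'$ on $[0,1]$, and integrating from $0$ yields $f(q)-\tfrac12\ge g(q)-\tfrac12$, i.e.\ $f\ge g$. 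The main obstacle is precisely the non-convexity of $f$ near $q=1$, which forces the unimodality-of-$f'$ argument rather than a one-line tangent-line bound; the decisive simplification is that evaluating $\phi'$ at the zeros of $\phi$ removes the logarithmic term and reduces the whole matter to the polynomial inequality above, after which only three trivial scalar checks involving $\log2$ and $\log\frac53$ remain.
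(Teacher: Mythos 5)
Your proof is correct, and it takes a genuinely different route from the paper's. Both arguments share the same skeleton: equality holds at $q=0$, so it suffices to show $f'\geq g'$ on $[0,1]$ and integrate. The paper obtains this derivative bound pointwise by invoking its Claim~\ref{claim:two-ineqs}, namely $\br{\frac{q+4}{q+2}}^q\geq\frac{(q+4)^2}{16}$, to replace the transcendental factor in $f'$ by a rational one; this reduces the problem to $(q+2)(q+4)\log\br{\frac{q+4}{q+2}}\geq 2(q+4\log 2)$, which is then settled by one more monotonicity-plus-endpoint argument. You instead prove a structural fact about where $f'$ can attain its minimum: writing $f''=f\,\phi$ with $\phi=h'^2+h''$, you observe that at any zero of $\phi$ the identity $h'=\sqrt{-h''}$ eliminates the logarithm from $\phi'=2h'h''+h'''$, reducing its sign to the polynomial inequality $(3q^2+17q+25)^2<(4q+11)^3$; hence every interior critical point of $f'$ is a strict local maximum, and $\min_{[0,1]}f'=\min\set{f'(0),f'(1)}=f'(0)=g'$. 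Your computations check out: $h''=-4(4q+11)/\br{(q+2)(q+4)}^2$ and $h'''=16(3q^2+17q+25)/\br{(q+2)(q+4)}^3$ are as stated, the difference $(4q+11)^3-(3q^2+17q+25)^2=706+602q+89q^2-38q^3-9q^4$ is indeed positive on $[0,1]$, and the endpoint check $\log(5/3)>\frac{1+4\log 2}{8}$ holds since $5^8/(2^4\cdot 3^8)\approx 3.72>e$. As for what each approach buys: the paper's is shorter in context because Claim~\ref{claim:two-ineqs} is needed anyway (it is reused in Lemma~\ref{lemma:technical_2}) and it never goes beyond first derivatives; yours is self-contained, explains why the constant is what it is (the right-hand side $g$ is exactly the tangent line of $f$ at $q=0$, so the claim is tight there in both value and slope), and makes explicit the obstruction both proofs must circumvent --- $f''(1)<0$, so $f$ is not convex on $[0,1]$ and the one-line tangent-line bound is unavailable.
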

\begin{proof}
Observe this holds at equality for $q=0$ and apply Claim~\ref{claim:two-ineqs} to the the derivative of the difference to get:
\begin{align*}
\frac{\partial}{\partial q}\sbr{
\br{\frac{q+4}{q+2}}^{q-1}-\frac{1}{2}-\frac{q}{8}(1+4\log(2))}&\geq\frac{1}{16}\br{(q+2)(q+4)\log\br{\frac{q+4}{q+2}}-2(q+4\log(2))}.
\end{align*}
This is now true if and only if
\begin{align}\label{eq:name}
\log\br{\frac{q+4}{q+2}}-\frac{2(q+4\log(2))}{(q+2)(q+4)}\geq 0.
\end{align}
One easily checks
\begin{align*}
\frac{\partial}{\partial q}\sbr{\log\br{\frac{q+4}{q+2}}-\frac{2(q+4\log(2))}{(q+2)(q+4)}}=\frac{4 (-3 q+4 q \log (2)-8+12 \log (2))}{(q+2)^2 (q+4)^2}\geq 0,
\end{align*}
where the inequality holds since both the numerator and the denominator are nonnegative. Since one readily verifies that~\eqref{eq:name} is valid for $q=0$, the claim follows.
\end{proof}

\begin{claim}\label{claim:two-ineqs}
$\br{\frac{q+4}{q+2}}^q\geq\frac{(q+4)^2}{16}$ and $\log\left(2+\frac{q}{2}\right)\leq\log(2)+\frac{1}{4}\left(1-\frac{q}{10}\right)q$ for $q\in[0,1]$.
\end{claim}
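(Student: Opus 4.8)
The plan is to treat the two inequalities separately; each holds with equality at $q=0$, so in both cases I reduce to a single-variable monotonicity or concavity argument on $[0,1]$ that exploits this vanishing endpoint.

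For the second inequality, rewrite $\log\br{2+\frac q2}=\log(q+4)-\log 2$, so the claim becomes $g(q):=2\log 2+\frac q4-\frac{q^2}{40}-\log(q+4)\geq 0$ on $[0,1]$. Since $g(0)=0$, it suffices to show $g'\geq 0$. Differentiating and clearing denominators yields the clean factorization $g'(q)=\frac14-\frac{q}{20}-\frac{1}{q+4}=\frac{q(1-q)}{20(q+4)}$, which is non-negative on $[0,1]$. Hence $g$ is non-decreasing with $g(0)=0$, giving $g\geq 0$. The bound is essentially tight near $q=1$, where $g'(1)=0$ and $g(1)=\tfrac14-\tfrac1{40}+\log\tfrac45\approx 0.002$; the factorization is exactly what makes the argument go through with no slack to spare.

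For the first inequality, take logarithms and set $f(q):=q\log\frac{q+4}{q+2}-2\log(q+4)+4\log 2$, so the claim is $f\geq 0$ on $[0,1]$. I record the endpoint values $f(0)=0$ and $f(1)=\log\frac{16}{15}>0$. The key step is to establish concavity of $f$ on $[0,1]$: after simplifying $f'(q)=\log\frac{q+4}{q+2}-\frac{4(q+1)}{(q+2)(q+4)}$ and differentiating once more over the common denominator $\br{(q+2)(q+4)}^2$, the numerator of $f''(q)$ reduces to $2(q^2-2q-12)$, whose roots $1\pm\sqrt{13}$ lie outside $[0,1]$, so $f''<0$ throughout the interval. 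A concave function lies above the chord joining its endpoint values; since both endpoint values $f(0)=0$ and $f(1)=\log\frac{16}{15}$ are non-negative, that chord is non-negative on $[0,1]$, and therefore $f\geq 0$, which exponentiates back to the desired inequality.

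The only delicate part is the algebra: carefully reducing the numerator of $f''$ to $2(q^2-2q-12)$ and confirming it keeps a fixed sign on $[0,1]$, together with checking $f(1)=\log\frac{16}{15}>0$. No step is conceptually hard once one notices that equality at $q=0$ lets concavity alone finish the first inequality while the derivative factorization $g'(q)=\frac{q(1-q)}{20(q+4)}$ finishes the second.
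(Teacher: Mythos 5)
Your proof is correct, and both computations check out: $g'(q)=\frac{q(1-q)}{20(q+4)}$ and the numerator of $f''(q)$ equal to $2(q^2-2q-12)$, with roots $1\pm\sqrt{13}$ outside $[0,1]$, are both right. For the second inequality your argument is essentially identical to the paper's: the paper differentiates the difference in the opposite orientation and obtains $\frac{(q-1)q}{20(q+4)}\leq 0$, which is precisely the negative of your factorization, combined with equality at $q=0$. For the first inequality, however, you take a genuinely different route. The paper also takes logarithms, but then divides by $q$ and separates the two sides by the constant $\frac12$: the left side $\log\br{\frac{q+4}{q+2}}$ is at least $\log(5/3)>\frac12$ on $[0,1]$, while the right side $\frac{2}{q}\log\br{1+\frac{q}{4}}$ never exceeds $\frac12$ because $\log(1+x)\leq x$. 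That argument is shorter and needs no second derivative, but it leans on the numerically tight coincidence $\log(5/3)\approx 0.511>0.5$ and is slightly awkward at $q=0$ (where one must note both sides vanish before dividing by $q$). Your concavity-plus-chord argument is more computational — you must carry out the $f''$ reduction carefully — but it is more systematic: it needs no separating constant, handles $q=0$ with no special care, and the sign of a quadratic with roots $1\pm\sqrt{13}$ leaves far more slack than the paper's margin of $0.011$. Either proof is a valid replacement for the paper's Claim.
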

\begin{proof}

For the first:
\begin{align*}
\br{\frac{q+4}{q+2}}^q&\geq\frac{(q+4)^2}{16} \\
\iff \br{\frac{q+4}{q+2}}^q&\geq\br{1+\frac{q}{4}}^2 \\
\iff \log\br{\frac{q+4}{q+2}}&\geq \frac{2}{q}\log\br{1+\frac{q}{4}}.
\end{align*}
Note that the left hand side always exceeds $\log(5/3)>1/2$, and the right hand side never exceeds $1/2$ because $\log(1+q/4)\leq q/4$.

For the second, take the derivative of the difference of the left term and the right term:
\begin{align*}
\frac{\partial}{\partial q}\br{\log\left(2+\frac{q}{2}\right)-\frac{1}{4}\left(1-\frac{q}{10}\right)q-\log (2)}&=\frac{1}{q+4}+\frac{q-5}{20} \\
&=\frac{(q-1)q}{20(q+4)},
\end{align*}
which is now clearly non-positive.
\end{proof}

\begin{lemma}\label{lemma:x1a_ineq}
For $d\geq 1$, we have
\begin{align*}
\int_0^{\br{\frac{d+1}{2d+1}}^{1/d}}\frac{1+\log(u)}{1-u^d}\,du&\leq0.
\end{align*}
\end{lemma}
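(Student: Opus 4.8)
The plan is to prove the inequality by expanding the integrand in a geometric series and reducing to a one-variable inequality, mirroring the treatment of Lemma~\ref{lem:xd-bounds}. Write $z=\left(\frac{d+1}{2d+1}\right)^{1/d}$, so $w:=z^d=\frac{d+1}{2d+1}\in(0,1)$ and $\log z=\frac{1}{d}\log w$. Since $1-u^d>0$ on $(0,1)$, I would expand $\frac{1}{1-u^d}=\sum_{r\geq0}u^{rd}$ and integrate term by term (the interchange being justified by Tonelli exactly as in Lemma~\ref{lem:xd-bounds}, after splitting the integrand at its sign change $u=1/e$ so the pieces are of one sign). Using $\int_0^z u^{rd}\,du=\frac{z^{rd+1}}{rd+1}$ and $\int_0^z u^{rd}\log u\,du=\frac{z^{rd+1}\log z}{rd+1}-\frac{z^{rd+1}}{(rd+1)^2}$ and factoring out $z^{rd+1}=z\,w^{r}$ gives the closed form
\begin{equation*}
\int_0^{z}\frac{1+\log u}{1-u^d}\,du=z\sum_{r=0}^\infty w^{r}\left[\frac{1+\log z}{rd+1}-\frac{1}{(rd+1)^2}\right].
\end{equation*}
As a sanity check on the structure, note the clean observation $1+\log u=\frac{d}{du}(u\log u)$, so integrating by parts recovers precisely the identity $\frac{z\log z}{1-z^d}-d\int_0^z\frac{u^d\log u}{(1-u^d)^2}\,du$ that appeared in the proof of Theorem~\ref{thm:bound-cts}; this confirms the reduction is on the right track.

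Since $z>0$, the claim is equivalent to showing the bracketed series $S:=\sum_{r\geq0}w^{r}\left[\frac{1+\log z}{rd+1}-\frac{1}{(rd+1)^2}\right]$ is non-positive. The $r=0$ term equals $(1+\log z)-1=\log z=\frac{1}{d}\log w<0$ and supplies the dominant negative contribution. For the positive part from $r\geq1$, I would use $\frac{1}{rd+1}\leq\frac{1}{rd}$ and $\sum_{r\geq1}\frac{w^{r}}{r}=-\log(1-w)$ to collapse the tail into an elementary expression involving $\log(1-w)=\log\frac{d}{2d+1}$, while lower-bounding the subtracted $\frac{1}{(rd+1)^2}$ terms by comparison with $\frac{1}{r^2(d+1)^2}$. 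After this one substitutes $q=1/d\in(0,1]$ (so $w=\frac{q+1}{q+2}$, $1-w=\frac{q}{q+2}$) to obtain a single-variable inequality on $[0,1]$, which I would discharge by checking the endpoint and the sign of a derivative / a polynomial factor, in the style of Claim~\ref{claim:two-ineqs} and Lemma~\ref{lemma:technical_2}.

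The main obstacle is sharpness of the tail estimate. The crudest version (retaining only $r=0$ and bounding all of $r\geq1$ by $\frac{1+\log z}{rd}$) is already too lossy: it fails at $d=1$, where one must retain the $r=1$ term exactly before the bound turns negative. Worse, as $d\to\infty$ one has $w\to\tfrac12$, $\log z\to0$, and the leading $O(1/d)$ contributions of the retained terms cancel exactly, so the sign of $S$ is decided only at the next order in $1/d$. Consequently the delicate part of the argument is to retain enough low-order terms ($r=0$ and $r=1$, say) in exact form and to choose a tail bound that is tight to the correct order in $1/d$, so that the resulting closed-form inequality in $q$ captures the true sign near $q=0$ rather than degenerating to equality. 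Once the right balance between exactness and summability is found, the remaining step is the routine verification of an elementary inequality on $[0,1]$.
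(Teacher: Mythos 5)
Your proposal takes essentially the same route as the paper's proof: after substituting $q=1/d$ and $A=\frac{q+1}{q+2}$, the paper expands $(1-u^{1/q})^{-1}$ as a geometric series and integrates term by term, arriving at exactly your series $z\sum_{r\geq 0} w^r\bigl[\frac{1+\log z}{rd+1}-\frac{1}{(rd+1)^2}\bigr]$, and then — just as you anticipate — retains the $r=0,1$ terms exactly, bounds the $r\geq 2$ tail (Lemma~\ref{lemma:bound_polygamma}), and closes with a one-variable inequality on $q\in[0,1]$, so your diagnosis of the delicate second-order cancellation as $d\to\infty$ is precisely the issue the paper confronts. The one caveat is that the step you describe as "routine verification of an elementary inequality" is in fact the bulk of the paper's argument, requiring further technical lemmas (Lemmas~\ref{lemma:bound_log_A} and~\ref{lemma:bound_1_qAB}, in the style of Lemma~\ref{lemma:technical_2}) and a lengthy derivative computation — the plan is sound, but the remaining work is substantial rather than routine.
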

\begin{proof}
Change variables with $q=1/d$ (so $q\in[0,1]$) and define $A=\frac{d+1}{2d+1}=\frac{q+1}{q+2}$, then write
\begin{align*}
\int_0^{A^q}\frac{1+\log(u)}{1-u^{1/q}}\,du&=\int_0^{A^q}\sum_{n=0}^\infty u^{n/q}(1+\log(u))\,du \\
&=\sum_{n=0}^\infty\int_0^{A^q}u^{n/q}(1+\log(u))\,du \\
&=\sum_{n=0}^\infty\br{\int_0^{A^q}u^{n/q}\,du+\int_0^{A^q}u^{n/q}\log(u)\,du} \\
&=\sum_{n=0}^\infty\br{\frac{qA^{n+q}}{n+q}+\frac{qA^{n+q}\log(A^q)}{n+q}-\frac{q^2A^{n+q}}{(n+q)^2}} \\
&=A^q\br{(1+\log(A^q))\sum_{n=0}^\infty\frac{qA^n}{n+q}-\sum_{n=0}^\infty\frac{q^2A^n}{(n+q)^2}} \\
&=A^q\br{(1+\log(A^q))\br{1+\frac{qA}{q+1}+qA^2\sum_{n=2}^\infty\frac{A^{n-2}}{n+q}}-1-\frac{q^2A}{(q+1)^2}-\sum_{n=2}^\infty\frac{q^2A^n}{(n+q)^2}} \\
&\leq A^q\br{(1+\log(A^q))\br{1+\frac{qA}{q+1}+2qA^2(\log(4)-1)}-1-\frac{q^2A}{(q+1)^2}} \\
&\leq A^q\br{(1+q\log(A))2A(1+qA(\log(4)-1))-1-\frac{q^2}{(q+2)^2}} \\
&\leq A^q\br{\br{1-\frac{2q\log^2(2)}{q+\log(4)}}(1+q\log(2))-1-\frac{q^2}{(q+2)^2}}.
\end{align*}
In the first inequality we removed non-positive terms and applied Lemma~\ref{lemma:bound_polygamma}. In the second we applied the facts that $1+\frac{qA}{q+1}=2A$ and $\frac{q^2}{(q+2)^2}\leq\frac{q^2A}{(q+1)^2}$. The last inequality follows from Lemma~\ref{lemma:bound_log_A} and Lemma~\ref{lemma:bound_1_qAB}.

This quantity is now non-positive if
\begin{align*}
\br{1-\frac{2q\log^2(2)}{q+\log(4)}}(1+q\log(2))-1-\frac{q^2}{(q+2)^2}&\leq 0.
\end{align*}
This holds at equality for $q=0$, so it suffices to show that the derivative is non-positive. Writing $x=\log(2)$, abbreviating by $\cdots$ the left-hand-side of the previous inequality, we have
\begin{align*}
\frac{\partial}{\partial q}\br{\cdots}&=\frac{\partial}{\partial q}\br{\br{1-\frac{2q\log^2(2)}{q+\log(4)}}(1+q\log(2))-1-\frac{q^2}{(q+2)^2}} \\
&=\frac{\partial}{\partial q}\br{q\log(2)-\frac{2q\log^2(2)}{q+\log(4)}(1+q\log(2))-\frac{q^2}{(q+2)^2}} \\
&=-\frac{\partial}{\partial q}\br{2x^2\frac{q}{q+2x}+2x^3\frac{q^2}{q+2x}}+x-\frac{2q(q+2)^2-2q^2(q+2)}{(q+2)^4} \\
&=-2x^2\br{\frac{q+2x-q}{(q+2x)^2}}-2x^3\br{\frac{2q(q+2x)-q^2}{(q+2x)^2}}+x-\frac{4q}{(q+2)^3} \\
&=\frac{(q+2x)^2x-4x^3-2x^3q(q+4x)}{(q+2x)^2}-\frac{4q}{(q+2)^3} \\
&=\frac{xq^2+4qx^2+4x^3-4x^3-2x^3q^2-8x^4q}{(q+2x)^2}-\frac{4q}{(q+2)^3} \\
&=\frac{xq(q+4x-2x^2(q+4x))}{(q+2x)^2}-\frac{4q}{(q+2)^3} \\
&=\frac{qx(1-2x^2)(q+4x)}{(q+2x)^2}-\frac{4q}{(q+2)^3} \\
&=q\br{\frac{\log(2)\br{1-2\log^2(2)}(q+\log(16))}{(q+\log(4))^2}-\frac{4}{(q+2)^3}} \\
&\leq q\br{\frac{\log(2)\br{1-2\log^2(2)}(1+\log(16))}{(0+\log(4))^2}-\frac{4}{(1+2)^3}} \\
&\leq -0.08q \\
&\leq 0,
\end{align*}
\end{proof}

\begin{lemma}\label{lemma:bound_log_A}
For $q\in[0,1]$ and $A=\frac{q+1}{q+2}$,
\begin{align*}
-\frac{\log(2)}{1+q\log(2)}\leq\log(A)\leq -\frac{2\log^2(2)}{q+2\log(2)}.
\end{align*}
\end{lemma}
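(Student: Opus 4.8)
The plan is to reduce each inequality to a one-variable sign analysis on $[0,1]$. Write $A=A(q)=\frac{q+1}{q+2}$ and observe that both bounds hold with equality at $q=0$: there $A=1/2$, so $\log A=-\log 2$, while the left bound equals $-\log 2$ and the right bound equals $-\frac{2\log^2 2}{2\log 2}=-\log 2$. It therefore suffices to control the two difference functions
\[
L(q)=\log A(q)+\frac{\log 2}{1+q\log 2},\qquad U(q)=-\frac{2\log^2 2}{q+2\log 2}-\log A(q),
\]
each of which vanishes at $q=0$, and to show $L(q)\ge 0$ and $U(q)\ge 0$ on $[0,1]$. The elementary fact I would use throughout is $\frac{d}{dq}\log A(q)=\frac{1}{q+1}-\frac{1}{q+2}=\frac{1}{(q+1)(q+2)}$.

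For the upper bound I would simply show $U$ is nondecreasing. Differentiating gives
\[
U'(q)=\frac{2\log^2 2}{(q+2\log 2)^2}-\frac{1}{(q+1)(q+2)},
\]
so $U'(q)\ge 0$ is equivalent, after clearing the positive denominators, to $2\log^2 2\,(q+1)(q+2)\ge (q+2\log 2)^2$. Expanding and simplifying, this reduces to $q\bigl[(2\log^2 2-1)q+(6\log^2 2-4\log 2)\bigr]\ge 0$. Since $q\ge 0$, it remains to check that the bracketed linear factor is nonnegative on $[0,1]$; its slope $2\log^2 2-1$ is negative, so it is decreasing and attains its minimum at $q=1$, where it equals $8\log^2 2-4\log 2-1>0$. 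Hence $U'\ge 0$ on $[0,1]$ and $U(q)\ge U(0)=0$, giving the upper bound.

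The lower bound is the main obstacle because $L$ is \emph{not} monotone on $[0,1]$. Differentiating,
\[
L'(q)=\frac{1}{(q+1)(q+2)}-\frac{\log^2 2}{(1+q\log 2)^2},
\]
and $L'(q)\ge 0$ is equivalent to $(1+q\log 2)^2\ge \log^2 2\,(q+1)(q+2)$. The pleasant surprise I would exploit is that when both sides are expanded the $q^2$ terms cancel, leaving the \emph{linear} condition $(1-2\log^2 2)+q(2\log 2-3\log^2 2)\ge 0$. The constant term $1-2\log^2 2>0$ while the slope $2\log 2-3\log^2 2<0$, so $L'$ changes sign exactly once, at some $q^\ast\approx 0.71\in(0,1)$; thus $L$ increases on $[0,q^\ast]$ and decreases on $[q^\ast,1]$. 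Consequently the minimum of $L$ over $[0,1]$ is attained at an endpoint, so $L(q)\ge\min\{L(0),L(1)\}$. Since $L(0)=0$ and $L(1)=\log\tfrac{2}{3}+\frac{\log 2}{1+\log 2}>0$ (a concrete numerical check gives $\approx 0.0039$), we conclude $L(q)\ge 0$ on $[0,1]$, completing the lower bound. The only delicate point is recognizing that monotonicity fails here and replacing it with this increasing-then-decreasing structure together with the explicit endpoint evaluation $L(1)>0$.
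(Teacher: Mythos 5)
Your proof is correct. For the upper bound you follow essentially the same route as the paper: differentiate the difference, clear the positive denominators, and observe that $2\log^2(2)\,(q+1)(q+2)-(q+2\log 2)^2 = q\left[(2\log^2 2-1)q + 6\log^2 2 - 4\log 2\right]$ is nonnegative on $[0,1]$; you actually spell out this factorization and the check of the linear factor at $q=1$, which the paper leaves as ``not hard to verify.'' For the lower bound your route is genuinely different. The paper first multiplies through by $1+q\log 2$, reducing the claim to $(1+q\log 2)\log\frac{q+1}{q+2}+\log 2\geq 0$, proves via a second-derivative computation that the left-hand side is concave in $q$, and then checks both endpoints. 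You instead keep the difference $L(q)=\log A(q)+\frac{\log 2}{1+q\log 2}$ untransformed and exploit the fact that the sign condition for $L'$, namely $(1+q\log 2)^2\geq \log^2 2\,(q+1)(q+2)$, is \emph{linear} in $q$ because the quadratic terms cancel; hence $L$ increases up to $q^\ast\approx 0.71$ and decreases afterwards, so its minimum on $[0,1]$ sits at an endpoint. This avoids second derivatives and the multiplication trick entirely, and is arguably the more economical argument. Note that both routes funnel into the same delicate endpoint fact at $q=1$: your $L(1)=\log(2/3)+\frac{\log 2}{1+\log 2}>0$ is precisely the paper's endpoint check $(1+\log 2)\log(2/3)+\log 2\geq 0$ after multiplying by $1+\log 2>0$, and the margin is only about $0.004$. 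Since everything hinges on this tight constant, in a final write-up you should certify it with explicit rational bounds on $\log 2$ and $\log 3$ (e.g., $\log 2<0.6932$, $\log 3>1.0986$) rather than describing it as ``a concrete numerical check''; the same caveat applies to the paper's own phrasing.
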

\begin{proof}
For the first inequality from the thesis, we need

\begin{align*}
\log(A)+\frac{\log(2)}{1+q\log(2)}&\geq 0 \\
\iff \br{1+q\log(2)}\log\br{\frac{q+1}{q+2}}+\log(2)&\geq 0.
\end{align*}

One can readily show that the second derivative is negative, so the function is concave and exceeds a line between the endpoints, where the inequality also holds. This means the inequality holds for all $q\in[0,1]$. More in detail, we have
\begin{align*}
\frac{\partial}{\partial q}\sbr{\br{1+q\log(2)}\log\br{\frac{q+1}{q+2}}}&=\log(2)\log\br{\frac{q+1}{q+2}}+\frac{1+q\log(2)}{(q+1)(q+2)}, \\
\frac{\partial^2}{\partial^2 q}\sbr{\br{1+q\log(2)}\log\br{\frac{q+1}{q+2}}}&=\frac{\partial}{\partial q}\br{\log(2)\log\br{\frac{q+1}{q+2}}+\frac{1+q\log(2)}{(q+1)(q+2)}} \\
&=\frac{\log{2}}{(q+1)(q+2)}+\frac{\log(2)(q+1)(q+2)-(1+q\log(2))(2q+3)}{(q+1)^2(q+2)^2} \\
&=\frac{2\log(2)(q+1)(q+2)-(1+q\log(2))(2q+3)}{(q+1)^2(q+2)^2} \\
&=\frac{\log(2)(2q^2+6q+4)-(1+q\log(2))(2q+3)}{(q+1)^2(q+2)^2} \\
&=\frac{\log(2)(3q+4)-(2q+3)}{(q+1)^2(q+2)^2} \\
&=\frac{q(3\log(2)-2)+4\log(2)-3}{(q+1)^2(q+2)^2}.
\end{align*}

The numerator is linear in $q$ and negative for $q=0$ and $q=1$, while the numerator is strictly positive for every $q$. So, the second derivative is negative for all $q$.

The second inequality from the thesis holds at equality for $q=0$. Further, one can show that the derivative of the difference is negative for $q\in[0,1)$:

\begin{align*}
\frac{\partial}{\partial q}\br{\log(A)+\frac{2\log^2(2)}{q+\log(4)}}&=\frac{1}{(q+1)(q+2)}-\frac{2 \log^2(2)}{(q+\log(4))^2}.
\end{align*}

It is not hard to verify that for $q\in[0,1]$

\begin{align*}
(q+1)(q+2)\geq\frac{(q+\log(4))^2}{2\log^2(2)},
\end{align*}

which completes the proof.

\end{proof}

\begin{lemma}\label{lemma:bound_1_qAB}
For $q\in[0,1]$ and $A=\frac{q+1}{q+2}$, we have

\begin{align*}
2A(1+qA(\log(4)-1))\leq 1+q\log(2).
\end{align*}
\end{lemma}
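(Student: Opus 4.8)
The plan is to clear all denominators and reduce the claimed inequality to an elementary affine inequality in $q$. I would first abbreviate $c=\log(2)$, so that $\log(4)-1=2c-1$, and substitute $A=\frac{q+1}{q+2}$ throughout. The key preliminary observation is the identity $2A=1+\frac{q}{q+2}$. Expanding the left-hand side of the target as $2A+2qA^2(2c-1)$ and subtracting the common constant $1$ from both sides of $2A(1+qA(2c-1))\leq 1+qc$, the inequality becomes
\begin{align*}
\frac{q}{q+2}+2q(2c-1)A^2\leq qc.
\end{align*}
At $q=0$ both sides vanish (and indeed the original inequality holds with equality there, since $2A=1$), so I may assume $q\in(0,1]$ and divide through by $q$, leaving $\frac{1}{q+2}+2(2c-1)A^2\leq c$.

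Next I would multiply both sides by $(q+2)^2$ and use $A^2=\frac{(q+1)^2}{(q+2)^2}$ to arrive at the polynomial inequality
\begin{align*}
(q+2)+2(2c-1)(q+1)^2\leq c(q+2)^2.
\end{align*}
Collecting the coefficients of $q^2$, $q$, and the constant in the difference $c(q+2)^2-(q+2)-2(2c-1)(q+1)^2$, one finds that the constant term cancels exactly, and the whole expression factors as $q\sbr{(2-3c)q+(3-4c)}$. Hence it remains only to prove $(2-3c)q+(3-4c)\geq 0$ for $q\in[0,1]$.

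Finally, since $c=\log(2)$ yields $2-3c<0$, this affine function is strictly decreasing in $q$, so its minimum over $[0,1]$ is attained at $q=1$, where it equals $5-7\log(2)$. The argument is then closed by the two elementary numerical facts $3-4\log(2)>0$ (the value at $q=0$) and $5-7\log(2)>0$ (the value at $q=1$), both of which follow immediately from $\log(2)<0.7$.

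I do not anticipate any genuine obstacle here: the entire argument is a routine clearing of denominators followed by a sign check of a decreasing affine function at its two endpoints. The only point demanding slight care is the bookkeeping that confirms the constant term cancels and the remaining quadratic factors through $q$, after which the conclusion rests solely on the numerical positivity of $3-4\log(2)$ and $5-7\log(2)$.
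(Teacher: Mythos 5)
Your proof is correct and is essentially the same as the paper's: both clear denominators and reduce the inequality to the identical affine condition in $q$ (your $(2-3\log 2)q+(3-4\log 2)\geq 0$ is exactly the paper's $q(\log 8-2)-3+\log 16\leq 0$ negated), settled by a monotonicity-plus-endpoint check at $q=1$. The only cosmetic difference is that you factor out $q$ early and handle $q=0$ separately, while the paper keeps the $q^2$ factor in place.
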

\begin{proof}
\begin{align*}
2A(1+qA(\log(4)-1))&\leq1+q\log(2) \\
\iff \frac{2(1+q)(1+qA(\log(4)-1))}{2+q}-1-q\log(2)&\leq 0 \\
\iff \frac{q^2 (q (\log (8)-2)-3+\log (16))}{(q+2)^2}&\leq 0 \\
\iff q(\log(8)-2)-3+\log(16)&\leq 0.
\end{align*}

This is increasing in $q$ and one can verify this inequality holds for $q=1$.
\end{proof}

\begin{lemma}\label{lemma:bound_polygamma}
For $A=\frac{q+1}{q+2}$, $q\in[0,1]$, we have
\begin{align*}
\sum_{n=2}^\infty\frac{A^{n-2}}{n+q}&\leq2(\log(4)-1).
\end{align*}
\end{lemma}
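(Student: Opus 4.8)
The plan is to reindex the sum, extract its exact value at $q=0$, and then bound the remainder by an explicit elementary function of $q$ that can be controlled by a one-variable monotonicity argument in the style of Lemma~\ref{lemma:x1a_ineq}. Writing $m=n-2$ and using $q\geq 0$ to drop $q$ from the denominators of all but the first two terms, I would start from
\[
\sum_{n=2}^\infty\frac{A^{n-2}}{n+q}=\sum_{m=0}^\infty\frac{A^m}{m+2+q}\leq\frac{1}{2+q}+\frac{A}{3+q}+\sum_{m=2}^\infty\frac{A^m}{m+2}.
\]
The tail is summable in closed form: since $-\log(1-A)=\sum_{k\geq 1}A^k/k$ for $A\in(0,1)$, one has $\sum_{m=0}^\infty A^m/(m+2)=A^{-2}(-\log(1-A)-A)$, and subtracting the $m=0,1$ terms gives $\sum_{m=2}^\infty A^m/(m+2)=A^{-2}(-\log(1-A)-A)-\tfrac12-\tfrac{A}{3}$.

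Substituting $A=\frac{q+1}{q+2}$ (so that $-\log(1-A)=\log(q+2)$ and $A^{-2}=\frac{(q+2)^2}{(q+1)^2}$) turns the right-hand side into an explicit function
\[
\Psi(q)=\frac{1}{q+2}+\frac{q+1}{(q+2)(q+3)}+\frac{(q+2)^2}{(q+1)^2}\Bigl(\log(q+2)-\frac{q+1}{q+2}\Bigr)-\frac12-\frac{q+1}{3(q+2)}
\]
of $q$ alone, and it remains to prove $\Psi(q)\leq 2(\log 4-1)$ on $[0,1]$. The key structural fact is that the target is attained with equality at $q=0$: there $A=1/2$, no $q$ is discarded from any denominator, and a direct evaluation gives $\Psi(0)=4\log 2-2=2(\log 4-1)$. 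Hence it suffices to show $\Psi$ is non-increasing on $[0,1]$.

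I would therefore compute $\Psi'(q)$ and argue $\Psi'(q)\leq 0$, and the main obstacle is exactly here: $\Psi$ mixes $\log(q+2)$ with several rational functions of $q$, so $\Psi'$ is a combination of a logarithmic term and rational terms whose sign is not manifest. I would clear the common denominator, isolate the coefficient of $\log(q+2)$, and replace $\log(q+2)$ by a rational bound of the correct sign (in the spirit of Lemma~\ref{lemma:bound_log_A}), reducing $\Psi'(q)\leq 0$ to a polynomial inequality on $[0,1]$; that polynomial can then be dispatched by checking its value at an endpoint together with the sign of its derivative, i.e.\ by confirming its real roots lie outside $[0,1]$, exactly as in Lemma~\ref{lemma:technical_2} and Claim~\ref{claim:two-ineqs}. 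The one point to verify at the outset is that retaining \emph{two} exact terms (rather than one) is necessary and sufficient: this is precisely what makes $\Psi(0)$ meet the target with equality while keeping the discarded tail small enough that $\Psi$ stays below $2(\log 4-1)$ throughout $[0,1]$.
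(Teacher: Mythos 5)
Your reindexing, the closed form $\sum_{m\ge 2}A^m/(m+2)=A^{-2}(-\log(1-A)-A)-\tfrac12-\tfrac{A}{3}$, and the evaluation $\Psi(0)=4\log 2-2=2(\log 4-1)$ are all correct, and your observation that one exact term would not suffice is also right (keeping only the $m=0$ term exact gives a bound of about $0.805$ at $q=1$, above the target $\approx 0.773$). The genuine gap is the step your entire proof rests on: $\Psi$ is \emph{not} non-increasing on $[0,1]$, so no amount of work on $\Psi'$ can close the argument. Numerically, $\Psi(0)\approx 0.7726$, $\Psi(0.75)\approx 0.7478$, $\Psi(0.9)\approx 0.7485$, $\Psi(1)\approx 0.7497$: the function decreases to a minimum near $q\approx 0.75$ and then increases, and a direct computation gives $\Psi'(1)\approx +0.014>0$. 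The structural reason is that in the true sum $\sum_{m\ge 0}A^m/(m+2+q)$ the increase of $A=(q+1)/(q+2)$ in $q$ is offset by the increase of every denominator $m+2+q$; by dropping the $q$ from the denominators for $m\ge 2$ you keep the increasing factor $A^m$ in the tail but discard the offsetting decay, and near $q=1$ the increasing effect wins. This offset is exactly what the paper's proof exploits: it keeps the denominators intact, asserts that the original sum itself is decreasing in $q$, and only then evaluates at $q=0$ via $\sum_{n\ge 1}2^{-n}/n=\log 2$ --- the same evaluation you perform, but applied to a function that actually is monotone.

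Your decomposition is not beyond repair, because the weaker statement $\Psi(q)\le\Psi(0)$ does hold on $[0,1]$ (the maximum of $\Psi$ over $[0,1]$ is attained at $q=0$); but proving it requires controlling a non-monotone function --- for instance replacing $\log(q+2)$ by a rational upper bound in the spirit of Lemma~\ref{lemma:bound_log_A} and establishing the resulting polynomial inequality on all of $[0,1]$, rather than checking one endpoint together with a sign condition on the derivative, since the derivative genuinely changes sign inside the interval. As written, your reduction asks you to prove a false statement ($\Psi'\le 0$ on $[0,1]$), so the proposal fails at its key step.
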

\begin{proof}
One can verify that the sum is decreasing in $q$, so the left hand side is maximized at $q=0$. Then, using the fact that $\sum_{n=1}^\infty\frac{2^{-n}}{n}=\log(2)$ (this is the polylogarithm of order $1$ evaluated at $1/2$), we have
\begin{align*}
\sum_{n=2}^\infty\frac{A^{n-2}}{n+q}&\leq\sum_{n=2}^\infty\frac{2^{2-n}}{n}=-2+4\sum_{n=1}^\infty\frac{2^{-n}}{n}=2(\log(4)-1).
\end{align*}
\end{proof}

\begin{lemma}\label{lemma:first_ineq}
For $q\in[0,1]$,
\begin{align*}
1+\br{\frac{q}{q+1}}\log\br{q+2}\geq\br{\frac{q+2}{q+1}}^q.
\end{align*}
\end{lemma}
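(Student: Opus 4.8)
The plan is to set
$$f(q)=1+\frac{q}{q+1}\log(q+2)-\br{\frac{q+2}{q+1}}^q$$
and show $f\geq 0$ on $[0,1]$. A naive monotonicity argument is doomed here: one checks $f(0)=0$ but, after differentiating, also $f'(0)=0$, so the two sides are tangent to first order at $q=0$ and "$f(0)=0$ together with $f'\geq 0$'' cannot be closed by evaluating $f'$ at the endpoint. The source of this tangency is that $\br{\frac{q+2}{q+1}}^q=e^{P}$ with $P:=q\log\frac{q+2}{q+1}$, and near $q=0$ the linear part $1+P$ already matches $1+\frac{q}{q+1}\log(q+2)$ to first order.

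The key idea is to absorb this tangency by replacing the exponential with a sharp rational upper bound rather than a crude linear one. First I would prove the elementary inequality $e^{x}\leq\frac{1+x}{1-x^2/2}$ for $0\leq x<\sqrt2$; this follows from a short remainder argument, since $R(x)=1+x-e^x+\tfrac{x^2}{2}e^x$ satisfies $R(0)=R'(0)=0$ and $R''(x)=e^x\br{2x+\tfrac{x^2}{2}}\geq 0$. Applying it with $x=P\leq\log 2<\sqrt2$ and writing $Q:=\frac{q}{q+1}\log(q+2)$, the target $1+Q\geq e^{P}$ reduces, after clearing the positive factor $1-P^2/2$, to the inequality $Q-P\geq\tfrac{P^2}{2}(1+Q)$.

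The advantage of this reformulation is that the tangency disappears. Using $\log\frac{q+2}{q+1}=\log(q+2)-\log(q+1)$ one computes $Q-P=q\br{\log(q+1)-\frac{q}{q+1}\log(q+2)}$, whose leading behavior near $0$ is $(1-\log 2)\,q^2\approx 0.307\,q^2$, strictly larger than that of $\tfrac{P^2}{2}(1+Q)\approx\tfrac{\log^2 2}{2}\,q^2\approx 0.240\,q^2$. Thus both sides now vanish to second order at $q=0$ but with a strict gap in the leading coefficient, so the inequality is no longer delicate at the origin. From here I would bound the remaining logarithms by rational functions and reduce $Q-P-\tfrac{P^2}{2}(1+Q)\geq 0$ to a polynomial inequality in $q$ on $[0,1]$, settled by endpoint evaluation plus the sign of a low-degree polynomial, exactly in the style of Lemma~\ref{lemma:technical_2} and Claim~\ref{claim:two-ineqs}.

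The main obstacle is this last step. Because $Q-P\geq\frac{P^2}{2}(1+Q)$ holds only with a modest margin, the rational estimates for $\log(q+1)$ (from below) and $\log(q+2)$ (from above) must be tight to second order at $q=0$: loose bounds such as $\log(1+y)\leq y$ inflate $P^2$ and destroy the $0.307$-versus-$0.240$ ordering, failing near the origin. I would therefore use two-sided Padé-type estimates anchored at $q=0$ (for instance $\log(1+y)\geq\frac{2y}{2+y}$ paired with a matching upper estimate) so that the resulting polynomial retains the strict leading-coefficient gap, and finish by verifying that this polynomial has no root in $[0,1]$.
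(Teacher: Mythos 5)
Your core reduction is correct, and it takes a route genuinely different from the paper's: the paper takes logarithms of both sides and compares $\log(1+Q)$ against $P$, applying $\log(1+x)\geq\frac{2x}{2+x}$ to the \emph{outer} logarithm together with the upper bound $\log\frac{q+2}{q+1}\leq\frac{\log 2}{1+q\log 2}$ from Lemma~\ref{lemma:bound_log_A}, and finishes with a quadratic in $q$ that has negative discriminant. You instead stay on the exponential side: the inequality $e^{x}\leq\frac{1+x}{1-x^2/2}$ for $0\leq x<\sqrt{2}$ is true and your remainder argument for it is correct, the reduction to $Q-P\geq\frac{P^2}{2}(1+Q)$ is algebraically right, and your diagnosis of the first-order tangency at $q=0$ (with the $0.307$ versus $0.240$ leading coefficients) is accurate. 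The reduced inequality is indeed true on all of $[0,1]$, so the skeleton is sound.

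The gap is in the final step, which you leave unexecuted, and the instantiation you suggest for it would fail. Your stated criterion for the rational log bounds --- tightness to second order at $q=0$ --- is necessary but not sufficient, because the margin in $Q-P-\frac{P^2}{2}(1+Q)$ stays thin across the \emph{whole} interval, not just near the origin. Concretely, at $q=1$ one has $Q-P=\log 2-\frac{1}{2}\log 3\approx 0.1438$ and $\frac{P^2}{2}(1+Q)\approx 0.1274$, a margin of only about $0.016$; but your suggested bound $\log(1+q)\geq\frac{2q}{2+q}$ replaces the term $\log 2\approx 0.6931$ in $Q-P=q\log(q+1)-\frac{q^2}{q+1}\log(q+2)$ by $\frac{2}{3}\approx 0.6667$, a loss of about $0.026$, which exceeds the entire margin. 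So after this substitution the polynomial inequality you would need to verify is simply false for $q$ near $1$ (the break-even point is around $q\approx 0.8$), no matter how sharp the remaining upper bounds are. The repair is to demand \emph{global} accuracy on $[0,1]$ rather than only local tightness at $0$: for instance the $[2/2]$ Pad\'e lower bound $\log(1+q)\geq\frac{3q(2+q)}{6+6q+q^2}$, whose error at $q=1$ is below $10^{-3}$, leaves the margin intact and should let your plan close; but as written, the proof does not go through.
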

\begin{proof}
We will show this inequality after taking a logarithm on both sides.

We begin with two bounds for the logarithm, for $x\in[0,1]$, we have
\begin{align}
\log(1+x)&\geq\frac{2x}{2+x}, \label{eq:equnoo} \\
\log(2+x)&\geq\log(2)+\frac{x}{2}-\frac{x^2}{4(1+x)}. \label{eq:eqdue}
\end{align}
Both are easy to verify as they hold for $x=0$ and the difference of the left and the right side has positive derivative on $x\in[0,1]$.
In fact, the first holds for all $x\geq 0$ (note $2x/(2+x)=x-x^2/(2+x)$):

\begin{align*}
\frac{\partial}{\partial x}\br{\log(1+x)-x+\frac{x^2}{2+x}}&=\frac{1}{1+x}-1+\frac{2x(2+x)-x^2}{(2+x)^2} \\
&=\frac{1}{1+x}-1+\frac{(4+x)x}{(2+x)^2} \\
&=\frac{-x(2+x)^2+(4+x)x(1+x)}{(1+x)(2+x)^2} \\
&=\frac{x^2}{(1+x)(2+x)^2}.
\end{align*}

The second for $x\in[0,\sqrt{2}]$:

\begin{align*}
\frac{\partial}{\partial x}\br{\log(2+x)-\log(2)-\frac{x}{2}+\frac{x^2}{4(1+x)}}&=\frac{1}{2+x}-\frac{1}{2}+\frac{8x(1+x)-4x^2}{16(1+x)^2} \\
&=\frac{x(2-x^2)}{4(1+x)^2(2+x)}.
\end{align*}

Applying~\eqref{eq:equnoo} and~\eqref{eq:eqdue} to the logarithm of the left hand side, we have

\begin{align*}
\log\br{1+\br{\frac{q}{q+1}}\log\br{q+2}}&\geq\log\br{1+\br{\frac{q}{q+1}}\br{\log(2)+\frac{q}{2}-\frac{q^2}{4(q+1)}}} \\
&=\log\br{1+\frac{4q(q+1)\log(2)+q^2(q+2)}{4(q+1)^2}} \\
&\geq2\times\frac{4q(q+1)\log(2)+q^2(q+2)}{4(q+1)^2}\times\frac{4(q+1)^2}{4q(q+1)\log(2)+q^2(q+2)+8(q+1)^2} \\
&=\frac{8q(q+1)\log(2)+2q^2(q+2)}{4q(q+1)\log(2)+q^2(q+2)+8(q+1)^2}.
\end{align*}

Next apply Lemma~\ref{lemma:bound_log_A}, to get

\begin{align*}
\frac{q\log(2)}{1+q\log(2)}&\geq q\log\br{\frac{q+2}{q+1}}.
\end{align*}

The proof now hinges on showing that

\begin{align*}
\frac{8q(q+1)\log(2)+2q^2(q+2)}{4q(q+1)\log(2)+q^2(q+2)+8(q+1)^2}&\geq\frac{q\log(2)}{1+q\log(2)}.
\end{align*}
Via simple algebraic manipulation, one can show that this holds if and only if
\begin{align*}
q^2\log(2)-2(1-\log(2))(2\log(2)-1)q+4(\log(2)-1)^2&\geq 0.
\end{align*}
This is now a quadratic in $q$, and in fact it has negative discriminant, so it has no real roots. Further, it is positive at $q=0$, so the quadratic is positive for all $q\in[0,1]$.
\end{proof}

\section{Plots for Numerical Experiments}\label{appx:numerical-experiments}

\begin{figure}
    \centering
    \includegraphics[width=1\linewidth]{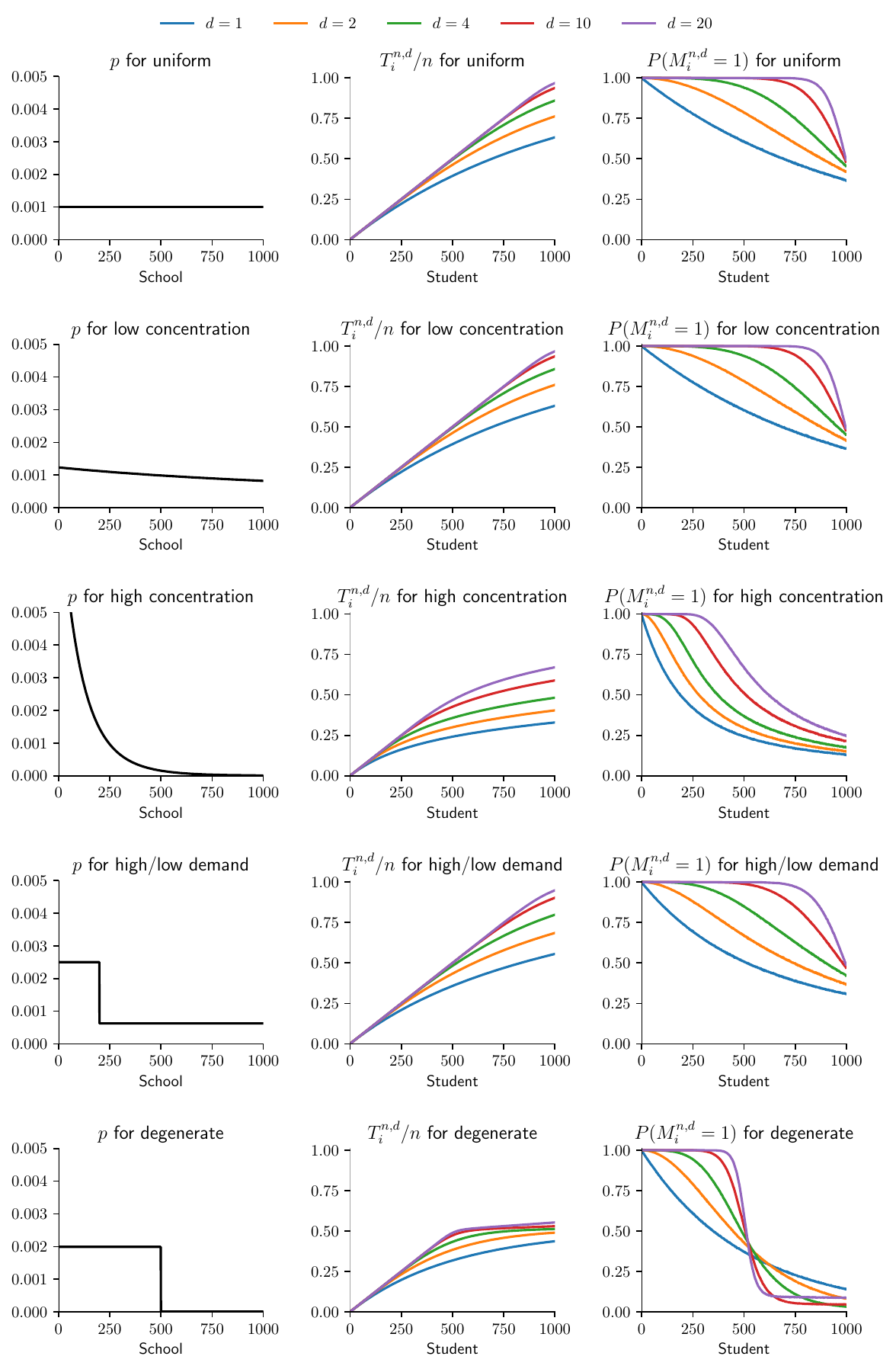}
    \caption{}\label{fig:nonuniform-numerics}
\end{figure}

\begin{figure}
    \centering
    \includegraphics[width=1\linewidth]{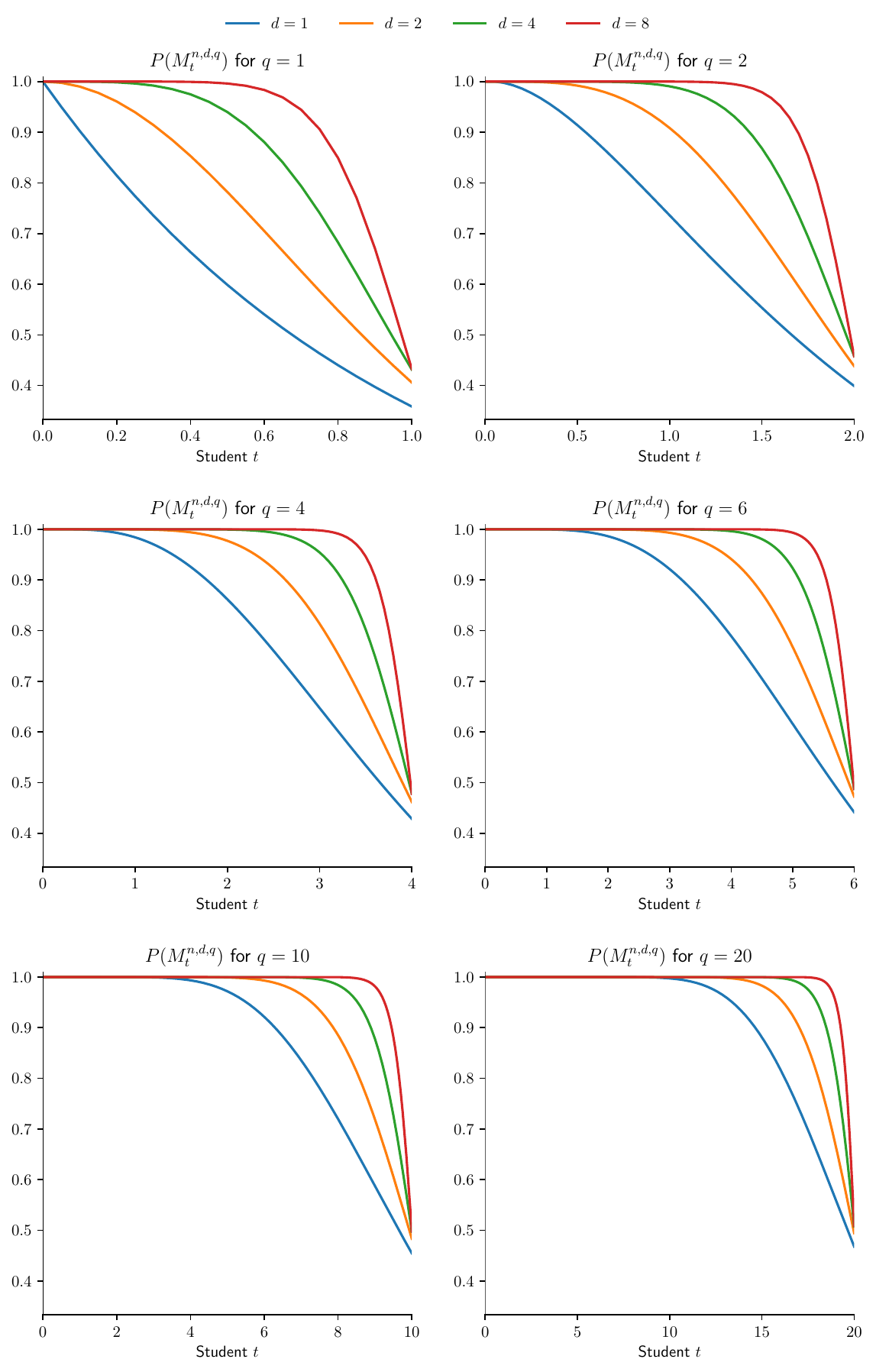}
    \label{fig:enter-label}
\end{figure}

\newpage\end{document}